\documentclass[12pt]{article}

\usepackage[T1]{fontenc}
\usepackage{lmodern}
\usepackage{microtype}
\usepackage[margin=1.05in]{geometry}
\usepackage{amsmath,amsthm,amssymb,amsfonts,mathtools}
\usepackage{pdflscape}
\usepackage{booktabs}
\usepackage[inline]{enumitem}
\usepackage{xcolor}
\usepackage{caption}
\usepackage{hyperref}
\usepackage[nameinlink,capitalize,noabbrev]{cleveref}
\usepackage{natbib}

\hypersetup{
  colorlinks=true,
  linkcolor=blue!50!black,
  citecolor=blue!50!black,
  urlcolor=blue!50!black,
}

\renewcommand{\arraystretch}{1.15}
\setlist[itemize]{leftmargin=2em}
\setlist[enumerate]{leftmargin=2.2em}

\theoremstyle{plain}

\newtheorem{theorem}{Theorem}
\newtheorem{proposition}{Proposition}
\newtheorem{lemma}{Lemma}
\newtheorem{corollary}{Corollary}

\theoremstyle{definition}

\newtheorem{definition}{Definition}
\newtheorem{remark}{Remark}
\newtheorem{example}{Example}

\crefname{theorem}{Theorem}{Theorems}
\crefname{proposition}{Proposition}{Propositions}
\crefname{lemma}{Lemma}{Lemmas}
\crefname{corollary}{Corollary}{Corollaries}
\crefname{definition}{Definition}{Definitions}
\crefname{remark}{Remark}{Remarks}
\crefname{example}{Example}{Examples}

\Crefname{theorem}{Theorem}{Theorems}
\Crefname{proposition}{Proposition}{Propositions}
\Crefname{lemma}{Lemma}{Lemmas}
\Crefname{corollary}{Corollary}{Corollaries}
\Crefname{definition}{Definition}{Definitions}
\Crefname{remark}{Remark}{Remarks}
\Crefname{example}{Example}{Examples}

\numberwithin{equation}{section}

\DeclareMathOperator{\supp}{supp}
\DeclareMathOperator{\BR}{BR}
\DeclareMathOperator{\NE}{NE}
\DeclareMathOperator*{\argmax}{arg\,max}
\DeclareMathOperator*{\argmin}{arg\,min}

\newcommand{\R}{\mathbb{R}}

\begin{document}
\title{How damaging is zero-sum thinking to an agent's interests when the world is positive-sum?\footnote{The authors conducted this study without external funding and have no financial or non-financial conflicts of interest to disclose.}}
	\author{Shaun Hargreaves Heap\footnote{Department of Political Economy, King's College London, London, UK. E-mail: s.hargreavesheap@kcl.ac.uk.} \and Mehmet Mars Seven\footnote{Department of Political Economy, King's College London, London, UK. E-mail: mehmet.mars.seven@kcl.ac.uk.}}

	%\date{}
	\date{\today}
	\maketitle
    \begin{abstract}
We study whether zero-sum decision rules, maximin and minimax, harm agents' interests in positive-sum games relative to Nash equilibrium behaviour or, more generally, than best response behaviour. Contrary to an influential evolutionary view, we give illustrations where maximin serves an agent's interests better than Nash equilibrium behaviour. Two new sets of results show that these illustrations are not idiosyncratic. First, for any selected Nash equilibrium in cardinal games, we construct a strategically equivalent game where a maximin profile yields the same pay-offs and we fully characterise when maximin can be made to Pareto dominate that Nash equilibrium and the entire Nash equilibrium set. Second, we show that the relevant Maximin Pareto dominance classes are not knife-edge: they are generically interior, and strict maximin profile dominance occurs on a non-empty open set if and only if a player has at least two actions and the other has at least three. \textit{JEL}: C72, D81, D91
\end{abstract}
    
\section{Introduction}

There is evidence that zero-sum thinking is a distinct and prevalent cognitive style in rich countries (see Chinoy et al., \citeyear{chinoy2026}). It also appears to be more common among the young than the old, suggesting that such thinking may be increasing over time. Such a trend would also be consistent with the fact that zero-sum thinking is associated with support for Populist parties and their support has also been growing in these countries (see Burn-Murdoch, \citeyear{burnmurdoch2025}). In this paper, we are concerned with the possibility that zero-sum thinking may affect how individuals act on their preferences. In particular, suppose individuals believing the world to be zero-sum \underline{always} adopt a zero-sum decision rule (e.g.\ maximin). The question we address is: would the wholesale use of such a zero-sum rule harm their interests when they interact in positive-sum settings?

To the best of our knowledge, we are the first to consider this question systematically. For instance, in one recent discussion of zero-sum thinking, \citet{chinoy2026}, focus, using survey evidence, on the association between zero-sum thinking and particular policy preferences. They do not consider its possible effect on how people decide to act on their preferences: i.e. their decision rule.\footnote{For example, \citet{chinoy2026} find that `a more zero-sum mindset is strongly associated with more support for government redistribution, race- and gender-based affirmative action, and more restrictive immigration policies' (p.~1052).} Likewise, \citet{ali2025} focus on zero-sum thinking in elections but they do not consider how zero-sum thinking might affect decision rules. Instead, they make zero-sum thinking affect how people process information. In a theoretical model of elections, they show that asymmetric information and distributional uncertainty can sustain an equilibrium in which rational voters reject an ex ante beneficial policy because support for it by others is interpreted as bad news for themselves.\footnote{See also \citet{ali2021} who provide experimental evidence that subjects are more successful at reasoning about the behaviour of a better-informed player when interests are in conflict than when interests are aligned.}

Thus, our distinctive contribution to the current discussion of zero-sum thinking comes from considering how it might affect people's decision rule. This is potentially important because it would not be surprising if zero-sum thinking was to have an effect on decision rules. We know, for example, that maximin is a rational decision rule in zero-sum games, as shown by \citet{neumann1928} and \citet{neumann1944}. Hence, if individuals believe the world is zero-sum, then they might plausibly always use the maximin rule (i.e.\ in positive- as well as zero-sum games) rather than the Nash equilibrium strategies that would more normally follow from an expected utility maximisation decision rule in positive-sum games. Since the world presents both zero- and positive-sum settings, it therefore becomes interesting to know whether, in such cases, the wholesale use of a zero-sum rule like maximin damages a person's interests relative to the benchmark that comes from the selection of Nash equilibrium strategies.

We consider a variety of possible decision rules that a zero-sum thinker might use in positive-sum games: maximin; minimax; and relative-maximin. We focus for reasons that we explain in Section \ref{sec:zero-sum-rules} on maximin and minimax decision rules. We show that neither of these rules (and nor, incidentally, does relative-maximin) always cause damage to a person using them when compared with what would have happened had they played Nash equilibrium strategies. Indeed, there are some positive-sum settings where such zero-sum rules are positively advantageous. Thus, there appears to be no general pay-off inferiority reason for supposing that such zero-sum thinking will disappear within a population.

While our contribution to the discussion of zero-sum thinking comes from being, to the best of our knowledge, the first to explore the consequences of adopting maximin or minimax wholesale, there are, however, several strands in the wider literature that our analysis also addresses. We now mention these both to bring out our more general contribution and to help explain the structure of the paper.

First, there is a background presumption or received view, we suggest, that such zero-sum thinking damages an agent's interest. Adam Smith (\citeyear{smith1776}) is perhaps the most obvious original source for this view. He famously attacked Mercantilist zero-sum thinking because such thinking led to action (e.g.\ policies like tariffs) that actually harmed a country (see Book IV, Chapter II and III). Likewise, \citet{alchian1950} and \citet{friedman1953} supply an influential evolutionary argument in support of this view. To use Friedman's famous example, if a corporation does not adopt decision rules that are the equivalent of (`as if') profit maximising, then they will not survive in a competitive marketplace. Hence there would seem to be good evolutionary reasons for supposing that rational behaviour for an agent is maximising with respect to the satisfaction of whatever are their interests/objectives since any deviation can only potentially damage those interests and so undermine their survival prospects.

Against this background/received view, we show in Section \ref{sec:illustration} that there are symmetric 2-person games where the maximin decision rule does strictly better when playing against a fellow maximin player than two Nash equilibrium players do against each other (in natural notation (a) below); and does as well, when playing against the Nash strategy, as does the Nash player when playing against maximin (b).
\begin{equation*}
u_1(M,M) > u_1(N,N),
\tag{a}
\end{equation*}
\begin{equation*}
u_1(M,N) \geq u_2(M,N).
\tag{b}
\end{equation*}

In a population of maximin and Nash equilibrium rule followers, the expected returns to the use of each decision rule are, therefore, given by the following, where $\alpha > 0$ and $\beta > 0$ are the proportion of mutual interactions for, respectively, maximin and Nash rule followers.
\[
E(M) = \alpha u_1(M,M) + (1-\alpha)u_1(M,N),
\]
\[
E(N) = \beta u_1(N,N) + (1-\beta)u_2(M,N).
\]
When there is no difference in the frequencies of mutual rule encounters (i.e.\ $\alpha = \beta$), it follows that (a) and (b) ensure that $E(M)>E(N)$. In other words, when the only difference in the evaluation of the decision rules is the decision rule itself, maximin does better than Nash. We regard this as the cleanest comparison of the two rules because it brings out the difference that is due solely to the rule itself rather than some other difference between rule followers like the frequencies of their mutual interactions. Nevertheless, we also consider an extreme case where $\alpha \neq \beta$ by asking whether a population comprising of one strategy could be invaded by another. In this context, we also show (c) and (d).\footnote{We state condition (c) separately because it is the relevant ESS condition, although it follows from (a), (b), and (d).} (c) ensures that, in the reduced game with only maximin and Nash strategies, maximin is an evolutionarily stable strategy (ESS) and (c) and (d) mean Nash is not ESS. In short, a population of maximin could not be invaded by Nash players, but maximin could invade a population of Nash.
\begin{equation*}
u_1(M,M) > u_2(M,N),
\tag{c}
\end{equation*}
\begin{equation*}
u_1(N,N) = u_1(M,N).
\tag{d}
\end{equation*}

These illustrations, therefore, count against the generality of the Alchian and Friedman informal evolutionary argument. It is in the spirit of evolutionary arguments to treat players as having a strategy choice like play Nash equilibrium strategies or play maximin baked into them. Different strategy rules then increase/decrease in number over time depending on their relative fitness in terms of pay-offs. No player thinks about what is the best response to the other player's strategy choice at the time of the interaction. To do this would shift the analysis from evolutionary to the standard form of game theory. It is, nevertheless, interesting to consider this possibility even if the argument then ceases to be evolutionary. One can instead ask whether a maximin player loses out when compared with a player who selects a strategy that is a best response. We provide another illustration of an ordinal symmetric game where the equivalent of (a) and (b) hold (i.e.\ maximin is playing against a best responder to maximin and not a Nash equilibrium strategy player in these comparisons). Hence, even when the comparison is between maximin and a best responder, it is impossible to show that best responders always do better than maximin player.

Second, notice our result in this respect is different from a standard justification for a Nash equilibrium over maximin strategy in non-zero-sum games: Nash equilibrium is to be preferred because it gives every player at least their own maximin guarantee. The maximin guarantee is the absolute minimum pay-off received when using maximin. However, this is not necessarily the pay-off received by a player using the maximin rule as this will depend on the strategy of the other player.\footnote{Any profile of maximin strategies also yields each player at least their maximin value.} This is what we analyse. Since it is reasonable to argue that actual pay-offs matter for agents and not the guaranteed level, this is our contribution in this respect.

Finally, \citet{harsanyi1966,harsanyi1977} distinguishes a class of 2-person games, `unprofitable' ones, where the mutual use of maximin yields the same pay-off for each player as do what are the different Nash equilibrium strategies. In such games, Harsanyi defends the choice of maximin on grounds of rationality. \citet{aumann1972} similarly err on the side of maximin in such games and \citet{aumann1985} is even clearer: `under these circumstances, it is hard to see why the players would use their equilibrium strategies' (p.~668).\footnote{In addition, \citet{holler1990} and later \citet{pruzhansky2011} provide characterisations of unprofitable games in the $2\times2$ case and in 2-player games with a completely mixed Nash equilibrium, respectively. \citet{morgansefton2002} find that subjects sometimes favour maximin and sometimes Nash in such games. This literature focuses on the trade-off between the riskiness of Nash equilibrium and the safety of maximin guarantees when both yield the same pay-offs, rather than on strict Pareto comparisons between maximin profiles and Nash equilibria. A recent adjacent contribution is by \citet{ismail2025}, who introduces the concept of optimin, extending maximin to nonzero-sum games.}

Our contribution in relation to this strand of the literature comes from showing that there exist 2-player games with cardinal pay-offs in which the mutual use of maximin strictly Pareto dominates all Nash equilibria. In particular, we derive two key sets of results. 

Our first set of main results characterise how far this comparison of mutual maximin with Nash equilibria can be changed within a fixed strategic-equivalence class through conditional cardinal utility transformations (in the manner of Morris and Ui, \citeyear{morris2004}).\footnote{For ease of exposition, we focus on two-player games throughout the paper, although, as discussed in the Appendix, our main results extend substantially to $n$-player games.} For any selected Nash equilibrium, there is a strategically equivalent game in which the selected equilibrium retains its original pay-offs and not only does every maximin profile weakly Pareto dominate it, but every maximin strategy, including a pure one, guarantees those pay-offs (\cref{thm:maximin-nash-pareto}). We then fully characterise when a maximin profile can be made to strictly Pareto dominate that equilibrium while preserving its original pay-offs (\cref{thm:maximin-pareto-characterisation}). A simple sufficient condition is that both players have at least three actions and neither uses a pure strategy in the Nash equilibrium.  Finally, we fully characterise when a maximin profile can be made to weakly or strictly Pareto dominate the entire Nash equilibrium set (\cref{thm:full-weak-dominance-characterisation,cor:strict-full-dominance-characterisation}).\footnote{The Appendix also provides sufficient conditions for the reversal results, together with additional sufficiency and characterisation results on the relationships between Nash equilibrium, maximin, and minimax strategies.}

Our second set of main results consists of robustness theorems: they show that these comparisons of mutual maximin with Nash equilibria are not knife-edge. For each fixed pair of action-set sizes $(m,n)$ in the corresponding game space $\R^{2mn}$, we show the following. Apart from subsets of Lebesgue measure zero, games in which some maximin profile strictly Pareto dominates every Nash equilibrium, and games in which some Nash equilibrium strictly Pareto dominates every maximin profile, are interior points of their respective classes; hence the relevant strict ranking survives sufficiently small pay-off perturbations (\cref{thm:main-robustness}). Moreover, strict mutual maximin dominance occurs on a non-empty open, and therefore positive-measure, subset of $\R^{2mn}$ if and only if both players have at least two actions and at least one has three (\cref{thm:dimensional-robustness}). We give some reasons in Section \ref{sec:illustration} for supposing that our ‘impossibility' illustrations are not idiosyncratic and the theorems of Section 5 reinforce this conclusion.

Thus, the argument proceeds as follows. In the next section, we discuss the possible decision rules that might be associated with zero-sum thinking. Section \ref{sec:illustration} provides some illustrations of 2-person games where these decision rules do as well as Nash equilibrium strategies in the senses of (a), (b), (c) and (d) above. These illustrations form, in effect, the basis of impossibility result: i.e. it is impossible to show that Nash equilibrium strategies always do better than plausible zero-sum decision rules. Section \ref{sec:setup} introduces our setup for generalising this insight in Section \ref{sec:results}. Section \ref{sec:results} extends the existing literature on maximin: it shows that condition (a) in the generation of the impossibility result in Section~\ref{sec:illustration} is not due to the choice of some  idiosyncratic choice illustrative symmetric games Section~\ref{sec:illustration}. Instead, (a) is a property of many cardinal and ordinal games. Section~\ref{sec:conclusion} concludes.

\section{Plausible decision rules for zero-sum thinkers}
\label{sec:zero-sum-rules}

Zero-sum thinking is associated in the recent literature with seeing the world in zero-sum terms (e.g.\ see Chinoy et al., 2026). We therefore first conjecture that a zero-sum thinker will adopt a decision rule that is appropriate for zero-sum interactions and will use this rule in all their interactions. The question, therefore, arises as to what is/are the plausible decision rules that such thinkers might use.

It has been known since \citet{neumann1928} that, in 2-person zero-sum games, maximin and minimax strategies coincide and are equilibrium strategies. This insight is central to von Neumann and Morgenstern's (\citeyear{neumann1944}) classic analysis of zero-sum games. Consequently, both maximin and minimax would seem plausible potential decision rules for zero-sum thinkers to adopt. They are our first two rules.\footnote{Since then, the maximin decision rule has been studied extensively, including by \citet{wald1939}, \citet{milnor1954}, \citet{rawls1971}, and \citet{gilboa1989}; for a more recent discussion, see, for example, \citet{kuzmics2017}. Our focus, however, is not on its decision-theoretic foundations, but on how maximin performs as a behavioural rule in strategic interaction.}

There is, in addition, however, a somewhat different possible interpretation of what zero-sum thinking entails: it could be a more general rivalrous disposition that leads subjects to be concerned with their relative pay-offs. For example, the survey question in \citet{chinoy2026} that has been used to detect zero-sum thinking in the US asks subjects how strongly they agree/disagree with a statement that says `If one ethnic/US citizen/etc group becomes richer, this generally comes at the expense of other groups in the country.' This could be thought to capture a rivalrous understanding of their relationship with other that need not be confined to zero-sum games.\footnote{Likewise in the World Values Survey, subjects are asked to locate their beliefs on a Likert scale where the two extremes are given by the following two statements: `People can only get rich at the expense of others' and `Wealth can grow so there's enough for everyone'.} In this context, it seems possible that players will become concerned, not with their absolute pay-off, but with their pay-off relative to the other player as this difference captures the advantage they will enjoy over the other player (their rival).

The transformation of any 2-person game into a game of relative pay-offs, however, will always produce a zero-sum transformed game. Any pair of strategies in the initial game yielding $(u_1,u_2)$ will under this transformation become $(u_1 - u_2, u_2 - u_1)$ and hence zero-sum. Thus, the use of maximin here does not represent a change in the relation between preferences and action as such because, as is well known, maximin is rational in the expected utility maximisation sense in zero-sum games. Instead, the relative-maximin is a different decision rule because preferences are assumed to be transformed by zero-sum thinkers.\footnote{Similarly, the minimax regret criterion first transforms the original game into an auxiliary regret game against Nature and then chooses the strategy minimizing the maximum possible regret. Thus, as with relative-maximin, the zero-sum structure arises only after transforming the original game rather than from applying a zero-sum decision rule directly to the original pay-offs. See \citet{savage1951} for minimax regret and \citet{renou2010} for minimax regret equilibrium.} Since we are interested in zero-sum thinking when it changes the decision rule connecting action with preferences, we focus on the other two zero-sum rules in what follows: maximin and minimax.

We focus on maximin and minimax not only because relative-maximin captures zero sum thinking through a transformation of pay-offs rather than a change in decision rule but also because it has already been studied in relation to Nash. First, in an indirect way, relative pay-offs already appear as early as Nash's (\citeyear{nash1953}) bargaining model. Under the assumption of transferable utility, optimal threat strategies correspond to maximin strategies in the relative pay-off game (see Ismail and Peeters, \citeyear{ismail2024}). In symmetric two-player games, the relative pay-off transformation yields a symmetric zero-sum game, which underlies a distinct evolutionary literature focusing on pure strategies. \citet{schaffer1988,schaffer1989} introduce finite-population evolutionarily stable strategies, which coincide with pure relative-maximin strategies.\footnote{This differs from the earlier infinite-population notion of an evolutionarily stable strategy due to \citet{smith1973}, which is always a Nash equilibrium; for a standard textbook on this topic, see \citet{weibull1995}. An adjacent literature concerns the evolutionary stability of preferences, including contributions by \citet{bester1998}, \citet{possajennikov2000}, and \citet{alger2013}.} Subsequently, conditions under which these strategies coincide with pure Nash equilibrium strategies have been studied; see, for example, \citet{ania2008,hehenkamp2010}. In addition, \citet{duersch2012a} prove an existence result for pure equilibria in symmetric zero-sum games, which implies the existence of pure finite-population evolutionarily stable strategies.\footnote{They further show in \citet{duersch2012b} that their unbeatable imitation rule cannot be exploited indefinitely when a finite-population evolutionarily stable strategy exists in symmetric games.} In a similar spirit, \citet{ismailpeeters2025} extend this existence result and provide new sufficient conditions for pure equilibria in symmetric two-player zero-sum games, and hence for pure relative-maximin strategies in the underlying games.

We conclude by noting that our analysis of maximin and minimax as compared with Nash strategies is potentially of more general interest because there are reasons other than being a zero-sum thinker for the use of these rules. They have several desirable properties which might also explain their use, making our results of more general interest. We list these properties now to bring out why our results might have such a general interest and importance.

First, they are robust to multiplicity in a way that Nash equilibrium generally is not. If a player has several maximin strategies, any one of them guarantees at least the player's security level, regardless of which maximin strategy the opponent chooses. Nash equilibrium in non-zero-sum games does not share this robustness. When multiple equilibria exist, combining one player's strategy from one Nash equilibrium with the other player's strategy from another may fail to reproduce any Nash equilibrium pay-off and can even yield outcomes strictly below each player's security level. To be sure that the pay-off from playing Nash is as good in these circumstances as the security level achieved by maximin requires players to coordinate on the same Nash equilibrium (see Harsanyi, 1964).

Second, the informational requirements of these decision rules  are lower. Nash equilibrium reasoning requires players to have sufficient knowledge of their opponents' pay-offs as well as their own. In contrast, minimax requires knowledge of the opponent's pay-offs  and maximin depends only on knowledge of one's own pay-offs.

Third, these rules make less computational demands. In finite two-player games, maximin and minimax can both be computed efficiently using linear programming. By contrast, there is no known polynomial-time algorithm for finding mixed Nash equilibria even in 2-player games \citep{daskalakis2006,chen2006}; moreover, determining whether equilibria satisfy natural properties is NP-hard \citep{gilboa1989b}. While this computational contrast does not by itself establish behavioural plausibility, it reinforces the idea that these zero-sum rules may serve as more tractable and hence more natural decision procedures for agents facing strategic complexity.

Fourth, these differences extend to existence properties under weaker pay-off assumptions. When attention is restricted to pure strategies and only ordinal (rather than cardinal) pay-offs are assumed, zero-sum decision rules continue to guarantee the existence of optimal strategies in pure form. By contrast, it is well known that pure strategy Nash equilibria do not always exist.

Since any of these reasons might explain the adoption of maximin or minimax rather than Nash, it is more generally interesting than the question of zero-sum thinking to know what are  the pay-off properties of maximin and minimax as compared with Nash.

\section{Some impossibility illustrations}
\label{sec:illustration}

In this section, we give four illustrations of games that confound the background evolutionary argument that Nash equilibrium strategies associated with expected utility maximisation always produce better results for players than zero-sum rules. In effect, these are the basis for an impossibility result: it is impossible to show that the adoption of zero-sum rules will always harm players relative to their use of Nash equilibrium strategies. We show in later sections that at least with respect to one of the key advantages of maximin over Nash equilibrium, this advantage is shared by maximin in many games. In this sense, there is no reason to suppose these illustrations are idiosyncratic.

Our first illustration is the $3 \times 3$ symmetric game in Figure~\ref{fig:3x3_main}.
\begin{figure}[htbp]
\centering
	\[
	\begin{array}{c@{~~} | @{~~}c@{~~}c@{~~}c}
		& x & y & z\\ \hline
		x & 7,7 & 9,4 & 0,8\\
		y & 4,9 & 8,8 & 7,4\\
		z & 8,0 & 4,7 & 3,3
	\end{array}
	\]
\caption{A symmetric $3\times3$ game where the maximin profile strictly outperforms Nash equilibrium}
\label{fig:3x3_main}
\end{figure}

This game has a unique mixed Nash equilibrium $s^N = [(1/2, 1/4, 1/4),(1/2, 1/4, 1/4)]$, a unique maximin strategy $s_i^M = (0, 5/8, 3/8)$, and a unique minimax strategy $s_i^m = (1/2, 0, 1/2)$ for each player $i \in \{1,2\}$.\footnote{The maximin strategy guarantees a pay-off of 5.5, whereas the Nash strategy guarantees only 2.5 if player 2 plays $z$, even though $z$ is a best response to player 1's Nash strategy.} Figure \ref{fig:3x3_induced} gives the induced game pay-offs for each type of rule follower when they interact with any of the three possible types of rule follower.

\begin{figure}[htbp]
\centering
\[
\begin{array}{c@{~~} | @{~~}ccc}
 & s_2^N & s_2^M & s_2^m\\ \hline
s_1^N & (5.75, 5.75) & (5.625, 5.75) & (4.5, 5.75)\\
s_1^M & (5.75, 5.625) & (6.125, 6.125) & (5.5, 4.625)\\
s_1^m & (5.75, 4.5) & (4.625, 5.5) & (4.5, 4.5)
\end{array}
\]
\caption{The induced game pay-offs by each rule}
\label{fig:3x3_induced}
\end{figure}

If zero-sum thinking is associated with maximin, then in a population of Nash equilibrium and maximin players condition (a) holds ($6.125>5.75$) and so does condition (b) ($5.75>5.625$). Thus on a strict comparison of the difference in rule where the only difference in the comparison of the rules are the rules themselves, maximin players do better than Nash equilibrium ones.\footnote{Similarly, relative-maximin players also outperform Nash equilibrium players, yielding pay-offs of 5.95 for each player.} Furthermore condition (c) holds ($6.125>5.625$) and so does condition (d) and so maximin is an ESS in the induced game whereas Nash equilibrium is not.

If zero-sum thinking is instead associated with minimax, then while the equivalent of (a) does not hold, (b) does for minimax over Nash. Further if the population comprised of only the two types of zero-sum players, then conditions (a), (b), (c) and (d) hold in favour of maximin over minimax.

Hence, in the Figure \ref{fig:3x3_main} game, we would expect maximin players to outperform both Nash equilibrium ones and minimax ones.

Our next illustration is the symmetric $3 \times 3$ game with ordinal pay-offs in Figure \ref{fig:3x3_ordinal_2532}, focusing only on pure strategies.

\begin{figure}[htbp]
\centering
\[
\begin{array}{c@{~~} | @{~~}c@{~~}c@{~~}c}
 & x & y & z\\ \hline
x & 1,1 & 2,9 & 7,8\\
y & 9,2 & 5,5 & 3,4\\
z & 8,7 & 4,3 & 6,6
\end{array}
\]
	\caption{A symmetric $3\times3$ game with ordinal pay-offs in which the maximin profile strictly Pareto dominates the unique pure Nash equilibrium}
	\label{fig:3x3_ordinal_2532}
\end{figure}

In this game, the Nash equilibrium is given by $(y,y)$ and mutual maximin yields $(z,z)$. Thus, in a population of maximin and Nash equilibrium players, i.e., in the induced $2\times2$ game between $y$ and $z$ strategies, condition (a) holds ($6>5$) and (b) holds ($4>3$). So, in the strict comparison of the strategies, maximin is again superior to the Nash equilibrium. Further (c) holds ($6>3$, so maximin is ESS) but (d) does not ($4<5$) and the strict inequality here makes Nash an ESS too. We introduce this example for several reasons.

First, the example clarifies a limitation of standard evolutionary reasoning in positive-sum environments for the purposes of answering our question. In the full $3\times3$ game, $y$ is the only ESS and not $z$ because $x$, when rare, can invade $z$: it receives 7 against $z$, whereas most $z$-types will be interacting with $z$ types when $x$ is rare and so they receive 6. But this means only that the share of $x$-types increases from rarity. It does not mean that $z$ players are harmed in the process since a $z$-player matched with $x$ gets 8, which is better than the 6 they get when matched with a fellow $z$ player.

Second, the example highlights a difference between our notion of evolutionary success and Schaffer's finite-population ESS. Under this notion, $z$ remains stable because it defeats $x$ and $y$ in pairwise comparison: $8>7$ against $x$ and $4>3$ against $y$. However, this conclusion is independent of the absolute level of the pay-off at $(z,z)$. If, for example, the outcome for $(z,z)$ were replaced by a much lower outcome such as $(1,1)$, $z$ would still be finite-population ESS but it would not satisfy (a); and so the use of $z$ could in these circumstances be said to harm an agent's interests.

Third, it is natural in addressing an evolutionary argument to assume that players are associated with playing strategies like `maximin' or `Nash equilibrium'. They do not think about the strategy as such, they are more simply bearers of this way of behaving. The analysis then proceeds on the basis of whether one or other player/strategy prospers more than the other with the result that one type of player/strategy increases in number over the other (or not as the case may be). Strategies are baked into the players and evolution selects those that prove the fittest. However, this evolutionary setting, although it is the basis of Alchian and Friedman argument that we are addressing, is not the only one where it is interesting to ask whether playing maximin damages a player's interests. In conventional game theory, the appropriate comparison for the maximin player would be with someone who always `best responds'. With a fellow `best responder', this will plausibly lead players to select Nash equilibrium strategies. Hence if (a) holds in the evolutionary setting, the equivalent of (a) will also hold in the conventional one where the comparison becomes between maximin and `best response'. Condition (b) however does not automatically carry over in this way for mixed interactions because the best response to maximin need not be to play the Nash equilibrium strategy. Indeed, in the Figure \ref{fig:3x3_ordinal_2532} game, row's best response to column's maximin is `$x$' and not the Nash equilibrium strategy of `$y$'. Nevertheless, and this is the point of this illustration, the equivalent of (b) still holds: the return in this mixed player interaction again favours maximin over the best responder ($8>7$). Thus, this illustration provides the confound to any general proposition that maximin damages the interests of those who use it in positive-sum games because in a conventional game theory setting maximin still does better than a `best responder' in the sense of (a) and (b).\footnote{Although it does not make much sense to apply an evolutionary game theoretic equilibrium concept to a game where one type of player is not represented in an evolutionary fashion, neither maximin nor a best response rule are evolutionarily stable strategies in these circumstances.}

Our third illustration of where action guided by Nash is (potentially) injurious compared with maximin is given by the game in Figure \ref{fig:5x5}. For the row player, the unique maximin strategy is the pure strategy `$x$', which guarantees 2. For column player, `$x$' is the pure maximin strategy (although there are mixed maximin strategies as well), and it also guarantees 2. The pure maximin profile $(x, x)$ yields pay-off $(3, 12)$. In contrast, the game has exactly two Nash equilibria, both in pure strategies: $(y, y)$ and $(z, z)$, with pay-offs $(4, 2)$ and $(3, 5)$, respectively. $(3,12)$ from the joint maximin is Pareto superior to the second of these but not the first. However, Nash players have a coordination problem to solve if they are to achieve one or other of the Nash equilibria. If they do not solve this problem, then playing Nash will on occasion produce the non-Nash outcomes of $(y,z)$ and $(z,y)$; and both these outcomes yield $(0,0)$.

\begin{figure}[htbp]
\centering
\[
\begin{array}{c@{~~} | @{~~}c@{~~}c@{~~}c@{~~}c@{~~}c}
 & x & y & z & v & w\\ \hline
x & 3,12 & 3,2 & 2,1 & 7,5 & 2,6\\
y & 1,2 & 4,2 & 0,0 & 3,0 & 2,0\\
z & 4,3 & 0,0 & 3,5 & 9,1 & 1,4\\
v & 2,4 & 4,3 & 1,5 & 0,6 & 1,1\\
w & 0,5 & 3,7 & 0,3 & 0,8 & 3,6
\end{array}
\]
\caption{A game with two pure Nash equilibria in which average Nash-play pay-offs are below the maximin guarantees}
\label{fig:5x5}
\end{figure}

In particular, suppose each player randomises over the two Nash equilibria strategies with the result that the two Nash equilibria are equally likely and as likely as the two non-Nash outcomes. The average pay-off from Nash is

\begin{equation*}
\frac{1}{4}\bigl[(4,2) + (0,0) + (0,0) + (3,5)\bigr]
= \left(\frac{7}{4}, \frac{7}{4}\right)
= (1.75,\,1.75).
\end{equation*}

Since $1.75 < 2$, this average pay-off lies strictly below each player's maximin guarantee. Hence the analogue of condition (a) will hold in these circumstances. One attraction of maximin is precisely its robustness to multiplicity: no matter which maximin strategy the opponent chooses, the player still receives at least the security level. Nash equilibrium does not have this property. When there are multiple equilibria, its pay-off predictions depend on successful coordination across players, and when that coordination fails, realised pay-offs can fall below what maximin guarantees. In that sense, the attraction of maximin is not limited to games such as Figures \ref{fig:3x3_main} or \ref{fig:3x3_ordinal_2532}, where maximin strictly Pareto dominates Nash equilibrium. It also extends to games like Figure \ref{fig:5x5}, where the difficulty of implementing Nash equilibrium can itself make maximin not only safer but also, in realised pay-off terms, the superior rule.

One possible explanation of how the contrary background or received view might have gained hold is that matters are rather different in $2\times2$ games, symmetric games like the Prisoners' Dilemma, Hawk-Dove, etc.
Indeed in Appendix~\ref{app:robustness-theorems}, \cref{lemma:2x2-maximin-no-strict-pareto} we show that it is strictly impossible for mutual maximin to Pareto dominate all Nash equilibria. For example, in Hawk-Dove, mutual maximin strictly dominates the mixed Nash equilibrium but not the two pure equilibria (see the Appendix for this and other illustrations). In turn, this might fuel the suspicion that our counter illustrations in this section are somewhat idiosyncratic. We show this not to be the case in Section~\ref{sec:results}. For now, however, we also note that  it is relatively easy to conduct a statistical analysis of all the symmetric $3\times 3$ ordinal games like that of Figure~\ref{fig:3x3_ordinal_2532}. In particular, we show in the Appendix that on balance across all these games, neither Nash nor maximin is superior in the sense of (a). In the same vein, the Appendix contains some familiar economic applications where the relative superiority of maximin or Nash behaviour depends on the parameters of the game (e.g. in Cournot model with capacity constraints). 

Towards this same end, we conclude this section with an asymmetric $2\times2$ game, where the literature has already raised a doubt about the superiority of Nash over maximin.
	\[
\begin{array}{c@{~~} | @{~~}c@{~~}c}
		& x & y\\ \hline
		x & 3,1 & 0,2\\
		y & 1,4 & 2,0
	\end{array}
	\]
\citet{harsanyi1964} discusses this `unprofitable' game. It has a unique mixed Nash equilibrium, $((4/5, 1/5), (1/2, 1/2))$, yielding pay-offs $(3/2, 8/5)$. Player 1's maximin strategy is $(1/4, 3/4)$, and player 2's maximin strategy is $(2/5, 3/5)$. These strategies guarantee pay-offs of $3/2$ and $8/5$, respectively, so each player's security level is exactly equal to that player's Nash equilibrium pay-off. Moreover, the resulting maximin profile yields the same pay-off vector $(3/2, 8/5)$.\footnote{A $3\times3$ game with similar properties can be constructed in which there exists a pure strategy Nash equilibrium and a pure maximin profile that generate the same pay-offs \citep[p.~62]{hargreaves_heap2004}.}

Thus, each player secures the Nash equilibrium pay-off by using a maximin strategy, whereas the corresponding Nash strategy does not secure that pay-off against all responses. For example, player 1's Nash strategy yields only 0.4 if player 2 deviates to $y$. The literature on unprofitable games largely stops at this point: it asks when equilibrium pay-offs coincide with maximin guarantees. The question we now take up is broader: what are the conditions under which zero-sum rules do strictly better than Nash equilibrium ones?

\section{Setup}
\label{sec:setup}
Let
	 $
	G=\langle N,(A_i)_{i\in N},(u_i)_{i\in N}\rangle
	$
	be a finite two-player normal-form game, where $N=\{1,2\}$ is the set of players, $A_i$ is player $i$'s finite action set, and $u_i\colon A_1\times A_2\to\R$ is player $i$'s pay-off function. As usual, $-i$ denotes the opponent of player $i$.\footnote{For a standard textbook on game theory, see, e.g., \citet{osborne1994}.}
	
	Let $S_i$ denote the simplex of mixed strategies on $A_i$. A mixed-strategy profile is $s=(s_1,s_2)\in S_1\times S_2:= S$, and expected pay-offs are defined by bilinear extension: for every $i\in N$,
	$u_i(s_1,s_2)
	=
	\sum_{a_1\in A_1}\sum_{a_2\in A_2}
	s_1(a_1)s_2(a_2)u_i(a_1,a_2)$.
	For $s_i\in S_i$, its support is
	$
	\supp(s_i):=\{a_i\in A_i: s_i(a_i)>0\}.
	$
	For any mixed profile $s$, we also write
	$
	u(s):=(u_1(s),u_2(s))
	$
	for the associated pay-off vector. Because the action sets are finite, all simplices are compact and every minimum and maximum shown below is attained.
	
	\begin{definition}[Pareto comparisons]
		A strategy profile $s$ \emph{weakly Pareto dominates} a profile $t$ if
		 $
		u_i(s)\ge u_i(t)
		$ for all $i\in N$.
		It \emph{Pareto dominates} $t$ if it weakly Pareto dominates $t$ and the inequality is strict for at least one player. It \emph{strictly Pareto dominates} $t$ if
		 $
		u_i(s)>u_i(t)
		$ for all $i\in N$.
		A profile $s$ is \emph{Pareto optimal} if there is no profile $t\neq s$ that Pareto dominates $s$.
	\end{definition}
	
	\begin{definition}[Best response and Nash equilibrium]
		For $s_{-i}\in S_{-i}$, player $i$'s mixed best response set is
		 $
		\BR_i(s_{-i})
		:=
		\argmax_{t_i\in S_i}u_i(t_i,s_{-i}).
		$
		A mixed profile $s^{N}=(s_1^{N},s_2^{N})$ is a \emph{Nash equilibrium} if
		 $
		s_i^{N}\in \BR_i(s_{-i}^{N})
		$ for all $i\in N$.
        We write $\NE(G)$ for the set of all Nash equilibria of the game $G$.
	\end{definition}
	
	A Nash equilibrium is thus a profile of mutually optimal replies: once the opponent's equilibrium strategy is fixed, no player can improve by deviating unilaterally.
	
	\begin{definition}[Maximin and minimax]
		The \emph{maximin value}, or security level, of player $i$ is
		\[
		\underline v_i:=\max_{s_i\in S_i} \min_{t_{-i}\in S_{-i}}u_i(s_i,t_{-i}),
		\]
		and the \emph{maximin set} is
		$
		M_i:=\argmax_{s_i\in S_i} \min_{t_{-i}\in S_{-i}}u_i(s_i,t_{-i}).
		$
		Any strategy $s^M_i\in M_i$ is called a \emph{maximin strategy} of player $i$. A profile $s^M=(s^M_1,s^M_2)\in M_1\times M_2$ is called a \emph{maximin profile}. 
		
		The \emph{minimax value} of player $i$ is $
		\overline v_i:=
		\min_{s_{-i}\in S_{-i}}
		\max_{t_i\in S_i}u_i(t_i,s_{-i})$.
		A \emph{minimax strategy} of player $1$ is any strategy $s^m_1$ such that $s^m_1\in
		\argmin_{s_1\in S_1}
		\max_{t_2\in S_2}u_2(s_1,t_2)$,
		and a \emph{minimax strategy} of player $2$ is any $s^m_2$ such that $s^m_2\in
		\argmin_{s_2\in S_2}
		\max_{t_1\in S_1}u_1(t_1,s_2)$.
		A profile $s^m=(s^m_1,s^m_2)$ is called a \emph{minimax profile} if each component is a minimax strategy for the corresponding player.
	\end{definition}

	Maximin and minimax solve different optimization problems. Maximin is self-protection against an adversarial opponent: it asks what pay-off a player can guarantee for herself against every opponent response.\footnote{Hence, we use the terms maximin value, maximin guarantee, and security level interchangeably throughout.} Minimax is an opponent-induced cap: it asks how low a player can force the opponent's best attainable pay-off.
	
	\subsection{Basic equalities and inequalities}
	\label{subsec:basic}
	The next remarks collect well-known results. Short proofs are generally included for completeness.
	
	\begin{remark}
		\label{rem:maximin-minimax}
		The minimax theorem, due to \citet{neumann1928}, states that for each player $i$:
		\[
		\max_{s_i}\min_{s_{-i}}u_i(s_i,s_{-i})
		=
		\min_{s_{-i}}\max_{s_i}u_i(s_i,s_{-i}).
		\]
		That is, $\underline v_i = \overline v_i = v_i$, where $v_i$ is called the value of the zero-sum game with pay-offs $(u_i,-u_i)$.
	\end{remark}
	
	Although the game $G$ is general-sum, each player's maximin and minimax problems are defined using the same pay-off function $u_i$ and therefore form a zero-sum game between the player and an adversarial opponent. Von Neumann's minimax theorem implies that, in mixed strategies, the player's guaranteed pay-off (maximin) coincides with the smallest pay-off the opponent can enforce against the player's best response (minimax). Note that the strategies achieving these values are player $i$'s maximin
	strategy $s_i^{M}$ and the opponent's minimax strategy $s_{-i}^{m}$ (with respect to $u_i$).
	
\begin{remark}[Equilibrium and maximin pay-offs lie above security levels]
    \label{rem:NE-above-security}
    For every Nash equilibrium $s^{N}$ and each player $i$,
    $
    u_i(s^{N}) \ge v_i.
    $
For every maximin profile $s^M = (s_1^M, s_2^M) \in M_1 \times M_2$ and each player $i$, 
$
u_i(s^M) \ge v_i.
$
\end{remark}

Two important points to note here. First, the value $v_i$ is a worst-case guarantee. It need not coincide with the pay-offs $u_i(s^M_1, s^M_2)$ at a particular maximin profile $(s^M_1, s^M_2)$. Thus, \cref{rem:NE-above-security} compares equilibrium pay-offs with security levels, not with the pay-offs at a maximin profile, which may lie strictly above security levels. Moreover, a maximin profile may strictly Pareto dominate a Nash equilibrium. 

Second, note the interchangeability of maximin strategies in the sense that no matter which maximin strategy a player chooses each receives at least their security level from the maximin profile. However, it is well known that both players may end up strictly below their security level, for example, in coordination games, by choosing a Nash equilibrium strategy.
    
	\begin{proof}
		Fix $i$ and choose any $s^M_i\in M_i$. Because $s^{N}$ is a Nash equilibrium, $s_i^{N}$ is a best response to $s_{-i}^{N}$. Therefore
		$
		u_i(s^{N})=u_i(s_i^{N},s_{-i}^{N})
		\ge u_i(s^M_i,s_{-i}^{N}).
		$
		Now use the maximin property of $s^M_i$:
		$
		u_i(s^M_i,s_{-i}^{N})
		\ge \min_{t_{-i}}u_i(s^M_i,t_{-i})
		=v_i.
		$
		Combining the two inequalities gives $u_i(s^{N})\ge v_i$. As for the maximin profile, since a maximin strategy guarantees $v_i$ for each player $i$, any combination of maximin strategies must also yield at least $v_i$.
	\end{proof}
	
	\begin{remark}
		\label{rem:minimax-component}
		Let $s^m_1$ be a minimax strategy of player $1$. Then for every $t_2\in S_2$ and every Nash equilibrium $s^{N}$,
		$
		u_2(s^m_1,t_2)\le u_2(s^{N}).
		$
		Symmetrically, if $s^m_2$ is a minimax strategy of player $2$, then for every $t_1\in S_1$ and every Nash equilibrium $s^{N}$,
		$
		u_1(t_1,s^m_2)\le u_1(s^{N}).
		$
		Consequently, a profile with one minimax component cannot strictly Pareto dominate any Nash equilibrium.
	\end{remark}
	
	A minimax strategy caps the opponent's pay-off. Once one component is minimax, the opponent's pay-off at any profile using it is already bounded above by what that opponent gets at any equilibrium, so strict Pareto improvement over equilibrium is impossible.
	
	\begin{proof}
		Suppose first that $s^m_1$ is a minimax strategy of player $1$. By definition,
		$
		\max_{t_2\in S_2}u_2(s^m_1,t_2)=v_2.
		$
		Hence, for every $t_2\in S_2$,
		$
		u_2(s^m_1,t_2)\le v_2.
		$
		Now let $s^{N}=(s_1^{N},s_2^{N})$ be a Nash equilibrium. It implies that
		$
		u_2(s^{N})=\max_{t_2\in S_2}u_2(s_1^{N},t_2).
		$
		Therefore
		$
		u_2(s^{N})
		\ge
		\min_{s_1\in S_1}\max_{t_2\in S_2}u_2(s_1,t_2)
		=v_2.
		$
		Combining the two inequalities yields
		$
		u_2(s^m_1,t_2)\le v_2\le u_2(s^{N}).
		$
		The second statement is symmetric.
		
		Finally, if a profile has one minimax component, then one player's pay-off cannot exceed the corresponding equilibrium pay-off. So such a profile cannot strictly Pareto dominate any Nash equilibrium.
	\end{proof}

	\begin{corollary}[Nash equilibrium dominates minimax profile]
		\label{cor:minimax-profile-dominated}
		Let $s^m$ be a minimax profile. Then every Nash equilibrium $s^{N}$ weakly Pareto dominates $s^m$.
	\end{corollary}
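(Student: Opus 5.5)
This follows directly from \cref{rem:minimax-component}, applied once to each component of the minimax profile. Let $s^m=(s^m_1,s^m_2)$ be a minimax profile and $s^N$ any Nash equilibrium. We must show that $u_i(s^N)\ge u_i(s^m)$ for both $i\in N$.

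\emph{Player 2.} Use the first statement of \cref{rem:minimax-component} with $t_2=s^m_2\in S_2$. Since $s^m_1$ is a minimax strategy of player 1, it gives $u_2(s^m_1,s^m_2)\le u_2(s^N)$.

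\emph{Player 1.} Use the symmetric statement with $t_1=s^m_1\in S_1$. Since $s^m_2$ is a minimax strategy of player 2, it gives $u_1(s^m_1,s^m_2)\le u_1(s^N)$.

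These two inequalities together are exactly the definition of $s^N$ weakly Pareto dominating $s^m$.

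Nothing here is delicate. The only point to check is that \cref{rem:minimax-component} allows an arbitrary opponent strategy $t_{-i}$, including the opponent's own minimax strategy; it does, because the bound passes through the value $v_i$ and does not depend on $t_{-i}$. I would also remark that the dominance is only weak in general. For example, in a zero-sum game the minimax profile is itself an equilibrium, so equality can hold.
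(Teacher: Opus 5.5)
Your proposal is correct and matches the paper's proof exactly: the paper also applies \cref{rem:minimax-component} once with $t_2=s^m_2$ and once, symmetrically, with $t_1=s^m_1$, obtaining the two inequalities that make up weak Pareto dominance. Your zero-sum example is a valid illustration that the dominance need not be strict: there every minimax strategy is a maximin strategy, so the minimax profile is itself an equilibrium and both pay-off inequalities hold with equality.
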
	
	\begin{proof}
		Apply \cref{rem:minimax-component} with the minimax strategy $s^m_1$ of player $1$ and set $t_2=s^m_2$ to obtain
		$
		u_2(s^m_1,s^m_2)\le u_2(s^{N}).
		$
		Applying the symmetric part of the same remark with the minimax strategy $s^m_2$ of player $2$ and setting $t_1=s^m_1$ gives
		$
		u_1(s^m_1,s^m_2)\le u_1(s^{N}).
		$
		Thus $s^{N}$ weakly Pareto dominates $s^m$.
	\end{proof}

	\begin{remark}
		\label{rem:one-NE-strategy}
		Let $s^N=(s^N_1,s^N_2)$ be a Nash equilibrium and $t\in S$ be any strategy profile. Then, no profile of the form $(s^N_i,t_{-i})$ can strictly Pareto dominate $s^N$.
		
		\noindent
		This is because, by the definition of Nash equilibrium, for every $t_2\in S_2$,
		$
		u_2(s^N_1,t_2)\le u_2(s^N).
		$
		The symmetric statement holds for player 1 and profiles of the form $(t_1,s^N_2)$. Therefore, no such ``hybrid'' profile can strictly Pareto dominate the Nash equilibrium.
	\end{remark}
	
	Once one player's equilibrium strategy is fixed, the opponent's equilibrium response already maximises her pay-off against it. So changing only the opponent's strategy cannot make both players strictly better off relative to that same equilibrium.

\section{Results}
\label{sec:results}

\subsection{Nash-maximin Pareto relationships}
The results in this subsection first characterise when strategic equivalence can make a maximin profile match or dominate a selected Nash equilibrium, and when it can dominate the entire Nash equilibrium set.

\begin{definition}[Strategic equivalence]
\label{def:strategic-equivalence}
Two games $G$ and $\widetilde G$ with the same players and action sets are \emph{strategically equivalent} if, for each player $i$, there is a function $h_i\colon A_{-i}\to\R$ such that $\widetilde u_i(a_i,a_{-i})=u_i(a_i,a_{-i})+h_i(a_{-i})$
for every pure action profile $(a_i,a_{-i})$.
\end{definition}

The added term depends on the opponent's action but not on the player's own action. Consequently, strategic equivalence preserves each player's preferences over their own pure and mixed strategies conditional on every mixed strategy of the opponent. In particular, strategically equivalent games have the same better-response and best-response correspondences and hence the same Nash equilibrium set. This additive form is the nonstrategic component identified by \citet{candogan2011}, is a special case of VNM-equivalence in \citet{morris2004}, and is also studied by \citet{abdou2022}.

\begin{lemma}[Strategic equivalence preserves Nash equilibrium \citep{morris2004}]
\label{lem:strategic-equivalence}
Strategically equivalent games have the same best-response correspondences and the same Nash equilibrium set.
\end{lemma}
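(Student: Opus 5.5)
The argument is a direct computation from \cref{def:strategic-equivalence}. Fix strategically equivalent games $G$ and $\widetilde G$ with functions $h_i\colon A_{-i}\to\R$ such that $\widetilde u_i(a_i,a_{-i})=u_i(a_i,a_{-i})+h_i(a_{-i})$. The first step is to extend this identity from pure to mixed profiles. Because expected pay-offs are defined by bilinear extension and $\sum_{a_i}s_i(a_i)=1$, we get for every $(s_i,s_{-i})\in S$
\[
\widetilde u_i(s_i,s_{-i})=u_i(s_i,s_{-i})+\sum_{a_{-i}\in A_{-i}}s_{-i}(a_{-i})h_i(a_{-i})=:u_i(s_i,s_{-i})+H_i(s_{-i}).
\]
The added term $H_i(s_{-i})$ depends only on the opponent's mixed strategy.

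The second step establishes equality of best-response correspondences. Fix $s_{-i}\in S_{-i}$. As functions of $t_i\in S_i$, the maps $t_i\mapsto\widetilde u_i(t_i,s_{-i})$ and $t_i\mapsto u_i(t_i,s_{-i})$ differ by the constant $H_i(s_{-i})$. They therefore have the same maximisers over the compact set $S_i$, so $\widetilde{\BR}_i(s_{-i})=\BR_i(s_{-i})$ for every $s_{-i}$ and each $i$. The same reasoning shows that the better-response relations also coincide: $\widetilde u_i(t_i,s_{-i})\ge\widetilde u_i(t_i',s_{-i})$ if and only if $u_i(t_i,s_{-i})\ge u_i(t_i',s_{-i})$.

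The third step obtains the Nash equilibrium sets. A profile $s$ is a Nash equilibrium of $G$ if and only if $s_i\in\BR_i(s_{-i})$ for all $i$. By the second step this condition is equivalent to $s_i\in\widetilde{\BR}_i(s_{-i})$ for all $i$, so $\NE(G)=\NE(\widetilde G)$.

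None of these steps is a real obstacle. The only point needing care is the first: the additive term must be independent of the player's own mixed strategy. This is exactly where we use that $h_i$ depends only on $a_{-i}$ and that the weights $s_i(a_i)$ sum to one. We will note in passing that strategic equivalence is symmetric, since $-h_i$ inverts the transformation. Hence the statement needs no separate argument in each direction.
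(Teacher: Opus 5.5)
Your proof is correct and is essentially the paper's argument. The paper obtains the lemma as the unit-weight ($w_i=1$) case of Morris and Ui's VNM-equivalence, justified by exactly the observation you formalise: the added term depends only on the opponent's action, so by bilinearity it shifts $u_i(\cdot,s_{-i})$ by a constant for each fixed $s_{-i}$. Your explicit computation of $H_i(s_{-i})$ simply makes that citation self-contained.
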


This follows from \citet[pp.~263--264]{morris2004}: the transformation in \cref{def:strategic-equivalence} is the unit-scale case ($w_i=1$) of VNM-equivalence and therefore preserves the full best-response structure and the entire Nash-equilibrium set. It need not, however, preserve equilibrium pay-offs, maximin strategies, or security levels. The constructions in this section are designed to preserve some or all of these additional properties.

\begin{theorem}[Maximin-Nash Pareto dominance]
\label{thm:maximin-nash-pareto}
Let $G$ be a finite two-player game and fix $\bar s\in\NE(G)$.
There exists a game $\widehat G$ strategically equivalent to $G$ such
that:
\begin{enumerate}[label=\textup{(\roman*)}]
\item for every $i\in N$, the selected equilibrium pay-offs and the security levels coincide:
$$\widehat u_i(\bar s)=v_i(\widehat G)=u_i(\bar s);$$
\item every maximin profile
$m\in M_1(\widehat G)\times M_2(\widehat G)$ weakly Pareto dominates
the selected equilibrium;
\item  there is a pure maximin profile $a^M\in M_1(\widehat G)\times M_2(\widehat G)$ such that
$\widehat u_i(a^M)=u_i(\bar s)$ for all $i\in N$.
\end{enumerate}
\end{theorem}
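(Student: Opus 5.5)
The plan is to build $\widehat G$ by a direct choice of the additive terms $h_i$ in \cref{def:strategic-equivalence}. The aim is to make one pure action in the support of each player's equilibrium strategy into an ``equaliser'': an action that earns exactly $u_i(\bar s)$ against every opponent action. \Cref{lem:strategic-equivalence} then keeps $\bar s$ a Nash equilibrium of $\widehat G$, and the rest is bookkeeping with the definitions of security level and best response.

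\textbf{Construction.} Write $w_i:=u_i(\bar s)$. For each $i$, fix a pure action $a_i^*\in\supp(\bar s_i)$ and set
\[
h_i(a_{-i}):=w_i-u_i(a_i^*,a_{-i}) \quad\text{for all } a_{-i}\in A_{-i},
\]
so that $\widehat u_i(a_i,a_{-i})=u_i(a_i,a_{-i})+w_i-u_i(a_i^*,a_{-i})$. By construction $\widehat u_i(a_i^*,a_{-i})=w_i$ for every $a_{-i}$, and by bilinearity $\widehat u_i(a_i^*,t_{-i})=w_i$ for every mixed $t_{-i}$. Hence $a_i^*$ guarantees $w_i$ in $\widehat G$, which gives $v_i(\widehat G)\ge w_i$.

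\textbf{Part (i).} Since $\bar s$ is a Nash equilibrium of $G$ and $a_i^*\in\supp(\bar s_i)$, the indifference property of equilibrium supports gives $u_i(a_i^*,\bar s_{-i})=w_i$. Therefore $h_i(\bar s_{-i})=w_i-u_i(a_i^*,\bar s_{-i})=0$, and so $\widehat u_i(\bar s)=u_i(\bar s)+h_i(\bar s_{-i})=w_i$. For the reverse bound on the security level, suppose the opponent plays $\bar s_{-i}$. For every $t_i\in S_i$,
\[
\widehat u_i(t_i,\bar s_{-i})=u_i(t_i,\bar s_{-i})+0\le u_i(\bar s)=w_i,
\]
where the inequality holds because $\bar s_i$ is a best response to $\bar s_{-i}$ in $G$. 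So $\min_{t_{-i}}\widehat u_i(t_i,t_{-i})\le w_i$ for every $t_i$, i.e.\ $v_i(\widehat G)\le w_i$. Combining both bounds gives $v_i(\widehat G)=w_i=\widehat u_i(\bar s)$.

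\textbf{Parts (ii) and (iii).} For (ii), apply \cref{rem:NE-above-security} in $\widehat G$: every maximin profile $m$ satisfies $\widehat u_i(m)\ge v_i(\widehat G)=w_i=\widehat u_i(\bar s)$ for each $i$. For (iii), the construction already shows that $a_i^*$ guarantees $v_i(\widehat G)$, so $a_i^*\in M_i(\widehat G)$. Take $a^M=(a_1^*,a_2^*)$. Each player's pay-off there is computed from their own equaliser row or column, giving $\widehat u_i(a^M)=w_i$ exactly.

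The only step needing real care is the upper bound $v_i(\widehat G)\le w_i$. It relies on choosing $a_i^*$ inside the support of $\bar s_i$, which is exactly what forces $h_i(\bar s_{-i})=0$; a pure action outside the support would shift the equilibrium pay-off and could break (i). Everything else follows routinely from the construction.
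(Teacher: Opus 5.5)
Your proof is correct and is essentially the paper's own construction: the paper takes a pure best response $b_i\in\BR_i^G(\bar s_{-i})$, adds $h_i(a_{-i})=u_i(\bar s)-u_i(b_i,a_{-i})$, and obtains the security level from its maximin-implementation lemma (belief $\bar s_{-i}$, zero additions), which is exactly your direct two-sided bound on $v_i(\widehat G)$. One small correction to your closing remark: what matters is that $a_i^*$ is a pure best response to $\bar s_{-i}$, not that it lies in $\supp(\bar s_i)$, since an unused best reply also gives $\sum_{a_{-i}}\bar s_{-i}(a_{-i})h_i(a_{-i})=0$ and works just as well.
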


\noindent\emph{Proof.} See Appendix~\ref{app:selected-equilibrium}.

This theorem says that, for any selected Nash equilibrium, one can preserve its original pay-off vector while making those pay-offs equal to both players' security levels. Every maximin profile then weakly Pareto dominates the selected equilibrium, and at least one pure maximin profile yields exactly the same pay-off vector. Strict Pareto dominance is not always attainable; for example, it is impossible in a game in which all pay-offs are equal. The next theorem characterises exactly when a maximin profile can instead be made to strictly Pareto dominate the selected equilibrium.

\begin{example}
\label{ex:maximin-nash-pareto-construction}

To start from a familiar example, consider the game in Figure~\ref{fig:3x3_main}. Multiply its payoffs by 4 and add
\(14\), \(-4\), and \(0\) to player~1's pay-offs in columns \(x\),
\(y\), and \(z\), respectively, with the symmetric adjustments for
player~2. This gives the game in Figure~\ref{fig:thm:maximin-nash-pareto}~\textup{(a)}.

	\begin{figure}[htbp]
		\centering
		\[
		\begin{array}{c@{\qquad\qquad}c}
			\begin{array}{c@{~~} | @{~~}c@{~~}c@{~~}c}
				& x & y & z\\ \hline
				x & 42,42 & 32,30 & 0,46\\
				y & 30,32 & 28,28 & 28,12\\
				z & 46,0  & 12,28 & 12,12
			\end{array}
			&
			\begin{array}{c@{~~} | @{~~}c@{~~}c@{~~}c}
				& x & y & z\\ \hline
				x & 41,41 & 33,29 & 1,45\\
				y & 29,33 & 29,29 & 29,13\\
				z & 45,1  & 13,29 & 13,13
			\end{array}
			\\[0.5em]
			\textup{(a) }G
			&
			\textup{(b) }\widehat G
		\end{array}
		\]
		\caption{An illustration of \cref{thm:maximin-nash-pareto}}
		\label{fig:thm:maximin-nash-pareto}
	\end{figure}
	
To apply \cref{thm:maximin-nash-pareto} to this particular game, add
$(-1,1,1)$ to player~1's pay-offs in columns $(x,y,z)$,
respectively, and make the symmetric adjustments to player~2. These
game-specific adjustments are chosen so that their expectation under
the equilibrium strategy $(1/2,1/4,1/4)$ is zero, thereby preserving
the selected equilibrium pay-off, while making the pure action $y$
yield the constant pay-off (29) against every opponent action. The
resulting game is shown in
Figure~\ref{fig:thm:maximin-nash-pareto}\textup{(b)}.

Because the adjustments depend only on the opponent's action, they
preserve all best-response comparisons. Hence the transformed game is
strategically equivalent to the game in
Figure~\ref{fig:3x3_main} and has the same unique Nash equilibrium,
$
\bar s
=
\left[
\left(\frac12,\frac14,\frac14\right),
\left(\frac12,\frac14,\frac14\right)
\right].
$

In \(G\), the equilibrium pay-off is
$
u_i(\bar s)=29.
$
The unique maximin strategy is the pure action \(y\), which guarantees
\(28\), and the mutual-maximin profile yields
$
u_i(y,y)=28.
$
In passing from \(G\) to \(\widehat G\), the expected
adjustment at the equilibrium is zero:
$
\frac12(-1)+\frac14(1)+\frac14(1)=0.
$
Thus
$
\widehat u_i(\bar s)=29.
$
Moreover, in \(\widehat G\), action \(y\) yields the constant pay-off
of 29. 
It is the unique maximin strategy and guarantees exactly
the equilibrium pay-off.
Consequently,
$
\widehat u(y,y)
=
\widehat u(\bar s)
=
(29,29).
$

Thus Nash equilibrium strictly Pareto dominates the maximin profile
in \(G\), whereas the construction in
\cref{thm:maximin-nash-pareto} makes the pure maximin strategy
guarantee exactly the selected equilibrium pay-off in \(\widehat G\).
\end{example}

\begin{theorem}[Maximin-Nash strict Pareto dominance characterisation]
\label{thm:maximin-pareto-characterisation}
Let $G$ be a finite two-player game, fix $\bar s\in\NE(G)$, and fix a candidate profile $m\in S_1\times S_2$. The following are equivalent:
\begin{enumerate}[label=\textup{(\alph*)}]
\item there is a strategically equivalent game $\widetilde G$ such that $m$ is a maximin profile and
$\widetilde u_i(m)>\widetilde u_i(\bar s)$ for all $i\in N$;

\item for each player $i$, there are a belief $\mu_{-i}\in S_{-i}$ and non-negative pay-off additions $r_i(a_{-i})\geq0$, one for each $a_{-i}\in A_{-i}$, such that
\begin{equation}
m_i\in\BR_i^G(\mu_{-i}),
\qquad
\sum_{a_{-i}\in A_{-i}}\mu_{-i}(a_{-i})r_i(a_{-i})=0,
\label{eq:selected-support-addition}
\end{equation}
and
\begin{equation}
\sum_{a_{-i}\in A_{-i}}
m_{-i}(a_{-i}) r_i(a_{-i})>\sum_{a_{-i}\in A_{-i}}\bar s_{-i}(a_{-i}) r_i(a_{-i}).
\label{eq:selected-separation}
\end{equation}
\end{enumerate}
Whenever these conditions hold, then for every pair $\ell_1,\ell_2>0$ the \(\widetilde G\) can additionally be chosen so that
$
\widetilde u_i(\bar s)=u_i(\bar s)$, 
$m_i\in M_i(\widetilde G)$, and 
$\widetilde u_i(m)>u_i(\bar s)+\ell_i$ for every $i\in N$.
\end{theorem}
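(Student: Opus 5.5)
The plan is to prove both directions directly by von Neumann's minimax theorem (\cref{rem:maximin-minimax}). The direction (b)$\Rightarrow$(a) rests on an explicit construction in which player $i$'s row pay-offs along $m_i$ are replaced by a scaled copy of $r_i$. Throughout I use that strategic equivalence preserves best responses (\cref{lem:strategic-equivalence}). Writing $R_i(\sigma):=\sum_{a_{-i}}\sigma(a_{-i})r_i(a_{-i})$ for $\sigma\in S_{-i}$, the separation condition \eqref{eq:selected-separation} reads $R_i(m_{-i})>R_i(\bar s_{-i})$.

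\emph{(a)$\Rightarrow$(b).} Fix $i$ and let $\widetilde v_i$ be $i$'s security level in $\widetilde G$. By the minimax theorem applied to $(\widetilde u_i,-\widetilde u_i)$, there is an opponent minimax strategy $\mu_{-i}$ such that $(m_i,\mu_{-i})$ is a saddle point. This gives three facts:
\begin{itemize}
\item $\widetilde u_i(m_i,a_{-i})\ge\widetilde v_i$ for every $a_{-i}$;
\item $\widetilde u_i(m_i,\mu_{-i})=\widetilde v_i$;
\item $m_i\in\BR_i^{\widetilde G}(\mu_{-i})=\BR_i^G(\mu_{-i})$.
\end{itemize}
Set $r_i(a_{-i}):=\widetilde u_i(m_i,a_{-i})-\widetilde v_i\ge 0$. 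Then \eqref{eq:selected-support-addition} holds. By bilinearity, $R_i(m_{-i})=\widetilde u_i(m)-\widetilde v_i$ and $R_i(\bar s_{-i})=\widetilde u_i(m_i,\bar s_{-i})-\widetilde v_i$. Since $\bar s$ is also a Nash equilibrium of $\widetilde G$, we have $\widetilde u_i(m_i,\bar s_{-i})\le\widetilde u_i(\bar s)<\widetilde u_i(m)$, which is \eqref{eq:selected-separation}.

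\emph{(b)$\Rightarrow$(a), with the refinement.} Given $\mu_{-i}$ and $r_i$, define for a scalar $t>0$ and constant $c_i$
\[
\widetilde u_i(a_i,a_{-i}):=u_i(a_i,a_{-i})-u_i(m_i,a_{-i})+t\,r_i(a_{-i})+c_i .
\]
The added term depends only on $a_{-i}$, so this is strategically equivalent to $G$. The construction gives $\widetilde u_i(m_i,a_{-i})=c_i+t\,r_i(a_{-i})\ge c_i$, so $m_i$ guarantees $c_i$. Conversely, $\mu_{-i}$ caps $i$ at
\[
\max_{a_i}\widetilde u_i(a_i,\mu_{-i})=\widetilde u_i(m_i,\mu_{-i})=c_i+t R_i(\mu_{-i})=c_i .
\]
The first equality holds because $m_i\in\BR_i^G(\mu_{-i})=\BR_i^{\widetilde G}(\mu_{-i})$, and the last because $R_i(\mu_{-i})=0$ by \eqref{eq:selected-support-addition}. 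Hence $\widetilde v_i=c_i$ and $m_i\in M_i(\widetilde G)$.

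Now choose $c_i:=u_i(m_i,\bar s_{-i})-tR_i(\bar s_{-i})$. This gives $\widetilde u_i(\bar s)=u_i(\bar s)$ exactly. It also gives
\[
\widetilde u_i(m)=c_i+tR_i(m_{-i})=u_i(m_i,\bar s_{-i})+t\bigl(R_i(m_{-i})-R_i(\bar s_{-i})\bigr).
\]
By \eqref{eq:selected-separation} the bracket is strictly positive, so taking $t$ large yields $\widetilde u_i(m)>u_i(\bar s)+\ell_i$ for any prescribed $\ell_i>0$. Doing this independently for both players (with possibly different $t$) gives (a) together with the refinement.

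The main obstacle is the maximin verification in the construction. One must show the security level is exactly $c_i$, not just that $m_i$ guarantees $c_i$. This is precisely where the best-response condition $m_i\in\BR_i^G(\mu_{-i})$ and the zero-mean condition on $r_i$ are both needed, and the minimax theorem closes the argument. The rest is bookkeeping with bilinearity.
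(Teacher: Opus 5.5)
Your proof is correct and follows essentially the same route as the paper: necessity uses the same additions $r_i(a_{-i})=\widetilde u_i(m_i,a_{-i})-\widetilde v_i$ together with a minimax belief $\mu_{-i}$, and sufficiency uses the same target-pay-off construction $w_i=c_i+t\,r_i$, with $c_i=u_i(m_i,\bar s_{-i})-tR_i(\bar s_{-i})$ calibrated so that $\widetilde u_i(\bar s)=u_i(\bar s)$. One small correction: in the sufficiency direction, showing $\widetilde v_i=c_i$ needs only the trivial inequality $\max\min\le\min\max$, not the full minimax theorem; that theorem is needed only in the necessity direction.
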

\noindent\emph{Proof.} See Appendix~\ref{app:selected-equilibrium}.

The characterisation separates the two roles of the opponent-dependent adjustment. First, \(m_i\) must be a best response to some belief \(\mu_{-i}\). The non-negative pay-off additions have expectation zero under this belief, so every opponent action in its support receives zero addition. These actions can therefore pin down the security level while the original best-response comparison ensures that no other strategy can guarantee more. Second, the additions must favour the opponent's strategy $m_{-i}$ more than the equilibrium strategy $\bar s_{-i}$, as stated in condition \eqref{eq:selected-separation}.

We next provide two simple sufficient conditions for the characterisation. The first applies to a given pure candidate profile: each candidate action must be a best response to some different pure opponent action, while the selected equilibrium must not place probability one on the candidate opponent action. The second shows that such a pure candidate profile necessarily exists when both players mix at the selected equilibrium and the game has at least the $2\times 3$ dimensional threshold.

\begin{corollary}
\label{cor:pure-selected-separation}
Fix a Nash equilibrium \(\bar s\) and a pure candidate profile \(a^M=(a_1^M,a_2^M)\). Suppose that, for each player \(i\), there is a pure opponent action \(b_{-i}\neq a_{-i}^M\) such that
$a_i^M\in\BR_i^G(b_{-i})$ and  $\bar s_{-i}(a_{-i}^M)<1.$
Then there is a strategically equivalent game in which \(a^M\) is a mutual-maximin profile and strictly Pareto dominates \(\bar s\). The selected equilibrium pay-off can be preserved, and the dominance margin can be made arbitrarily large. If \(a_i^M\) is the unique best response to \(b_{-i}\) for each player, then the constructed maximin strategies are unique.
\end{corollary}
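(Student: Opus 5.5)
The plan is to verify condition (b) of \cref{thm:maximin-pareto-characterisation} for the candidate profile $m=a^M$ and then read everything off that theorem. For each player $i$ I take the belief $\mu_{-i}$ to be the point mass on $b_{-i}$. The hypothesis $a_i^M\in\BR_i^G(b_{-i})$ gives $m_i\in\BR_i^G(\mu_{-i})$ directly.

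For the additions I set $r_i(a_{-i}^M)=1$ and $r_i(a_{-i})=0$ for every $a_{-i}\neq a_{-i}^M$. Then:
\begin{itemize}
\item Since $b_{-i}\neq a_{-i}^M$, the expectation under $\mu_{-i}$ is $r_i(b_{-i})=0$, so \eqref{eq:selected-support-addition} holds.
\item The left side of \eqref{eq:selected-separation} is $r_i(a_{-i}^M)=1$, while the right side is $\bar s_{-i}(a_{-i}^M)<1$ by hypothesis, so \eqref{eq:selected-separation} holds strictly.
\end{itemize}
With (b) established, \cref{thm:maximin-pareto-characterisation} yields a strategically equivalent $\widetilde G$ with $a^M_i\in M_i(\widetilde G)$, $\widetilde u_i(\bar s)=u_i(\bar s)$, and $\widetilde u_i(a^M)>u_i(\bar s)+\ell_i$ for any prescribed $\ell_i>0$. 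This gives the mutual-maximin property, preservation of the selected equilibrium pay-offs, and an arbitrarily large dominance margin.

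The main obstacle is the uniqueness claim, because the characterisation theorem only places $m_i$ in $M_i$ and does not say $M_i$ is a singleton. I would handle it by inspecting the construction in the appendix, which I expect has the form $\widetilde u_i(a_i,a_{-i})=u_i(a_i,a_{-i})+c_i(a_{-i})$. Here $c_i(b_{-i})$ is chosen so that $\widetilde u_i(a_i^M,b_{-i})$ equals the security level, and the additions $c_i(a_{-i})$ on other columns are large enough that $a_i^M$ earns at least that level everywhere. If the appendix construction differs, I would build such a game directly along these lines.

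Given that form, take any $s_i\in M_i(\widetilde G)$. It must guarantee $v_i(\widetilde G)=\widetilde u_i(a_i^M,b_{-i})$, and in particular against $b_{-i}$. Since $a_i^M$ is the unique best response to $b_{-i}$, every other pure action earns strictly less than $\widetilde u_i(a_i^M,b_{-i})$ against $b_{-i}$. Hence any $s_i$ putting positive weight off $a_i^M$ earns strictly less than $v_i$ against $b_{-i}$. So $s_i=a_i^M$, and $M_i(\widetilde G)=\{a_i^M\}$.

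The one point to check is that the security level really is attained at column $b_{-i}$, so that the column with zero addition pins down the value. This is the role played by the zero-expectation condition in the theorem, so I expect it to carry over.
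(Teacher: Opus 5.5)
Your proposal is correct and follows the paper's proof: you use the same point-mass belief on $b_{-i}$ and the same indicator addition on $a_{-i}^M$ to verify condition (b) of \cref{thm:maximin-pareto-characterisation}, with separation $1-\bar s_{-i}(a_{-i}^M)>0$. For uniqueness, the paper simply invokes the uniqueness clause of \cref{lem:maximin-implementation}. The construction applies that lemma with target $w_i(b_{-i})=\nu_i+K_i r_i(b_{-i})=\nu_i=v_i(\widetilde G)$, so the point you flag does hold, and your column-$b_{-i}$ argument is exactly that lemma's proof.
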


\noindent\emph{Proof.} See Appendix~\ref{app:selected-equilibrium}.

\begin{corollary}
\label{cor:mixed-selected-separation}
Suppose each player has at least two actions, at least one player has at least three actions, and the selected Nash equilibrium \(\bar s\) satisfies $|\supp(\bar s_i)|\geq2$ for every $i\in N$. 
Then, there is a strategically equivalent game with a pure mutual-maximin profile that strictly Pareto dominates \(\bar s\) by any prescribed positive margins, while preserving the selected equilibrium pay-off vector.
\end{corollary}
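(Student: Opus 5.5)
The plan is to reduce the statement to \cref{cor:pure-selected-separation}. Its conclusion (a strategically equivalent game with a pure mutual-maximin profile strictly Pareto dominating $\bar s$, with the selected equilibrium pay-offs preserved and arbitrary positive dominance margins) is exactly what is claimed here. So it suffices to exhibit a pure candidate profile $a^M=(a_1^M,a_2^M)$ satisfying that corollary's hypotheses.

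The second hypothesis, $\bar s_{-i}(a^M_{-i})<1$, is automatic. Since $|\supp(\bar s_{-i})|\ge 2$, the opponent's equilibrium strategy places probability strictly below one on every pure action. The substantive work is therefore purely combinatorial. We must find $a_1^M,a_2^M,b_1,b_2$ with
\[
a_1^M\in\BR_1^G(b_2),\quad b_2\neq a_2^M,\qquad a_2^M\in\BR_2^G(b_1),\quad b_1\neq a_1^M .
\]
To do this, fix selections $f_1\colon A_2\to A_1$ and $f_2\colon A_1\to A_2$ with $f_i(b)\in\BR_i^G(b)$. These exist because pure best responses to pure actions exist in a finite game. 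It then suffices to find $b_2\in A_2$ and $b_1\in A_1$ with
\[
b_1\neq f_1(b_2)\quad\text{and}\quad f_2(b_1)\neq b_2,
\]
and to set $a_1^M=f_1(b_2)$ and $a_2^M=f_2(b_1)$.

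The key step, and the only real obstacle, is showing that such a pair exists under the dimension hypothesis. I would argue by contradiction. Suppose that for every $b_2\in A_2$, every $b_1\neq f_1(b_2)$ satisfies $f_2(b_1)=b_2$. Since $|A_1|\ge 2$, each $b_2$ then lies in the image of $f_2$. Consider two cases.

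\emph{Case $|A_2|\ge3$.} If $|A_1|=2$, the image of $f_2$ has at most two elements and cannot cover three actions of player 2, a contradiction. If $|A_1|\ge3$, take distinct $b_2,b_2'$. Some $b_1$ differs from both $f_1(b_2)$ and $f_1(b_2')$, so the assumption forces $f_2(b_1)=b_2$ and $f_2(b_1)=b_2'$ simultaneously, again a contradiction.

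\emph{Case $|A_1|\ge3$, $|A_2|\ge 2$.} Take distinct $b_2,b_2'\in A_2$. Since $|A_1|\ge3$, there is $b_1\notin\{f_1(b_2),f_1(b_2')\}$, and the same double assignment $f_2(b_1)=b_2=b_2'$ is impossible.

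In either case the required pair exists, and applying \cref{cor:pure-selected-separation} finishes the proof. The step most likely to need care is making sure the pure-action best-response requirement in \cref{cor:pure-selected-separation} is met by the selections $f_i$ with the exact distinctness conditions, so that the reduction is airtight.
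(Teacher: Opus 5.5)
Your proof is correct and follows the paper's own route. The paper also takes a crossed pair with $a_i^M\in\BR_i^G(b_{-i})$ and $a_i^M\neq b_i$, notes that $|\supp(\bar s_{-i})|\geq 2$ forces $\bar s_{-i}(a^M_{-i})<1$, and invokes \cref{cor:pure-selected-separation}. The only difference is how the crossed pair is obtained: the paper cites \cref{lem:crossed-best-responses}, which uses a counting argument (the two selected best-response graphs contain at most $|A_1|+|A_2|<|A_1||A_2|$ profiles), whereas your surjectivity and double-assignment case analysis reaches the same conclusion and is equally valid.
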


\noindent\emph{Proof.} See Appendix~\ref{app:selected-equilibrium}.

\begin{theorem}[Maximin-all-Nash Pareto dominance characterisation]
\label{thm:full-weak-dominance-characterisation}
Fix a candidate profile $m\in S_1\times S_2$. There is a strategically equivalent game $\widetilde G$ such that for every $i\in N$ and $s\in\NE(G)$, 
$m_i\in M_i(\widetilde G)$ and $ \widetilde u_i(m)\geq\widetilde u_i(s)$  if and only if, for each player $i$, there are a belief $\mu_{-i}\in S_{-i}$ and non-negative pay-off additions $r_i(a_{-i})\geq0$, one for each $a_{-i}\in A_{-i}$, such that
\begin{equation}
m_i\in\BR_i^G(\mu_{-i}),
\qquad
\sum_{a_{-i}\in A_{-i}}\mu_{-i}(a_{-i})r_i(a_{-i})=0,
\label{eq:full-support-addition}
\end{equation}
and
\begin{equation}
\sum_{a_{-i}\in A_{-i}}
\bigl(m_{-i}(a_{-i})-s_{-i}(a_{-i})\bigr)r_i(a_{-i})
\geq u_i(s)-u_i(m_i,s_{-i})
\qquad
\text{for every }s\in\NE(G).
\label{eq:full-cardinal-inequality}
\end{equation}
Whenever these conditions hold, the security levels may be prescribed arbitrarily. The game $\widetilde G$ can be chosen so that, at every $s\in\NE(G)$,
\begin{equation}
\widetilde u_i(m)-\widetilde u_i(s)
=
\sum_{a_{-i}\in A_{-i}}
\bigl(m_{-i}(a_{-i})-s_{-i}(a_{-i})\bigr)r_i(a_{-i})
-\bigl[u_i(s)-u_i(m_i,s_{-i})\bigr].
\label{eq:full-difference}
\end{equation}
\end{theorem}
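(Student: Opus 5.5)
The plan is to reduce everything to one bookkeeping identity for strategically equivalent games, and then handle the two directions separately. Sufficiency is an explicit construction of the additions $h_i$; necessity uses the minimax theorem (\cref{rem:maximin-minimax}) in $\widetilde G$.

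\emph{The identity.} Let $\widetilde u_i=u_i+h_i(a_{-i})$, and let $c_i$ be the security level of player $i$ in $\widetilde G$. Define
$$r_i(a_{-i}):=\widetilde u_i(m_i,a_{-i})-c_i .$$
This is well defined for mixed $m_i$ by bilinear extension. For any $s\in\NE(G)$, split the pay-off gap as
$$\widetilde u_i(m)-\widetilde u_i(s)=\bigl[\widetilde u_i(m_i,m_{-i})-\widetilde u_i(m_i,s_{-i})\bigr]+\bigl[\widetilde u_i(m_i,s_{-i})-\widetilde u_i(s)\bigr].$$
The second bracket compares two profiles with the same opponent strategy $s_{-i}$, so the term $\sum s_{-i}h_i$ cancels. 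It therefore equals $u_i(m_i,s_{-i})-u_i(s)$. The first bracket is linear in the opponent's mixed strategy, and the constant $c_i$ cancels because $m_{-i}$ and $s_{-i}$ are both probability vectors. It therefore equals $\sum_{a_{-i}}(m_{-i}(a_{-i})-s_{-i}(a_{-i}))r_i(a_{-i})$. This gives \eqref{eq:full-difference}. Consequently the dominance requirement $\widetilde u_i(m)\ge\widetilde u_i(s)$ for all $s\in\NE(G)$ is exactly \eqref{eq:full-cardinal-inequality}. Here $\NE(G)=\NE(\widetilde G)$ by \cref{lem:strategic-equivalence}.

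\emph{Necessity.} Suppose $m_i\in M_i(\widetilde G)$ with value $c_i$.
\begin{enumerate*}[label=(\roman*)]
\item Since $m_i$ guarantees $c_i$, we have $r_i\ge0$.
\item By \cref{rem:maximin-minimax}, there is an opponent minimax strategy $\mu_{-i}$ with respect to $\widetilde u_i$ such that $\max_{t_i}\widetilde u_i(t_i,\mu_{-i})=c_i$.
\item Since $\widetilde u_i(m_i,\mu_{-i})\ge c_i$, $m_i$ is a best response to $\mu_{-i}$ in $\widetilde G$. Hence it is also a best response in $G$, by \cref{lem:strategic-equivalence}.
\item Finally, $\sum\mu_{-i}r_i=\widetilde u_i(m_i,\mu_{-i})-c_i=0$, which is \eqref{eq:full-support-addition}.
\end{enumerate*}
Condition \eqref{eq:full-cardinal-inequality} then follows from the identity.

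\emph{Sufficiency.} Take the given $\mu_{-i}$ and $r_i$, and an arbitrary target level $c_i$. Define
$$h_i(a_{-i}):=c_i+r_i(a_{-i})-u_i(m_i,a_{-i}),$$
so that $\widetilde u_i(m_i,a_{-i})=c_i+r_i(a_{-i})\ge c_i$ for every $a_{-i}$. Thus $m_i$ guarantees $c_i$.

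For any $t_i\in S_i$, bound its guarantee by its pay-off against $\mu_{-i}$ and use $m_i\in\BR_i^G(\mu_{-i})$:
$$\min_{a_{-i}}\widetilde u_i(t_i,a_{-i})\le\widetilde u_i(t_i,\mu_{-i})=u_i(t_i,\mu_{-i})+\textstyle\sum\mu_{-i}h_i\le u_i(m_i,\mu_{-i})+\sum\mu_{-i}h_i=\widetilde u_i(m_i,\mu_{-i})=c_i+\sum\mu_{-i}r_i=c_i.$$
Hence the security level is exactly $c_i$ and $m_i\in M_i(\widetilde G)$. The value $c_i$ was arbitrary, so the security levels can indeed be prescribed. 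With these $r_i$, the identity yields \eqref{eq:full-difference}, and \eqref{eq:full-cardinal-inequality} gives weak dominance over every Nash equilibrium.

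\emph{Main obstacle.} The delicate point is getting the decomposition right, namely that the gap involves $u_i(m_i,s_{-i})$ rather than $u_i(m)$. It must be checked carefully that the constant $c_i$ and the $h_i$ terms cancel. The other step needing care is transferring the best-response property between $G$ and $\widetilde G$ in the necessity direction, where the minimax theorem supplies the belief $\mu_{-i}$.
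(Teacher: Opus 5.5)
Your proposal is correct and follows essentially the same route as the paper. Sufficiency implements the target pay-off $c_i+r_i(a_{-i})$ for $m_i$, which is the paper's maximin-implementation lemma with $K_i=1$ and which you inline. Necessity sets $r_i$ equal to $m_i$'s transformed pay-off minus the security level and takes the belief $\mu_{-i}$ from minimax duality; your two-bracket decomposition is the same difference identity the paper obtains from its target-pay-off identity, and it closes both directions.
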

\noindent\emph{Proof.} See Appendix~\ref{app:full-weak-dominance-characterisation}.

Here the only additional requirement compared with \cref{thm:maximin-pareto-characterisation} is that the same construction must work against every Nash equilibrium, rather than against one selected equilibrium. Thus, for each $s\in\NE(G)$, the expected pay-off addition advantage of $m_{-i}$ over $s_{-i}$ must be at least as large as the original best-response advantage of the equilibrium strategy over $m_i$ against $s_{-i}$, as stated in \eqref{eq:full-cardinal-inequality}. Equation~\eqref{eq:full-difference} records the resulting transformed pay-off difference at each equilibrium.

\begin{remark}[Strict dominance of the entire Nash set]
\label{cor:strict-full-dominance-characterisation}
The strict version of \cref{thm:full-weak-dominance-characterisation} changes only the comparison in \eqref{eq:full-cardinal-inequality}: weak inequality is replaced by strict inequality for every $s\in\NE(G)$. Equivalently, because the non-negative pay-off additions may be multiplied by any positive constant, it is enough that, for each player $i$, the expected pay-off addition under $m_{-i}$ be strictly greater than under every equilibrium marginal $s_{-i}$. The security levels and any strictly positive dominance margins may then be prescribed arbitrarily.
\end{remark}
\noindent\emph{Proof.} See Appendix~\ref{app:strict-full-dominance-characterisation}.

\begin{corollary}[Strict Pareto dominance when all Nash equilibria are mixed]
\label{cor:all-mixed-full-dominance}
Suppose $|A_i|\geq2$ for both players, at least one player has at least three actions, and no Nash equilibrium strategy is pure. Then, for every prescribed pair of security levels $(\nu_1,\nu_2)\in\R^2$ and every pair of margins $\ell_1,\ell_2>0$, there exist a pure profile $a^M$ and a strategically equivalent game $\widetilde G$ such that
$a_i^M\in M_i(\widetilde G)$,
$v_i(\widetilde G)=\nu_i,$
and $\widetilde u_i(a^M)>\widetilde u_i(s)+\ell_i$ for every $s\in\NE(G)$ and $i\in N$.

\end{corollary}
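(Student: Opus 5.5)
The plan is to reduce everything to the strict version of \cref{thm:full-weak-dominance-characterisation} recorded in \cref{cor:strict-full-dominance-characterisation}, using pure beliefs and indicator-type pay-off additions in the spirit of \cref{cor:pure-selected-separation}. I will look for a pure profile $a^M=(a_1^M,a_2^M)$ and, for each player $i$, a pure opponent action $b_{-i}\neq a^M_{-i}$ with $a_i^M\in\BR_i^G(b_{-i})$. Given such objects, set $\mu_{-i}=\delta_{b_{-i}}$ and $r_i=\mathbf 1_{\{a^M_{-i}\}}$, so that $r_i\ge0$ and $r_i(b_{-i})=0$. Then \eqref{eq:full-support-addition} holds.

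The expected addition under $m_{-i}=a^M_{-i}$ equals $1$, while under any equilibrium marginal $s_{-i}$ it equals $s_{-i}(a^M_{-i})$. Because no Nash equilibrium strategy is pure, $s_{-i}(a^M_{-i})<1$ for every $s\in\NE(G)$. The set $\NE(G)$ is compact and $s\mapsto s_{-i}(a^M_{-i})$ is continuous, so
\[
\max_{s\in\NE(G)} s_{-i}(a^M_{-i})<1 .
\]
This gives a uniform positive gap. Scaling $r_i$ by a large constant $\lambda_i$ makes the left side of \eqref{eq:full-cardinal-inequality} exceed the bounded quantity
\[
\max_{s\in\NE(G)}\bigl[u_i(s)-u_i(a_i^M,s_{-i})\bigr]+\ell_i .
\]
By \eqref{eq:full-difference} this yields $\widetilde u_i(a^M)>\widetilde u_i(s)+\ell_i$ for all $s\in\NE(G)$. \Cref{cor:strict-full-dominance-characterisation} (equivalently, the final clause of \cref{thm:full-weak-dominance-characterisation}) then lets the security levels be prescribed as $(\nu_1,\nu_2)$. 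Here I rely on the appendix construction to allow the levels and margins to be set simultaneously, which that remark asserts.

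The main obstacle is the purely combinatorial step of finding the pure profile with the required off-diagonal best-response structure, since the requirements are circular. We need $a_2^M\in\BR_2(b_1)$ with $b_1\neq a_1^M$, and $a_1^M\in\BR_1(b_2)$ with $b_2\neq a_2^M$. Without loss of generality let $|A_2|\ge3$ and $|A_1|\ge2$; the other case is symmetric with the roles of the players swapped. The argument proceeds in two steps.

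\begin{enumerate}[label=\textup{(\arabic*)}]
\item First attempt: pick any $b_1\in A_1$ and some $a_2\in\BR_2(b_1)$. Since $|A_2|\ge3$, at least two actions of player 2 differ from $a_2$. If some such $b_2$ has a pure best response $a_1\neq b_1$, we are done with $a^M=(a_1,a_2)$.
\item Otherwise $b_1\in\BR_1(b_2)$ for every $b_2\neq a_2$. Now pick $b_1'\neq b_1$, which exists since $|A_1|\ge2$, and take $a_2'\in\BR_2(b_1')$. Among the at least two actions different from $a_2$, at least one, call it $b_2$, also differs from $a_2'$. We already know $b_1\in\BR_1(b_2)$. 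Hence the profile $a^M=(b_1,a_2')$ with witnesses $b_1'\neq b_1$ and $b_2\neq a_2'$ works.
\end{enumerate}

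I expect verifying this case split carefully, and in particular making sure the ``$\ge 3$ actions'' hypothesis is exactly what rescues the second case, to be the only delicate point. The remaining steps are direct applications of the earlier characterisation.
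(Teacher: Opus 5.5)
Your proposal is correct and follows the paper's proof. Like the paper, you use pure beliefs $\delta_{b_{-i}}$ and indicator additions $r_i=\mathbf 1_{\{a^M_{-i}\}}$, obtain the gap $1-s_{-i}(a^M_{-i})>0$ and make it uniform by compactness of $\NE(G)$, and then appeal to \cref{cor:strict-full-dominance-characterisation}. The only difference is the combinatorial step: the paper obtains it from \cref{lem:crossed-best-responses} by a counting argument (the two best-response graphs cover at most $|A_1|+|A_2|<|A_1||A_2|$ profiles), whereas your two-case construction proves the same crossed-best-response fact directly and is also valid.
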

\noindent\emph{Proof.} See Appendix~\ref{app:strict-full-dominance-characterisation}.

\subsection{Robustness theorems}
\label{subsec:robustness}

The preceding results establish that neither Pareto ranking is tied to a particular best-response structure, but they do not indicate whether the rankings persist under small changes in pay-offs. We consider this question next.

\begin{definition}[Dominance classes]
\label{def:dominance-classes-embedding}
Let $\mathcal G$ denote the class of all finite two-player normal-form games with real-valued pay-offs. Let $\mathcal C_M$ be the class of games in which some maximin profile strictly Pareto dominates every Nash equilibrium, and let $\mathcal C_N$ be the class of games in which some Nash equilibrium strictly Pareto dominates every maximin profile.
\end{definition}

Fix positive integers $m$ and $n$. After ordering the action sets, let $\mathcal G_{m,n}$ denote the space of $m\times n$ games, identified with $\R^{2mn}$ and equipped with its Euclidean topology. This identification lists the $2mn$ pure-profile pay-off numbers, so $G^k\to G$ means entrywise convergence of all those numbers. When dependence on the game matters, write $u_i^G$ and $M_i(G)$. Define the fixed-dimensional dominance classes by
\[
\mathcal C_M^{m,n}:=\mathcal C_M\cap\mathcal G_{m,n},
\qquad
\mathcal C_N^{m,n}:=\mathcal C_N\cap\mathcal G_{m,n}.
\]

Recall that membership in $\mathcal C_M$ requires the existence of one maximin profile that strictly Pareto dominates every Nash equilibrium, while membership in $\mathcal C_N$ requires the existence of one Nash equilibrium that strictly Pareto dominates every maximin profile. This quantifier structure is used in part \textup{(iii)} below.

\begin{definition}[Uniform dominance classes]
\label{def:uniform-dominance-classes}
For fixed $(m,n)$, let
\[
\mathcal U_M^{m,n}
:=
\left\{
	G\in\mathcal G_{m,n}:
	u_i^G(s^M)>u_i^G(s^N)
	~\forall i\in N,\ s^M\in M_1(G)\times M_2(G),\ s^N\in\NE(G)
	\right\},
\]
\[
\mathcal U_N^{m,n}
:=
\left\{
	G\in\mathcal G_{m,n}:
	u_i^G(s^N)>u_i^G(s^M)
	~\forall i\in N,\ s^N\in\NE(G),\ s^M\in M_1(G)\times M_2(G)
	\right\}.
\]
\end{definition}

The uniform classes require the relevant strict comparison to hold for every profile selected by the two solution concepts, rather than for one profile on the dominant side. Hence $\mathcal U_M^{m,n}\subseteq\mathcal C_M^{m,n}$ and $\mathcal U_N^{m,n}\subseteq\mathcal C_N^{m,n}$. For any $E\subseteq\mathcal G_{m,n}$, $\operatorname{int}(E)$ denotes the interior of $E$ relative to $\mathcal G_{m,n}$, that is, the set of games around which some open ball is contained in $E$. The symbol $\lambda_{2mn}$ denotes $2mn$-dimensional Lebesgue measure on $\mathcal G_{m,n}\cong\R^{2mn}$, and positive $\lambda_{2mn}$-measure means that a set occupies positive full-dimensional volume. To avoid confusion, any use below of ``generic'' or ``non-generic'' is measure-theoretic: it means, respectively, outside or within a measure-zero subset of the pay-off space.

\begin{theorem}[Main robustness theorem]
\label{thm:main-robustness}
For every pair of positive integers $(m,n)$:
\begin{enumerate}[label=\textup{(\roman*)}]
\item $\mathcal U_M^{m,n}$ and $\mathcal U_N^{m,n}$ are open in $\mathcal G_{m,n}$.
\item $\mathcal C_M^{m,n}$ and $\mathcal C_N^{m,n}$ are Lebesgue measurable and
\[
\lambda_{2mn}\!\left(\mathcal C_M^{m,n}\setminus\operatorname{int}(\mathcal C_M^{m,n})\right)=0,
\qquad
\lambda_{2mn}\!\left(\mathcal C_N^{m,n}\setminus\operatorname{int}(\mathcal C_N^{m,n})\right)=0.
\]
\item
$\lambda_{2mn}\!\left(\mathcal C_M^{m,n}\setminus\mathcal U_M^{m,n}\right)=0$, $ 
\mathcal U_M^{m,n}\subseteq\operatorname{int}(\mathcal C_M^{m,n})\subseteq\mathcal C_M^{m,n}$, and 
$\mathcal U_N^{m,n}\subseteq\operatorname{int}(\mathcal C_N^{m,n})\subseteq\mathcal C_N^{m,n}$.
\end{enumerate}
\end{theorem}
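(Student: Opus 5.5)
The plan is to get part (i) from closed-graph properties of the two solution correspondences. Parts (ii) and (iii) then follow from (i) together with a generic-regularity argument that removes the gap between the existential classes $\mathcal C$ and the uniform classes $\mathcal U$.

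\emph{Step 1 (closed graphs).} The security level $v_i(G)=\max_{s_i}\min_{t_{-i}}u_i^G(s_i,t_{-i})$ is continuous in $G\in\mathcal G_{m,n}$, since it is a max–min of a function jointly continuous in $(G,s)$ over compact sets. Moreover $s_i\in M_i(G)$ if and only if $\min_{t_{-i}}u_i^G(s_i,t_{-i})=v_i(G)$. Both sides of this equation are continuous, so the graph $\{(G,s_i):s_i\in M_i(G)\}$ is closed. The Nash correspondence $G\mapsto\NE(G)$ also has a closed graph, by the standard argument that the weak best-reply inequalities pass to limits.

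\emph{Step 2 (part (i)).} Suppose $G\in\mathcal U_M^{m,n}$ but some sequence $G^k\to G$ lies outside $\mathcal U_M^{m,n}$. Then there are $s^{M,k}\in M_1(G^k)\times M_2(G^k)$, $s^{N,k}\in\NE(G^k)$ and a player $i_k$ with $u_{i_k}^{G^k}(s^{M,k})\le u_{i_k}^{G^k}(s^{N,k})$. Passing to a subsequence, $i_k\equiv i$ is constant and the profiles converge by compactness of $S$. By Step 1 the limits are a maximin profile and a Nash equilibrium of $G$. Joint continuity of $(G,s)\mapsto u_i^G(s)$ gives $u_i^G(s^M)\le u_i^G(s^N)$, contradicting $G\in\mathcal U_M^{m,n}$. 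The argument for $\mathcal U_N^{m,n}$ is identical with the roles of the two concepts reversed. Since $\mathcal U\subseteq\mathcal C$ and $\mathcal U$ is open, we immediately get $\mathcal U_M^{m,n}\subseteq\operatorname{int}(\mathcal C_M^{m,n})$ and $\mathcal U_N^{m,n}\subseteq\operatorname{int}(\mathcal C_N^{m,n})$.

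\emph{Step 3 (generic set $Z^c$).} I would exhibit a closed null set $Z\subseteq\mathcal G_{m,n}$ outside of which two properties hold.
\begin{enumerate*}[label=(\alph*)]
\item Every Nash equilibrium is regular, hence there are finitely many and each moves continuously with $G$, by the implicit function theorem (Harsanyi; Wilson).
\item Each $M_i(G)$ is a singleton.
\end{enumerate*}
For (b), I would write the maximin problem as the linear programme
\[
\max\ v \quad\text{s.t.}\quad \textstyle\sum_{a_i}s_i(a_i)u_i(a_i,a_{-i})\ge v \ \ \forall a_{-i}, \qquad s_i\in S_i.
\]
Non-uniqueness of its optimum forces either a degenerate optimal basis or a zero reduced cost in some optimal basis. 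For each of the finitely many bases, either condition is a nontrivial polynomial identity in the pay-off entries, so its zero set is null. Uniqueness of the solution of a parametric LP then also gives continuity of $G\mapsto M_i(G)$ on $Z^c$. \textbf{I expect this step to be the main obstacle.} It requires checking that each polynomial is not identically zero, for example by exhibiting one game violating it, and that the degeneracy sets cover every failure of uniqueness.

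\emph{Step 4 (parts (ii) and (iii)).} On $Z^c$ the maximin profile $s^M$ is unique, so "some maximin profile dominates every NE" coincides with "every maximin profile dominates every NE". Hence $\mathcal C_M^{m,n}\setminus\mathcal U_M^{m,n}\subseteq Z$, which is the measure statement in (iii). Then $\mathcal C_M^{m,n}\setminus\operatorname{int}(\mathcal C_M^{m,n})\subseteq\mathcal C_M^{m,n}\setminus\mathcal U_M^{m,n}$ is null.

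For $\mathcal C_N^{m,n}$, take $G\in\mathcal C_N^{m,n}\cap Z^c$ with an equilibrium $s^N$ satisfying $u(s^N)\gg u(s^M)$. By (a) the equilibrium $s^N$ has a continuous local continuation $s^N(G')$, and by (b) the maximin profile $s^M(G')$ is continuous. So the strict inequalities persist on a neighbourhood, giving $G\in\operatorname{int}(\mathcal C_N^{m,n})$, and therefore $\mathcal C_N^{m,n}\setminus\operatorname{int}(\mathcal C_N^{m,n})\subseteq Z$.

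Finally, each class $\mathcal C$ is the union of the open set $\operatorname{int}(\mathcal C)$ and a subset of the null set $Z$. Completeness of Lebesgue measure then makes $\mathcal C_M^{m,n}$ and $\mathcal C_N^{m,n}$ measurable, completing (ii).
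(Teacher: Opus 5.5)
Your proof is correct. For part (i) and the inclusions $\mathcal U\subseteq\operatorname{int}(\mathcal C)$ it matches the paper's argument in sequential form; the paper packages the same compactness argument as lower semicontinuity of $\Delta_M$ and $\Delta_N$.

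You genuinely diverge in part (ii) and in the null-set claim of (iii). The paper never constructs an exceptional set. It writes $\mathcal C_M^{m,n}$, $\mathcal C_N^{m,n}$ and $\mathcal D_i^{m,n}=\{G: M_i(G)\text{ is not a singleton}\}$ as first-order formulas over the reals. Tarski--Seidenberg then makes them semialgebraic, hence measurable. Cell decomposition shows that the part of a semialgebraic set outside its interior is lower-dimensional, hence null. For $\mathcal D_i^{m,n}$, the paper gets empty interior from the Bohnenblust--Karlin--Shapley density theorem and upgrades this to measure zero by the same dimension fact.

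You instead exhibit one closed null set $Z$: games with an irregular equilibrium (via Harsanyi's generic-regularity theorem) or a degenerate maximin linear programme. Off $Z$, membership in either class is locally stable, and measurability follows from completeness of Lebesgue measure. Your route buys an explicit, class-independent exceptional set and a genuine continuation argument, with no real-algebraic machinery. The paper's route is softer, needs no regularity theory for Nash equilibria, and proves the stronger fact that the classes are semialgebraic.

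The LP bookkeeping you flag as the main obstacle does go through:
\begin{itemize}
\item A basis omitting the free variable $v$ is never optimal, because the reduced cost of $v$ is $1$.
\item Take a basis containing $v$, with $K$ the basic own actions and $J$ the tight opponent actions. After clearing the basis determinant, every basic value and reduced cost is either an equalising weight of the square primal or dual system on $K\times J$, or affine in a pay-off row or column outside that block. So none of these polynomials vanishes identically; for instance, $U_{K\times J}=I$ gives weights $1/|K|$.
\end{itemize}

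Finally, the $\mathcal C_N^{m,n}$ step needs only property (a). Upper hemicontinuity of $M_i$, which follows from your closed-graph Step 1, already keeps all nearby maximin profiles strictly dominated by the continued equilibrium. Uniqueness (b) is needed only for $\mathcal C_M^{m,n}$.
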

\noindent\emph{Proof.} See Appendix~\ref{app:robustness-theorems}.

Part \textup{(i)} says that the stronger uniform rankings survive all sufficiently small perturbations of the pay-off entries. Part \textup{(ii)} says that the larger existential classes may contain non-interior points, but those points occupy no full-dimensional volume. Part \textup{(iii)} gives a sharper conclusion on the maximin side: apart from a Lebesgue-null set, a game in $\mathcal C_M^{m,n}$ already satisfies the comparison for every maximin profile.

\begin{theorem}[Dimensional robustness theorem]
\label{thm:dimensional-robustness}
For every pair of positive integers $(m,n)$, the following are equivalent:
\begin{enumerate*}[label=\textup{(\roman*)}]
\item $\mathcal U_M^{m,n}\neq\varnothing$
\item $\operatorname{int}(\mathcal C_M^{m,n})\neq\varnothing$
\item $\lambda_{2mn}(\mathcal C_M^{m,n})>0$
\item $\min\{m,n\}\geq2$ and $\max\{m,n\}\geq3$.
\end{enumerate*}
\end{theorem}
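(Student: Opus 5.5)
The plan is to prove the cycle of implications (i)$\Rightarrow$(ii)$\Rightarrow$(iii)$\Rightarrow$(iv)$\Rightarrow$(i).

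\emph{The first two implications.} By \cref{thm:main-robustness}(i), $\mathcal U_M^{m,n}$ is open. By \cref{thm:main-robustness}(iii), $\mathcal U_M^{m,n}\subseteq\operatorname{int}(\mathcal C_M^{m,n})$. Hence a non-empty $\mathcal U_M^{m,n}$ gives a non-empty interior, which is (i)$\Rightarrow$(ii). A non-empty open subset of $\R^{2mn}$ has positive Lebesgue measure, and $\mathcal C_M^{m,n}$ is measurable by \cref{thm:main-robustness}(ii). So (ii)$\Rightarrow$(iii).

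\emph{The implication (iii)$\Rightarrow$(iv).} I argue by contrapositive and show that $\mathcal C_M^{m,n}$ is Lebesgue-null outside the stated dimensions. There are two cases.

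\begin{enumerate}
\item \textbf{One player has a single action, say $m=1$.} Player 1's only strategy is trivially both maximin and Nash. Player 2's best responses to that action are the maximisers of $u_2(a_1,\cdot)$. Any maximin profile has first component $a_1$, so its pay-off to player 2 is at most $\max u_2(a_1,\cdot)$, which equals player 2's pay-off at any Nash equilibrium. This is the argument of \cref{rem:one-NE-strategy}. Hence no maximin profile strictly dominates a Nash equilibrium, and $\mathcal C_M^{m,n}=\varnothing$.
\item \textbf{The $2\times2$ case.} I appeal to \cref{lemma:2x2-maximin-no-strict-pareto} from the appendix, which states that mutual maximin cannot strictly Pareto dominate all Nash equilibria. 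If that lemma holds only generically, a null exceptional set still suffices for (iii) to fail.
\end{enumerate}

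If I had to prove the $2\times2$ case myself, I would proceed as follows. Off a null set, either some player has a strictly dominant action, or the game has a unique completely mixed equilibrium, or it has two pure equilibria.

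\begin{itemize}
\item \textbf{A dominant action exists.} The maximin action then coincides with the Nash component for that player, and \cref{rem:one-NE-strategy} applies.
\item \textbf{A unique completely mixed equilibrium.} In a $2\times2$ game with no pure equilibrium, each player's maximin strategy equalises their own pay-offs across the opponent's actions. The player then gets exactly $v_i$ regardless of the opponent, which equals their Nash pay-off, so strict dominance fails.
\item \textbf{Two pure equilibria (coordination or anticoordination).} I would check directly that the maximin profile cannot strictly beat both equilibria.
\end{itemize}

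\emph{The implication (iv)$\Rightarrow$(i).} I need one explicit game in $\mathcal U_M^{m,n}$ for every $\min\{m,n\}\ge2$ and $\max\{m,n\}\ge3$. I first build a $2\times3$ (equivalently $3\times2$) game with three properties: a unique Nash equilibrium, in which both players mix; unique maximin strategies; and a maximin profile that strictly Pareto dominates that equilibrium. \cref{cor:all-mixed-full-dominance}, or \cref{cor:pure-selected-separation} with unique best responses, supplies exactly such a game from any $2\times3$ game whose unique equilibrium is completely mixed on the relevant supports. Uniqueness of both solution concepts makes the game lie in $\mathcal U_M^{2,3}$.

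To pass to general $(m,n)$, I pad with extra actions that are strictly dominated by a large margin for their owner. These actions give their owner very low pay-offs, so they are never played in equilibrium and are never maximin. They also give the opponent pay-offs low enough that they do not alter the opponent's maximin problem. I will have to verify the last point carefully, because the opponent's security level is a minimum over the padded player's actions. The fix is to give the opponent a very high pay-off at the padded actions, so they never become the opponent's worst case. By iterated strict dominance the Nash set is unchanged, so the padded game lies in $\mathcal U_M^{m,n}$.

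\emph{Main obstacle.} I expect the main difficulty to be twofold: confirming rigorously that the $2\times2$ exclusion holds on a full-measure set, and making the padding argument preserve both the Nash set and the maximin sets at the same time.
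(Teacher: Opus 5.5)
Your proof is correct, and its skeleton is the paper's. $\mathcal C_M^{m,n}$ is empty when a player has a single action (your \cref{rem:one-NE-strategy} argument) or when $m=n=2$ by \cref{lemma:2x2-maximin-no-strict-pareto}. That lemma is stated and proved for \emph{every} $2\times2$ game, so your hedge about a null exceptional set is unnecessary. Openness of $\mathcal U_M^{m,n}$ gives (i)$\Rightarrow$(ii)$\Rightarrow$(iii). Your padding is exactly \cref{lem:rectangular-padding}, which you can cite rather than re-verify.

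The one genuine difference is the seed game in $\mathcal U_M^{2,3}$. The paper exhibits the explicit $2\times3$ game in the remark following \cref{lemma:2x2-maximin-no-strict-pareto} and checks directly that its unique maximin profile strictly Pareto dominates its unique equilibrium. You instead manufacture a seed by applying \cref{cor:pure-selected-separation} to a $2\times3$ game whose unique equilibrium has both players mixing. This works provided the crossed best responses $a_i^M\in\BR_i^G(b_{-i})$ from \cref{lem:crossed-best-responses} are unique, for example because the seed is weakly generic. Only then does \cref{lem:maximin-implementation} make both maximin sets singletons, which $\mathcal U_M^{2,3}$ needs because it quantifies over all maximin profiles. \Cref{cor:all-mixed-full-dominance} as stated only gives $a_i^M\in M_i(\widetilde G)$, so it does not suffice on its own. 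Your route still needs such a seed to exist (matching pennies with a strictly dominated third column for player~2 will do), so it is not more elementary. What it buys is the structural fact that every strategic-equivalence class of such games meets $\mathcal U_M^{2,3}$; the paper's route buys a short self-contained check.

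One caution about your optional sketch of the $2\times2$ case. You claim that, absent pure equilibria, each player's maximin strategy equalises her pay-offs and she obtains $v_i$, equal to her Nash pay-off. This is false when her own pay-off matrix has a saddle point. In the notation of the proof of \cref{lemma:2x2-maximin-no-strict-pareto}, take $(a,e)=(2,0)$, $(b,f)=(3,1)$, $(c,g)=(1,1)$, $(d,h)=(4,0)$. There is no pure equilibrium. Player~1's unique maximin strategy is the pure action $T$, which yields $2$ and $3$, so $v_1=2$. Yet the unique equilibrium $\bigl((\tfrac12,\tfrac12),(\tfrac12,\tfrac12)\bigr)$ gives player~1 $\tfrac52$. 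The paper's proof avoids this by showing that strict dominance forces $b\ge d$ and $g\ge h$, and hence a pure equilibrium that is not strictly dominated. Since your main argument cites the lemma, this does not affect its correctness.
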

\noindent\emph{Proof.} See Appendix~\ref{app:robustness-theorems}.

The dimensional result isolates the role of a third action. If one player has only one action, the maximin profile set and Nash equilibrium set coincide. In $2\times2$ games, \cref{lemma:2x2-maximin-no-strict-pareto} rules out strict maximin dominance of every equilibrium. Once both players have at least two actions and one has a third, the strict ranking occurs on a non-empty open region of the corresponding pay-off space. The third action therefore removes the one-sided low-dimensional advantage of Nash equilibrium; it does not create a general advantage for maximin, because the reverse ranking is also robust.

\begin{remark}[Nash dominance over maximin]
\label{rem:nash-robustness}
For every pair of positive integers $(m,n)$, the following are equivalent:
\begin{enumerate*}[label=\textup{(\roman*)}]
\item $\mathcal U_N^{m,n}\neq\varnothing$
\item $\operatorname{int}(\mathcal C_N^{m,n})\neq\varnothing$
\item $\lambda_{2mn}(\mathcal C_N^{m,n})>0$
\item $\min\{m,n\}\geq2$.
\end{enumerate*}
\end{remark}
\noindent\emph{Proof.} See Appendix~\ref{app:robustness-theorems}.

\begin{corollary}[Cardinality]
\label{cor:robustness-cardinality}
$
|\mathcal C_M|=|\mathcal C_N|=|\mathcal G|=2^{\aleph_0}.
$
\end{corollary}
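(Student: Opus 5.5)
We prove the cardinality statement by sandwiching: every class involved is a subset of $\mathcal G$, so the upper bound $|\mathcal C_M|,|\mathcal C_N|\le|\mathcal G|$ is immediate, and we only need $|\mathcal G|\le 2^{\aleph_0}$ together with the lower bounds $|\mathcal C_M|,|\mathcal C_N|\ge 2^{\aleph_0}$. The Cantor--Schröder--Bernstein theorem then gives equality throughout.

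For the upper bound on $\mathcal G$, write it as a countable union over action-set sizes $(m,n)\in\mathbb N^2$ of the spaces $\mathcal G_{m,n}\cong\R^{2mn}$. The identification with $\R^{2mn}$ is fixed once the action sets are ordered; we treat games up to relabelling of actions, or fix canonical action sets $\{1,\dots,m\}$, as the paper implicitly does. Each $\R^{2mn}$ has cardinality $2^{\aleph_0}$, and a countable union of sets of size $2^{\aleph_0}$ has size $2^{\aleph_0}\cdot\aleph_0=2^{\aleph_0}$. Hence $|\mathcal G|=2^{\aleph_0}$; the lower bound here is trivial because, for example, $\mathcal G_{1,1}\cong\R^2$.

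For the lower bounds we invoke the robustness results rather than build explicit families. By \cref{thm:dimensional-robustness} with $(m,n)=(2,3)$, $\operatorname{int}(\mathcal C_M^{2,3})\neq\varnothing$, so $\mathcal C_M$ contains a non-empty open ball in $\R^{12}$. By \cref{rem:nash-robustness} with $(m,n)=(2,2)$, $\operatorname{int}(\mathcal C_N^{2,2})\neq\varnothing$, so $\mathcal C_N$ contains a non-empty open ball in $\R^{8}$. A non-empty open ball in a Euclidean space contains a segment, which is in bijection with $\R$, so it has cardinality $2^{\aleph_0}$. Alternatively, the positive Lebesgue measure in part (iii) of those results already excludes countable sets; but a countability argument alone would not yield $2^{\aleph_0}$ without the continuum hypothesis, so we use the open-ball argument.

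The only real point of care is the set-theoretic bookkeeping for $\mathcal G$: one must make precise that "all finite two-player games" is indexed by $(m,n)$ and real pay-off arrays, so that it really is a set of size continuum. Everything else follows directly from \cref{thm:dimensional-robustness} and \cref{rem:nash-robustness}. If one prefers not to rely on these, the two robust classes can instead be exhibited directly: take the game in Figure~\ref{fig:3x3_main}, which lies in $\mathcal C_M$, add an arbitrary real constant $c$ to every pay-off, and observe that this preserves all best responses, maximin sets and Pareto comparisons. This produces $2^{\aleph_0}$ distinct games in $\mathcal C_M$, and the same argument applied to any game in $\mathcal C_N$, such as a Stag Hunt–type $2\times2$ game, handles $\mathcal C_N$.
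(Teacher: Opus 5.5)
Your proof is correct and is essentially the paper's argument: the lower bounds come from a non-empty open subset of $\mathcal G_{2,3}$ (for $\mathcal C_M$) and of $\mathcal G_{2,2}$ (for $\mathcal C_N$), each of which contains a segment. The upper bound comes from writing $\mathcal G$ as a countable union of Euclidean spaces under canonical action labels. Your fallback is also valid and does not depend on the robustness theorems at all: add an arbitrary constant to every pay-off of the Figure~\ref{fig:3x3_main} game, or of a Stag Hunt game, for $\mathcal C_N$.
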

\noindent\emph{Proof.} See Appendix~\ref{app:robustness-theorems}.

\cref{thm:maximin-nash-pareto,thm:maximin-pareto-characterisation,thm:full-weak-dominance-characterisation,thm:main-robustness,thm:dimensional-robustness} extend to $n$-player games, as discussed in the Appendix.
 
\section{Conclusion}
\label{sec:conclusion}

There is, we suspect, an influential informal evolutionary argument that non-maximising decision rules, leading players to adopt non-Nash equilibrium strategies, will prove damaging to those players' interests. We show instead that a maximin decision rule will not prove generally damaging to players as compared with what would have happened had they used Nash equilibrium strategies. 

This is important. Zero-sum thinking would seem to be well established in many rich countries and it might plausibly be associated with the wholesale use of a maximin decision rule. There are likely many reasons associated with the rise of populism that help explain this and which, in turn, will influence whether zero-sum thinking survives or continues to grow. Nevertheless, had that influential informal evolutionary argument proved correct, there would have been an underlying pay-off inferiority reason for supposing that zero-sum thinking in this sense of wholesale use of maximin might disappear, whatever its other attractions. We show, instead, that there is no pay-off reason of this kind for supposing that zero-sum thinking in this sense may prove transitory over the long run. Indeed, once established, it would seem from our results that this form of zero-sum thinking may be more difficult to dislodge or less likely to disappear than is, perhaps, often supposed.

While this is our main conclusion, our analysis also has implications for an older debate in game theory over what is the rational decision rule for players to use. Some notable game theorists have opined that maximin can seem as good a rule, if not better, than the maximising one that leads to the selection of Nash equilibrium strategies in some games. In relation to this debate, we have shown that the attractions of maximin are even stronger than previously recognised. For any Nash equilibrium of a game, one can always construct a strategically equivalent game in which the selected equilibrium retains its original pay-offs, every maximin profile weakly Pareto dominates it, and every maximin strategy guarantees its pay-off. Under mild conditions, a maximin profile can instead be made to Pareto dominate either that equilibrium or the entire Nash equilibrium set.  Moreover, strict dominance by a maximin profile is robust to small pay-off perturbations: it occurs on non-empty open sets of games whenever both players have at least two actions and one has at least three. In this context, it is perhaps also worth recalling that maximin has other, better known, reasons for being an attractive decision rule. One is informational: players only need to have knowledge of their own pay-offs with maximin whereas Nash requires knowledge (at least in a probabilistic sense) of the other player's pay-offs as well. Another is that maximin is not prone to multiplicity in the way that Nash equilibrium is. Since existence of multiple equilibria requires players to solve a coordination problem, this is a further apparent advantage of maximin. Indeed, when players fail to solve these coordination problems, the pay-offs to playing Nash equilibrium strategies decline and so the likelihood that maximin will deliver superior ones increases.

In short, the maximin rule has some significant advantages and what we show in this paper is that it has other advantages in terms of potential pay-off dominance that are not so well recognised.

\appendix

\section{Proofs for the main theorems}

\subsection{Preliminary lemmas}

\begin{lemma}[Crossed best responses]
\label{lem:crossed-best-responses}
Suppose $|A_i|\geq2$ for both players and at least one player has at least three actions. Then there are pure profiles $a$ and $b$ such that $a_i\in\BR_i^G(b_{-i})
\text{ and }
a_i\neq b_i
$ for all $i\in N$.
\end{lemma}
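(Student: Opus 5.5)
The plan is to reduce the statement to a counting argument over pure profiles, using a fixed best-response selection for each player. Because pure best responses to a pure opponent action always exist (each $A_i$ is finite and nonempty), we can fix selections $f_1\colon A_2\to A_1$ and $f_2\colon A_1\to A_2$ with $f_i(a_{-i})\in\BR_i^G(a_{-i})$ for every $a_{-i}\in A_{-i}$. Here a pure action is identified with the corresponding degenerate mixed strategy.

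Given any pure profile $b=(b_1,b_2)$, we will set $a_1:=f_1(b_2)$ and $a_2:=f_2(b_1)$. Then $a_i\in\BR_i^G(b_{-i})$ holds automatically for both players. The only remaining requirement is $a_i\neq b_i$ for both $i$, which reads
\[
b_1\neq f_1(b_2)\quad\text{and}\quad b_2\neq f_2(b_1).
\]
So it suffices to show that some $b\in A_1\times A_2$ avoids both ``bad'' sets
\[
B_1:=\{(f_1(c_2),c_2): c_2\in A_2\},\qquad B_2:=\{(c_1,f_2(c_1)): c_1\in A_1\}.
\]

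Write $m=|A_1|$ and $n=|A_2|$. $B_1$ is the graph of $f_1$, so $|B_1|=n$; similarly $|B_2|=m$. Hence $|B_1\cup B_2|\le m+n$, and a good profile exists as soon as $mn>m+n$, i.e.\ $(m-1)(n-1)>1$. Under the hypotheses $\min\{m,n\}\ge2$ and $\max\{m,n\}\ge3$ we get $(m-1)(n-1)\ge 1\cdot 2=2>1$. So some $b\notin B_1\cup B_2$ exists, and the pair $(a,b)$ constructed above satisfies the statement.

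There is no real obstacle; the only step needing care is the inequality. It is exactly where the dimensional threshold enters: in the $2\times2$ case $mn=m+n=4$, and the two graphs can partition all four profiles (as in matching pennies), consistent with \cref{lemma:2x2-maximin-no-strict-pareto}. Ties in best responses cause no trouble, since we fix a single selection before counting.
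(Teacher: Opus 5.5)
Your proposal is correct and follows essentially the same argument as the paper: you fix a pure best-response selection for each player, note that a profile $b$ fails only if it lies on one of the two selection graphs (together at most $|A_1|+|A_2|$ profiles), and use $|A_1||A_2|>|A_1|+|A_2|$ under the dimensional hypothesis to find a profile off both graphs, then set $a_i$ to be the selected best response to $b_{-i}$. Your version is phrased directly rather than by contradiction and makes the inequality explicit via $(m-1)(n-1)\geq 2>1$; otherwise it is the same proof.
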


\begin{proof}
Select one pure best response $r_1(a_2)\in\BR_1(a_2)$ for each $a_2\in A_2$ and one pure best response $r_2(a_1)\in\BR_2(a_1)$ for each $a_1\in A_1$. Suppose no crossed pair existed. Then every $(b_1,b_2)\in A_1\times A_2$ would satisfy
$
b_1=r_1(b_2)
 \text{ or } 
b_2=r_2(b_1).
$
Hence $A_1\times A_2$ would be covered by the two graphs
$
\{(r_1(b_2),b_2):b_2\in A_2\}
 \text{ and }
\{(b_1,r_2(b_1)):b_1\in A_1\},
$
which together contain at most $|A_1|+|A_2|$ profiles. But
$|A_1||A_2|>|A_1|+|A_2|$
when both action sets have at least two elements and one has at least three. This contradiction gives a profile $b$ outside both graphs. Set $a_i=r_i(b_{-i})$. Then $a_i\in\BR_i^G(b_{-i})$ and $a_i\neq b_i$ for both players. 
\end{proof}

\begin{lemma}[Compactness of the Nash set]
\label{lem:nash-compact}
For every finite game, $\NE(G)$ is nonempty and compact.
\end{lemma}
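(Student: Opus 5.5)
Nonemptiness will be taken from Nash's existence theorem for finite games in mixed strategies, via Brouwer or Kakutani applied to the best-response correspondence on the compact convex set $S=S_1\times S_2$. I will treat this as standard and spend the effort on compactness.

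Since $S$ is a finite product of simplices, it is a compact subset of a Euclidean space. By Heine--Borel it therefore suffices to show that $\NE(G)$ is closed in $S$. To do this I will characterise $\NE(G)$ by finitely many weak inequalities between continuous functions, using the standard reduction to pure deviations. A profile $s$ is a Nash equilibrium if and only if
\[
u_i(s_i,s_{-i})\ \ge\ u_i(a_i,s_{-i})\qquad\text{for every } i\in N \text{ and every } a_i\in A_i .
\]
The forward direction is immediate, because pure actions are particular mixed strategies. For the converse, any $t_i\in S_i$ gives $u_i(t_i,s_{-i})=\sum_{a_i}t_i(a_i)u_i(a_i,s_{-i})$ by bilinearity. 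This is a convex combination of the pure-deviation pay-offs, so it is bounded above by $u_i(s)$ whenever every pure deviation is.

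Each map $s\mapsto u_i(s)-u_i(a_i,s_{-i})$ is a polynomial in the coordinates of $s$, hence continuous on $S$. Its preimage of $[0,\infty)$ is therefore closed in $S$. $\NE(G)$ is the intersection of these $|A_1|+|A_2|$ closed sets, so it is closed. A closed subset of a compact set is compact, which gives the result.

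There is no serious obstacle here. The only point needing care is the reduction from mixed to pure deviations, which avoids taking a maximum over an infinite family. Alternatively, one could note that $s\mapsto\max_{t_i}u_i(t_i,s_{-i})$ is continuous by Berge's maximum theorem, or simply because it equals a maximum of finitely many continuous functions, and then argue with sequences: if $s^k\to s$ with $s^k\in\NE(G)$, passing to the limit in $u_i(s^k)\ge u_i(a_i,s^k_{-i})$ gives the same inequality at $s$.
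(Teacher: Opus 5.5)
Your proof is correct and follows the paper's argument essentially verbatim: nonemptiness comes from Nash's existence theorem, and compactness holds because $\NE(G)$ is the intersection of the finitely many closed sets $\{s : u_i(s)\ge u_i(a_i,s_{-i})\}$ inside the compact set $S_1\times S_2$. You also justify the reduction from mixed to pure deviations via bilinearity, which the paper leaves implicit.
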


\begin{proof}
Nonemptiness is Nash's existence theorem \citep{nash1951}. The product $S_1\times S_2$ is compact. For each player $i$ and each pure action $a_i$, the condition
$
u_i(s)\geq u_i(a_i,s_{-i})
$
defines a closed subset of $S_1\times S_2$ because expected pay-offs are continuous. The Nash set is the intersection of these finitely many closed sets, so it is closed and therefore compact.
\end{proof}

\begin{lemma}[Target-pay-off implementation]
\label{lem:target-payoff-implementation}
Fix $m_i\in S_i$ and a target pay-off function $w_i\colon A_{-i}\to\R$. Define
$
h_i(a_{-i}):=w_i(a_{-i})-u_i(m_i,a_{-i})
$
and let $\widetilde u_i=u_i+h_i$. Then, for every $\sigma_i\in S_i$ and $q_{-i}\in S_{-i}$,
\begin{equation}
\widetilde u_i(\sigma_i,q_{-i})
=
\sum_{a_{-i}\in A_{-i}}q_{-i}(a_{-i})w_i(a_{-i})
+u_i(\sigma_i,q_{-i})-u_i(m_i,q_{-i}).
\label{eq:target-payoff-identity}
\end{equation}
The expected target pay-off under $q_{-i}$ is given by $\widetilde u_i(m_i,q_{-i})
=
\sum_{a_{-i}\in A_{-i}}q_{-i}(a_{-i})w_i(a_{-i})$.
\end{lemma}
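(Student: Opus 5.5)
The identity will be proved by direct computation from bilinearity. Nothing beyond the definition of $\widetilde u_i$ and the linearity of expected pay-offs in each player's strategy is needed.

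\emph{Pure opponent actions.} Fix $\sigma_i\in S_i$ and a pure opponent action $a_{-i}$. Because $h_i$ depends only on $a_{-i}$ and $\sigma_i$ is a probability distribution, the added term passes through the expectation over own actions unchanged:
\[
\widetilde u_i(\sigma_i,a_{-i})=\sum_{a_i}\sigma_i(a_i)\bigl[u_i(a_i,a_{-i})+h_i(a_{-i})\bigr]=u_i(\sigma_i,a_{-i})+h_i(a_{-i}).
\]
Substituting $h_i(a_{-i})=w_i(a_{-i})-u_i(m_i,a_{-i})$ gives
\[
\widetilde u_i(\sigma_i,a_{-i})=w_i(a_{-i})+u_i(\sigma_i,a_{-i})-u_i(m_i,a_{-i}).
\]

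\emph{Mixed opponent strategies.} Take $q_{-i}$-weighted averages of the last display over $a_{-i}$. By bilinearity, $\sum_{a_{-i}}q_{-i}(a_{-i})u_i(\sigma_i,a_{-i})=u_i(\sigma_i,q_{-i})$, and the same holds with $m_i$ in place of $\sigma_i$. This yields \eqref{eq:target-payoff-identity}.

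\emph{The final claim.} Setting $\sigma_i=m_i$ in \eqref{eq:target-payoff-identity} makes the last two terms cancel, leaving $\widetilde u_i(m_i,q_{-i})=\sum_{a_{-i}}q_{-i}(a_{-i})w_i(a_{-i})$, which is the stated expected target pay-off.

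There is no genuine obstacle here. The only point needing care is that the added term $h_i(a_{-i})$ is constant across player $i$'s own actions, which is what lets it pass unchanged through the own-strategy expectation. This is exactly the defining feature of strategic equivalence in \cref{def:strategic-equivalence}.
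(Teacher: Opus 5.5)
Your proposal is correct and follows essentially the paper's argument: a direct bilinearity computation giving $\widetilde u_i(\sigma_i,q_{-i})=u_i(\sigma_i,q_{-i})+\sum_{a_{-i}}q_{-i}(a_{-i})h_i(a_{-i})$, followed by substituting the definition of $h_i$. The only differences are cosmetic. You pass through pure opponent actions before averaging over $q_{-i}$, and you spell out the final claim (setting $\sigma_i=m_i$), which the paper leaves implicit.
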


\begin{proof}
By construction, \eqref{eq:target-payoff-identity} follows from
\begin{align*}
\widetilde u_i(\sigma_i,q_{-i})
&=u_i(\sigma_i,q_{-i})
+\sum_{a_{-i}\in A_{-i}}q_{-i}(a_{-i})h_i(a_{-i})\\
&=u_i(\sigma_i,q_{-i})
+\sum_{a_{-i}\in A_{-i}}q_{-i}(a_{-i})w_i(a_{-i})
-u_i(m_i,q_{-i}).
\end{align*}
\end{proof}

\begin{lemma}[Maximin implementation]
\label{lem:maximin-implementation}
Fix $m_i\in S_i$. Suppose there are $\mu_{-i}\in S_{-i}$, $\nu_i\in\R$, and non-negative pay-off additions $r_i(a_{-i})\geq0$, one for each $a_{-i}\in A_{-i}$, such that
$
m_i\in\BR_i^G(\mu_{-i}),
$
$\sum_{a_{-i}\in A_{-i}}\mu_{-i}(a_{-i})r_i(a_{-i})=0.
$
For any $K_i>0$, implement the target pay-off function
$
w_i(a_{-i})=\nu_i+K_i r_i(a_{-i}).
$
Then
$
m_i\in M_i(\widetilde G)$,
$v_i(\widetilde G)=\nu_i.
$
If $m_i=a_i$ is pure and is the unique best response to $\mu_{-i}$, then
$
M_i(\widetilde G)=\{a_i\}.
$
\end{lemma}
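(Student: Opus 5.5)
The proof will compute $\min_{q_{-i}}\widetilde u_i(\sigma_i,q_{-i})$ for an arbitrary $\sigma_i\in S_i$ using the identity \eqref{eq:target-payoff-identity} of \cref{lem:target-payoff-implementation}. With $w_i=\nu_i+K_ir_i$ that identity reads
\[
\widetilde u_i(\sigma_i,q_{-i})=\nu_i+K_i\sum_{a_{-i}}q_{-i}(a_{-i})r_i(a_{-i})+u_i(\sigma_i,q_{-i})-u_i(m_i,q_{-i}).
\]
The argument then has two sides: a lower bound showing that $m_i$ guarantees $\nu_i$, and an upper bound showing that no strategy guarantees more.

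\textbf{Lower bound for $m_i$.} Take $\sigma_i=m_i$. The last two terms cancel, so $\widetilde u_i(m_i,q_{-i})=\nu_i+K_i\sum q_{-i}r_i$. This is at least $\nu_i$ for every $q_{-i}$, because $r_i\ge0$ and $K_i>0$. Hence $\min_{q_{-i}}\widetilde u_i(m_i,q_{-i})\ge\nu_i$, and so $v_i(\widetilde G)\ge\nu_i$.

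\textbf{Upper bound via $\mu_{-i}$.} Take an arbitrary $\sigma_i$ and evaluate at $q_{-i}=\mu_{-i}$. By hypothesis $\sum\mu_{-i}r_i=0$, so the addition term vanishes. Since $m_i\in\BR_i^G(\mu_{-i})$, we have $u_i(\sigma_i,\mu_{-i})-u_i(m_i,\mu_{-i})\le0$. Therefore $\min_{q_{-i}}\widetilde u_i(\sigma_i,q_{-i})\le\widetilde u_i(\sigma_i,\mu_{-i})\le\nu_i$, giving $v_i(\widetilde G)\le\nu_i$.

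Combining the two bounds gives $v_i(\widetilde G)=\nu_i$, and $m_i$ attains this value, so $m_i\in M_i(\widetilde G)$. The whole argument is just these two evaluations; the only point needing care is that the zero-expectation condition together with $r_i\ge0$ is exactly what makes $\mu_{-i}$ act as a "punishing" belief for every deviation at once.

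\textbf{Uniqueness when $m_i=a_i$ is pure and the unique best response to $\mu_{-i}$.} This is the step needing slightly more care. The best-response set to $\mu_{-i}$ is the face of $S_i$ spanned by the pure best responses, so here it is $\{a_i\}$. Hence any $\sigma_i\neq a_i$ puts positive weight on some action that is strictly worse against $\mu_{-i}$, which gives $u_i(\sigma_i,\mu_{-i})<u_i(a_i,\mu_{-i})$. Repeating the evaluation at $\mu_{-i}$ then yields $\min_{q_{-i}}\widetilde u_i(\sigma_i,q_{-i})\le\widetilde u_i(\sigma_i,\mu_{-i})<\nu_i$. So $\sigma_i$ is not maximin, and $M_i(\widetilde G)=\{a_i\}$.
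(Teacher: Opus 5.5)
Your proof is correct and follows the paper's argument step for step. It gets the lower bound for $m_i$ from $r_i\ge0$ via the target-pay-off identity, the upper bound by evaluating an arbitrary $\sigma_i$ against the supporting belief $\mu_{-i}$, and uniqueness from the strict version of that same evaluation; your observation that the best-response set to $\mu_{-i}$ is the face spanned by the pure best responses just makes explicit the strict inequality the paper asserts without comment.
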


\begin{proof}
Because every pay-off addition is non-negative, $m_i$ obtains at least $\nu_i$ against every pure opponent action. For any $\sigma_i\in S_i$,
$
\min_{a_{-i}\in A_{-i}}
\widetilde u_i(\sigma_i,a_{-i})
\leq
\widetilde u_i(\sigma_i,\mu_{-i}).
$
By \cref{lem:target-payoff-implementation}, the definition of
$w_i$, the condition $\sum_{a_{-i}}\mu_{-i}(a_{-i})r_i(a_{-i})=0$,
and $m_i\in\BR_i^G(\mu_{-i})$,
\begin{align*}
\widetilde u_i(\sigma_i,\mu_{-i})
&=
\sum_{a_{-i}\in A_{-i}}
\mu_{-i}(a_{-i})w_i(a_{-i})
+
u_i(\sigma_i,\mu_{-i})
-
u_i(m_i,\mu_{-i})\\
&=
\nu_i
+
K_i\sum_{a_{-i}\in A_{-i}}
\mu_{-i}(a_{-i})r_i(a_{-i})
+
u_i(\sigma_i,\mu_{-i})
-
u_i(m_i,\mu_{-i}) \leq \nu_i.
\end{align*}

Since \(m_i\) guarantees at least \(\nu_i\), while every strategy
\(\sigma_i\in S_i\) guarantees at most \(\nu_i\), it follows that
$v_i(\widetilde G)=\nu_i$ and $
m_i\in M_i(\widetilde G)$.

If \(m_i=a_i\) is pure and is the unique best response to
\(\mu_{-i}\), then every \(\sigma_i\neq a_i\) satisfies
$
\min_{a_{-i}\in A_{-i}}
\widetilde u_i(\sigma_i,a_{-i})<\nu_i.
$
Hence no other strategy is maximin, and $
M_i(\widetilde G)=\{a_i\}$.
\end{proof}

\subsection{Proofs for comparison with a selected Nash equilibrium}
\label{app:selected-equilibrium}

\begin{proof}[Proof of \cref{thm:maximin-nash-pareto}]
For each player \(i\), let $U_i:=u_i(\bar s)$. Since $G$ is finite, there is a pure best response $b_i\in \BR_i^G(\bar s_{-i})$ such that
\begin{equation}
u_i(b_i,\bar s_{-i})
=
u_i(\bar s_i,\bar s_{-i})
=
U_i.
\label{eq:selected-security-best-response}
\end{equation}

Choose the constant target pay-off function $w_i(a_{-i}):=U_i$ for every $a_{-i}\in A_{-i}$. By \cref{lem:target-payoff-implementation}, this target is implemented by
$
h_i(a_{-i})
=
w_i(a_{-i})-u_i(b_i,a_{-i})
=
U_i-u_i(b_i,a_{-i}),
$
so the transformed pay-off function is
$
\widehat u_i(a_i,a_{-i})
=
u_i(a_i,a_{-i})+h_i(a_{-i}).
$
The adjustment depends only on the opponent's action. Hence
\(\widehat G\) is strategically equivalent to \(G\), and $\NE(\widehat G)=\NE(G)$.

Apply \cref{lem:maximin-implementation} with
$
m_i=b_i,
\mu_{-i}=\bar s_{-i},
\nu_i=U_i,
r_i(a_{-i})=0$
for every $a_{-i}$.

The required conditions hold because $b_i\in\BR_i^G(\bar s_{-i})$ and $
\sum_{a_{-i}} \bar s_{-i}(a_{-i})r_i(a_{-i})=0$.
Therefore
\begin{equation}
b_i\in M_i(\widehat G)
\quad\text{and}\quad
v_i(\widehat G)=U_i.
\label{eq:selected-security-equality}
\end{equation}

The selected equilibrium pay-off is unchanged. Indeed, by
\cref{lem:target-payoff-implementation} and
\eqref{eq:selected-security-best-response},
\begin{align*}
\widehat u_i(\bar s)
&=
\sum_{a_{-i}}
\bar s_{-i}(a_{-i})w_i(a_{-i})
+
u_i(\bar s)
-
u_i(b_i,\bar s_{-i}) =
U_i+U_i-U_i =U_i.
\end{align*}
Together with \eqref{eq:selected-security-equality}, this gives
$
\widehat u_i(\bar s)
=
v_i(\widehat G)
=
u_i(\bar s)
$ for all $i\in N$, 
proving part \textup{(i)}.

Now let $m$ be any mutual-maximin profile. Since \(m_i\) guarantees the security level $v_i(\widehat G)=U_i$,
$
\widehat u_i(m)
\geq
U_i
=
\widehat u_i(\bar s)
$ for all $i\in N$.
Thus every mutual-maximin profile weakly Pareto dominates the selected
equilibrium.

Finally, let $a^M:=(b_1,b_2)$.
By \eqref{eq:selected-security-equality}, \(a^M\) is a pure
mutual-maximin profile. Since the target pay-off of \(b_i\) is constant,
$
\widehat u_i(a^M)
=
\widehat u_i(b_i,b_{-i})
=
w_i(b_{-i})
=
U_i
=
u_i(\bar s)
$ for all $i\in N$. 
This proves part \textup{(iii)}.
\end{proof}

\begin{proof}[Proof of \cref{thm:maximin-pareto-characterisation}]
Suppose first that a strategically equivalent game $\widetilde G$ makes $m$ a maximin profile and satisfies
$
\widetilde u_i(m)>\widetilde u_i(\bar s).
$
Let $v_i$ denote player $i$'s transformed security level and define
$
r_i(a_{-i}):=\widetilde u_i(m_i,a_{-i})-v_i.
$
Since $m_i$ is maximin, it guarantees $v_i$, so $r_i(a_{-i})\geq 0$ for every $a_{-i}\in A_{-i}$.

Finite minimax duality in the zero-sum problem with pay-off $\widetilde u_i$ supplies a minimax belief $\mu_{-i}$ such that
$
m_i\in\BR_i^{\widetilde G}(\mu_{-i})$,
$\widetilde u_i(m_i,\mu_{-i})=v_i.
$
It follows that
$
\sum_{a_{-i}\in A_{-i}}\mu_{-i}(a_{-i})r_i(a_{-i})=0.
$
Strategic equivalence preserves best responses, so $m_i\in\BR_i^G(\mu_{-i})$. This proves \eqref{eq:selected-support-addition}.

Since $\widetilde u_i(m_i,a_{-i})=v_i+r_i(a_{-i})$, \cref{lem:target-payoff-implementation} gives
\begin{equation}
\widetilde u_i(m)-\widetilde u_i(\bar s)
=
\sum_{a_{-i}\in A_{-i}}
\bigl(m_{-i}(a_{-i})-\bar s_{-i}(a_{-i})\bigr)r_i(a_{-i})
-\bigl[ u_i(\bar s)-u_i(m_i, \bar s_{-i}) \bigr].
\label{eq:selected-difference-proof}
\end{equation}
The left-hand side is positive and $u_i(\bar s)-u_i(m_i, \bar s_{-i})\geq0$. Hence the expected pay-off addition under $m_{-i}$ is strictly greater than the expected pay-off addition under $\bar s_{-i}$, which proves \eqref{eq:selected-separation} and therefore necessity.

Conversely, suppose the conditions in \cref{thm:maximin-pareto-characterisation}\textup{(b)} hold, and write
\[
d_i:=
\sum_{a_{-i}\in A_{-i}}
\bigl(m_{-i}(a_{-i})-\bar s_{-i}(a_{-i})\bigr)r_i(a_{-i})>0.
\]
Given $\ell_i>0$, choose $K_i>0$ so that
$
K_i d_i>u_i(\bar s)-u_i(m_i, \bar s_{-i})+\ell_i.
$
Set
\[
\nu_i:=u_i(m_i,\bar s_{-i})
-K_i\sum_{a_{-i}\in A_{-i}}\bar s_{-i}(a_{-i})r_i(a_{-i})
\]
and implement
$
w_i(a_{-i})=\nu_i+K_i r_i(a_{-i}).
$

By \cref{lem:maximin-implementation}, the conditions in \eqref{eq:selected-support-addition} imply that
$
m_i\in M_i(\widetilde G)
$ for all $i \in N$.

We next verify that the selected equilibrium pay-off is preserved. By \eqref{eq:target-payoff-identity},
\begin{align*}
\widetilde u_i(\bar s)
&=
\sum_{a_{-i}\in A_{-i}}
\bar s_{-i}(a_{-i})w_i(a_{-i})
+
u_i(\bar s)
-
u_i(m_i,\bar s_{-i})\\
&=
\nu_i
+
K_i\sum_{a_{-i}\in A_{-i}}
\bar s_{-i}(a_{-i})r_i(a_{-i})
+
u_i(\bar s)
-
u_i(m_i,\bar s_{-i}).
\end{align*}
Substituting the definition
$
\nu_i
=
u_i(m_i,\bar s_{-i})
-
K_i\sum_{a_{-i}\in A_{-i}}
\bar s_{-i}(a_{-i})r_i(a_{-i})
$
gives
$
\widetilde u_i(\bar s)=u_i(\bar s).
$

For the candidate profile, \eqref{eq:target-payoff-identity} gives
\begin{align*}
\widetilde u_i(m)
&=
\sum_{a_{-i}\in A_{-i}}
m_{-i}(a_{-i})w_i(a_{-i}) =
\nu_i
+
K_i\sum_{a_{-i}\in A_{-i}}
m_{-i}(a_{-i})r_i(a_{-i}).
\end{align*}
Hence
\begin{align*}
\widetilde u_i(m)-u_i(\bar s)
&=
K_i
\sum_{a_{-i}\in A_{-i}}
\bigl(m_{-i}(a_{-i})-\bar s_{-i}(a_{-i})\bigr)
r_i(a_{-i}) -\bigl[u_i(\bar s)-u_i(m_i,\bar s_{-i})\bigr]\\
&=
K_i d_i
-
\bigl[u_i(\bar s)-u_i(m_i,\bar s_{-i})\bigr] >
\ell_i,
\end{align*}
where the last inequality follows from the choice of \(K_i\). Therefore,
$
\widetilde u_i(m)
>
u_i(\bar s)+\ell_i
=
\widetilde u_i(\bar s)+\ell_i.
$
Thus the selected equilibrium retains its original pay-off, while \(m\) strictly Pareto dominates it by the prescribed margins.
\end{proof}

\begin{proof}[Proof of \cref{cor:pure-selected-separation}]
For each player \(i\), let the supporting belief assign probability one to \(b_{-i}\), and set the pay-off addition equal to one for \(a_{-i}^M\) and zero to every other opponent action. The supporting belief has expected pay-off addition zero, while
\[
\sum_{a_{-i}\in A_{-i}}
\bigl(a_{-i}^M(a_{-i})-\bar s_{-i}(a_{-i})\bigr)r_i(a_{-i})
=
1-\bar s_{-i}(a_{-i}^M)>0.
\]
The result follows from \cref{thm:maximin-pareto-characterisation}. Strictness of the supporting best response gives uniqueness of the constructed maximin strategy.
\end{proof}

\begin{proof}[Proof of \cref{cor:mixed-selected-separation}]

The crossed-best-response lemma provides pure profiles \(a^M=(a_1^M,a_2^M)\) and \(b=(b_1,b_2)\) such that
$
a_i^M\in\BR_i^G(b_{-i})$ and 
$a_i^M\neq b_i
$ for all $i \in N$.
Because each equilibrium strategy has support of size at least two,
$
\bar s_i(a_i^M)<1
$ for all $i \in N$.
The preceding corollary therefore applies.
\end{proof}

\subsection{Proof of the full weak-dominance characterisation}
\label{app:full-weak-dominance-characterisation}

\begin{proof}[Proof of \cref{thm:full-weak-dominance-characterisation}]
	Suppose first that the conditions in \eqref{eq:full-support-addition} and \eqref{eq:full-cardinal-inequality} hold. For prescribed security levels $\nu_i$, implement $w_i(a_{-i})=\nu_i+r_i(a_{-i}).$ By \cref{lem:maximin-implementation}, $m_i$ is maximin and $v_i(\widetilde G)=\nu_i$. For any $s\in\NE(G)$, \eqref{eq:target-payoff-identity} gives
	\[
	\widetilde u_i(s)
	=\nu_i+
	\sum_{a_{-i}\in A_{-i}}s_{-i}(a_{-i})r_i(a_{-i})
	+\bigl[u_i(s)-u_i(m_i,s_{-i})\bigr],
	\]
	\[
	\widetilde u_i(m)
	=\nu_i+
	\sum_{a_{-i}\in A_{-i}}m_{-i}(a_{-i})r_i(a_{-i}).
	\]
	Therefore, the following proves sufficiency and \eqref{eq:full-difference}:
    
    \noindent 
	$\widetilde u_i(m)-\widetilde u_i(s) = $
		$\sum_{a_{-i}\in A_{-i}}
		\bigl(m_{-i}(a_{-i})-s_{-i}(a_{-i})\bigr)r_i(a_{-i})
		-\bigl[u_i(s)-u_i(m_i,s_{-i})\bigr] \geq0.$

	For necessity, suppose a strategically equivalent game $\widetilde G$ makes $m$ a maximin profile and weakly Pareto dominant over every equilibrium. Let $v_i$ be the transformed security level and define $r_i(a_{-i}):=\widetilde u_i(m_i,a_{-i})-v_i\geq0.$ Finite minimax duality supplies a belief $\mu_{-i}$ such that
	$m_i\in\BR_i^{\widetilde G}(\mu_{-i})$ and $\widetilde u_i(m_i,\mu_{-i})=v_i.$
	Hence $\sum_{a_{-i}\in A_{-i}}\mu_{-i}(a_{-i})r_i(a_{-i})=0.$ Strategic equivalence transfers the best-response condition to $G$. As in \eqref{eq:selected-difference-proof}, for every $s\in\NE(G)$,
	\[
	\widetilde u_i(m)-\widetilde u_i(s)
	=
	\sum_{a_{-i}\in A_{-i}}
	\bigl(m_{-i}(a_{-i})-s_{-i}(a_{-i})\bigr)r_i(a_{-i})
	-\bigl[u_i(s)-u_i(m_i,s_{-i})\bigr].
	\]
	Weak dominance therefore gives \eqref{eq:full-cardinal-inequality}.
\end{proof}

\subsection{Strict dominance of the entire Nash set}
\label{app:strict-full-dominance-characterisation}

\begin{proof}[Proof of \cref{cor:strict-full-dominance-characterisation}]
	Necessity follows from the necessity argument in \cref{thm:full-weak-dominance-characterisation}. If $\widetilde u_i(m)>\widetilde u_i(s)$ for every equilibrium, then $\sum_{a_{-i}\in A_{-i}} \bigl(m_{-i}(a_{-i})-s_{-i}(a_{-i})\bigr)r_i(a_{-i}) >u_i(s)-u_i(m_i,s_{-i})\geq0$ for every $s\in\NE(G)$. In particular, the expected pay-off addition under $m_{-i}$ is strictly greater than the expected pay-off addition under every equilibrium marginal.
	
	Conversely, suppose the best-response and zero-expected-addition conditions in \eqref{eq:full-support-addition} hold and, for every $s\in\NE(G)$, $d_i(s):= \sum_{a_{-i}\in A_{-i}} \bigl(m_{-i}(a_{-i})-s_{-i}(a_{-i})\bigr)r_i(a_{-i})>0.$ It is well known that $\NE(G)$ is compact (for completeness, see \cref{lem:nash-compact}). Since $d_i$ is continuous, $\delta_i:=\min_{s\in\NE(G)}d_i(s)>0.$ The function $s\longmapsto u_i(s)-u_i(m_i,s_{-i})$ is continuous and therefore bounded on the compact Nash set. Let $B_i:=\max_{s\in\NE(G)} \bigl[u_i(s)-u_i(m_i,s_{-i})\bigr].$ Given a prescribed margin $\ell_i>0$, choose $K_i>\frac{B_i+\ell_i}{\delta_i}$ and implement $w_i(a_{-i})=\nu_i+K_i r_i(a_{-i}).$ By \cref{lem:maximin-implementation}, the security level is $\nu_i$. For every equilibrium $s$, $\widetilde u_i(m)-\widetilde u_i(s) =K_i d_i(s)-\bigl[u_i(s)-u_i(m_i,s_{-i})\bigr] \geq K_i\delta_i-B_i >\ell_i.$ Thus the security levels and strictly positive dominance margins may be prescribed arbitrarily.
\end{proof}

\begin{proof}[Proof of \cref{cor:all-mixed-full-dominance}]
	By \cref{lem:crossed-best-responses}, choose pure profiles $a^M=(a_1^M,a_2^M)$ and $b=(b_1,b_2)$ such that
	$a_i^M\in\BR_i^G(b_{-i})$ and $a_i^M\neq b_i$ for all $i \in N$.
	For each player, take $\mu_{-i}=\delta_{b_{-i}}$ and set the pay-off addition equal to one for $a_{-i}^M$ and equal to zero for every other opponent action. The expected pay-off addition under $\mu_{-i}$ is zero. For every equilibrium $s$, $\sum_{c_{-i}\in A_{-i}} \bigl(a_{-i}^M(c_{-i})-s_{-i}(c_{-i})\bigr)r_i(c_{-i}) =1-s_{-i}(a_{-i}^M)>0,$ because $s_{-i}$ is not pure. The conclusion follows from \cref{cor:strict-full-dominance-characterisation}.
\end{proof}

\subsection{Proofs for the robustness theorems}
\label{app:robustness-theorems}

\begin{lemma}
	\label{lem:rectangular-padding}
	Let $G\in\mathcal G_{m,n}$ and let $m'\geq m$ and $n'\geq n$. There exists a game $\widehat G\in\mathcal G_{m',n'}$ such that, under the natural embedding of the old mixed-strategy spaces, $\NE(\widehat G)=\NE(G)$,  $ M_i(\widehat G)=M_i(G)$ for all $i\in N$, and every strategy profile supported on the old actions has the same pay-off as in $G$. The natural embedding identifies an old mixed strategy with the enlarged strategy that assigns zero probability to every added action. Consequently, membership in either uniform dominance class is preserved.
\end{lemma}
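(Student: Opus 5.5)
The plan is to pad $G$ with strictly dominated ``dummy'' actions. Each dummy action gives its owner a very low pay-off, and gives the opponent a very high pay-off on the opponent's old actions. This way it can never be used in equilibrium, can never be part of a maximin strategy, and cannot lower anyone's security level. Concretely, let $\underline u:=\min_{i,a}u_i(a)$ and $\overline u:=\max_{i,a}u_i(a)$, and fix constants $L<\underline u$ and $H\geq\overline u$. For every added row $a_1'$ set $\widehat u_1(a_1',\cdot)\equiv L$. Set $\widehat u_2(a_1',a_2)=H$ for old columns $a_2$ and $\widehat u_2(a_1',a_2')=L$ for added columns $a_2'$. Symmetrically, for every added column $a_2'$ set $\widehat u_2(\cdot,a_2')\equiv L$ and $\widehat u_1(a_1,a_2')=H$ for old rows $a_1$. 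All old entries are kept unchanged, so every profile supported on old actions has its old pay-off.

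First I will verify strict dominance. Every old row gives player 1 at least $\underline u>L$ against an old column, and $H>L$ against an added column. Hence every added row is strictly dominated by every old row, and symmetrically for columns. It follows that no Nash equilibrium of $\widehat G$ puts weight on added actions. Conversely, take $s\in\NE(G)$ embedded in $\widehat G$. Against the old opponent strategy $s_{-i}$, each old action earns exactly its $G$-pay-off, while each added action earns $L$, which is below every old payoff. So $s_i$ remains a best response, and $\NE(\widehat G)=\NE(G)$. For the other direction, an equilibrium of $\widehat G$ is supported on old actions, and its best-response conditions restrict to those of $G$.

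The main step is the maximin sets. Take player 1 (player 2 is symmetric) and write $g(\tau):=\min_{a_2\in A_2}u_1(\tau,a_2)$ for old strategies $\tau$. By linearity it suffices to minimise over pure opponent actions. If $\sigma_1$ is supported on old rows, the added columns yield $H\geq\overline u\geq g(\sigma_1)$, so $\min_{\widehat A_2}\widehat u_1(\sigma_1,\cdot)=g(\sigma_1)$. If instead $\sigma_1=(1-p)\tau+p\rho$ with $p>0$ and $\rho$ supported on added rows, then evaluate $\sigma_1$ at an old column minimising $u_1(\tau,\cdot)$. This gives
\[
(1-p)g(\tau)+pL<g(\tau)\leq v_1(G),
\]
using $L<\underline u\leq g(\tau)$; when $p=1$ the value is simply $L<v_1(G)$. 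Hence $v_1(\widehat G)=v_1(G)$, and every maximin strategy of $\widehat G$ is supported on old rows and is maximin in $G$. Conversely every element of $M_1(G)$ still guarantees $v_1$ in $\widehat G$, so $M_1(\widehat G)=M_1(G)$. The point needing care is exactly this strictness, which is what forces the choices $L<\underline u$ and $H\geq\overline u$.

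Finally, membership in $\mathcal U_M$ or $\mathcal U_N$ quantifies over $M_1\times M_2$ and $\NE$, which are unchanged, and compares pay-offs of profiles supported on old actions, which are also unchanged. Therefore uniform-class membership is preserved. The degenerate cases $m'=m$ or $n'=n$ simply add no rows or no columns, and the argument goes through verbatim.
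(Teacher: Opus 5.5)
Your proposal is correct and is essentially the paper's proof. The paper pads $G$ in the same way, with new actions that give their owner a low constant $\alpha_i$ and give the opponent a high constant $\beta_i$ on her old actions. It gets $\NE(\widehat G)=\NE(G)$ from strict dominance, and it shows that any strategy placing weight $\rho>0$ on new actions guarantees at most $(1-\rho)v_i+\rho\alpha_i<v_i$. Your common constants $L,H$ with the weak inequality $H\geq\overline u$, in place of the paper's player-specific $\alpha_i<\min u_i$ and $\beta_i>\max u_i$, are a harmless variation.
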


\begin{proof}
	Let $A_1^+$ and $A_2^+$ denote the added row and column actions. Choose constants
	$\alpha_i<\min_{a\in A_1\times A_2}u_i(a)$, $\beta_i>\max_{a\in A_1\times A_2}u_i(a)$
	for all $i\in N$.
	Retain the old pay-off block. At an old-row/new-column cell set the pay-off vector equal to $(\beta_1,\alpha_2)$; at a new-row/old-column cell set it equal to $(\alpha_1,\beta_2)$; and at a new-row/new-column cell set it equal to $(\alpha_1,\alpha_2)$. Every new action is strictly dominated by every old action of the same player. Hence no Nash equilibrium uses a new action, and all best-response comparisons within the old block are unchanged. It follows that $\NE(\widehat G)=\NE(G)$ under the embedding just described.
	
	For an old mixed strategy of player $i$, every new opponent action gives the high pay-off $\beta_i$, so its worst-case pay-off remains exactly its old worst-case pay-off. Let $v_i$ be player $i$'s security level in $G$, and suppose an own strategy assigns total probability $\rho>0$ to new actions. If $\rho<1$, write the strategy as $(1-\rho)y_i+\rho d_i$, where $y_i$ uses only old actions and $d_i$ uses only new actions. Choose an old opponent action that minimises the old pay-off of $y_i$. Against that action, the resulting pay-off is at most $(1-\rho)v_i+\rho\alpha_i<v_i.$ If $\rho=1$, every old opponent action gives $\alpha_i<v_i$. Thus no strategy using a new own action is maximin, while every old strategy retains its former guarantee. Therefore $M_i(\widehat G)=M_i(G)$ for each player. Since the relevant profile sets and their pay-offs are unchanged, either uniform dominance ranking is preserved.
\end{proof}

\begin{proof}[Proof of \cref{thm:main-robustness}]
	Fix $(m,n)$.  For part \textup{(i)}, define player $i$'s security function by $\phi_i^G(s_i):=\min_{a_{-i}\in A_{-i}}u_i^G(s_i,a_{-i}).$ It is continuous in $(G,s_i)$ because it is the minimum of finitely many jointly continuous functions. Since $S_i$ is compact, this function attains its maximum, so $M_i(G)$ is non-empty and compact. The maximin correspondence has a closed graph: if $G^k\to G$, $s_i^k\to s_i$, and $s_i^k\in M_i(G^k)$, then, for every $t_i\in S_i$, $\phi_i^{G^k}(s_i^k)\geq\phi_i^{G^k}(t_i),$ and passage to the limit gives $\phi_i^G(s_i)\geq\phi_i^G(t_i)$, so $s_i\in M_i(G)$.\footnote{A correspondence $F$ assigns to each game $G$ a set $F(G)$ of strategies. Its graph is closed if, whenever $G^k\to G$, $x^k\to x$, and $x^k\in F(G^k)$, then $x\in F(G)$. Here $x^k$ is an element selected from the set $F(G^k)$.}
	
	The Nash correspondence also has non-empty compact values. Non-emptiness is Nash's existence theorem \citep{nash1951}; for a fixed game, the Nash inequalities define a closed subset of the compact space $S_1\times S_2$, so $\NE(G)$ is compact. Its graph is closed as well: if $G^k\to G$, $s^k\to s$, and $s^k\in\NE(G^k)$, then, for every player $i$ and every pure action $a_i$, $u_i^{G^k}(s^k)\geq u_i^{G^k}(a_i,s_{-i}^k).$ Taking limits gives the Nash inequalities in $G$, so $s\in\NE(G)$.

	Define the auxiliary functions
	\[
	\Delta_M(G):=
	\min_{\substack{i\in N,\ s^M\in M_1(G)\times M_2(G),\ s^N\in\NE(G)}}
	\left[u_i^G(s^M)-u_i^G(s^N)\right]
	\]
	\[
	\Delta_N(G):=
	\min_{\substack{i\in N,\ s^N\in\NE(G),\ s^M\in M_1(G)\times M_2(G)}}
	\left[u_i^G(s^N)-u_i^G(s^M)\right].
	\]
	The minima are attained because both correspondences have non-empty compact values. Passing to a subsequence that converges to the $\liminf$ isolates the smallest limiting values, so proving the required inequality along that subsequence establishes lower semicontinuity.\footnote{A function $\Delta$ is lower semicontinuous if $\Delta(G)\leq\liminf_{k\to\infty}\Delta(G^k)$ whenever $G^k\to G$. To verify this inequality, one may pass to a subsequence along which $\Delta(G^k)$ converges to the $\liminf$ of the original sequence.}
	
	We prove the claim for $\Delta_M$; the proof for $\Delta_N$ is identical. Let $G^k\to G$ and pass to a subsequence along which $\Delta_M(G^k)$ converges to the lower limit of the original sequence. For each $k$, choose a player $i_k$, a maximin profile $s^{M,k}$, and a Nash equilibrium $s^{N,k}$ attaining the minimum. Compactness permits a further subsequence such that $i_k=i$, $s^{M,k}\to s^M$, and $s^{N,k}\to s^N$. The closed-graph arguments give $s^M\in M_1(G)\times M_2(G)$ and $s^N\in\NE(G)$. Hence
	\[
	\liminf_{k\to\infty}\Delta_M(G^k)
	=
	u_i^G(s^M)-u_i^G(s^N)
	\geq\Delta_M(G).
	\]
	Thus $\Delta_M$ is lower semicontinuous, and the same holds for $\Delta_N$. For a lower semicontinuous function, every strict superlevel set is open: if $\Delta(G)>0$, lower semicontinuity gives a neighbourhood on which $\Delta>\Delta(G)/2>0$. Since $\mathcal U_M^{m,n}=\{G:\Delta_M(G)>0\}$ and $ \mathcal U_N^{m,n}=\{G:\Delta_N(G)>0\},$ both uniform classes are open.

	For part \textup{(ii)}, recall that a semialgebraic set is a subset of a Euclidean space that can be described by finitely many polynomial equalities and inequalities, combined using finitely many logical operations such as ``and'', ``or'', and ``not''. Simplex membership, the Nash equilibrium conditions, maximin optimality, and the relevant strict pay-off comparisons can all be written as first-order formulas over the real numbers involving such polynomial conditions. Here, a first-order formula may also quantify over finitely many real-valued auxiliary variables, such as mixed-strategy probabilities and security levels.
	
	For example, $x_i\in M_i(G)$ if and only if $x_i\in S_i$ and there exists $z_i\in\R$ such that $u_i^G(x_i,a_{-i})\geq z_i\text{ for every }a_{-i}\in A_{-i},$ while, for every $y_i\in S_i$, there exists $a_{-i}\in A_{-i}$ such that $u_i^G(y_i,a_{-i})\leq z_i.$ The first set of inequalities says that $x_i$ guarantees at least $z_i$, while the second says that no alternative strategy can guarantee more than $z_i$. Hence $x_i$ is maximin. Because the action sets are finite, the phrases ``for every $a_{-i}\in A_{-i}$'' and ``there exists $a_{-i}\in A_{-i}$'' amount only to finite conjunctions and disjunctions of polynomial inequalities.
	
	The definitions of $\mathcal C_M^{m,n}$ and $\mathcal C_N^{m,n}$ can therefore be expressed by first-order formulas involving the pay-off entries of the game together with finitely many auxiliary variables representing maximin strategies, Nash equilibria, and security levels. The Tarski-Seidenberg theorem states that the projection\footnote{That is, after describing all games and auxiliary variables satisfying the relevant conditions, one may discard the auxiliary variables and retain only the corresponding pay-off vectors of the games.} of a semialgebraic set is semialgebraic. The resulting set of games remains semialgebraic. Since semialgebraic sets are also closed under finite unions, intersections, and complements, both $\mathcal C_M^{m,n}$ and $\mathcal C_N^{m,n}$ are semialgebraic. In particular, they are Lebesgue measurable.
	
	We finally use a standard geometric property of semialgebraic sets. Every semialgebraic subset of $\R^{2mn}$ can be partitioned into finitely many smooth pieces, called cells, each of some dimension at most $2mn$. A cell of dimension $2mn$ is locally open in $\R^{2mn}$ and therefore consists of interior points of the set. Consequently, the portion of a semialgebraic set lying outside its interior can contain only cells of dimension strictly below $2mn$. Such lower-dimensional pieces have zero $2mn$-dimensional Lebesgue measure. Therefore
	$
	\lambda_{2mn}\!\left(\mathcal C_M^{m,n}\setminus\operatorname{int}(\mathcal C_M^{m,n})\right)=0,
	$
	and
	$
	\lambda_{2mn}\!\left(\mathcal C_N^{m,n}\setminus\operatorname{int}(\mathcal C_N^{m,n})\right)=0.
	$
	This proves part \textup{(ii)}.\footnote{See Chapter~2 of \citet{bochnak2013} for the Tarski-Seidenberg theorem and the relevant properties of semialgebraic sets.}

	For part \textup{(iii)}, fix a player $i$ and let $\mathcal D_i^{m,n} := \left\{ G\in\mathcal G_{m,n}:M_i(G)\text{ is not a singleton} \right\}.$ This set is semialgebraic because it is described by the existence of two distinct points in the semialgebraic graph of the maximin correspondence.

	Player $i$'s maximin strategies in $G$ are precisely her optimal strategies in the auxiliary two-player zero-sum game in which her pay-off function is $u_i^G$ and the opponent's pay-off function is $-u_i^G$. Let $U_{m,n}^{\mathrm{zs}}$ denote the set of $m\times n$ pay-off matrices for which both players in the associated zero-sum game have unique optimal strategies. \citet[Section~6, Theorem~3]{bohnenblust1950} prove that $U_{m,n}^{\mathrm{zs}}$ is open and everywhere dense in $\R^{mn}$.
	
	In particular, the set of matrices for which the relevant player has a unique optimal strategy contains $U_{m,n}^{\mathrm{zs}}$ and is therefore dense, although it need not be open. Consequently, for every game $G$ and every neighbourhood of $G$ in $\mathcal G_{m,n}$, one can perturb only player $i$'s pay-offs by an arbitrarily small amount so that, in the resulting auxiliary zero-sum game, player $i$ has a unique optimal strategy. For player $2$, this amounts to applying the zero-sum result to the transpose of her pay-off matrix.
	
	Thus every neighbourhood of every game contains a game outside $\mathcal D_i^{m,n}$, so $\mathcal D_i^{m,n}$ has empty interior. Since it is semialgebraic, the dimension result used in part \textup{(ii)} implies that it has zero $2mn$-dimensional Lebesgue measure: $\lambda_{2mn}\!\left( \mathcal D_i^{m,n} \right) = 0.$ Therefore, outside a measure-zero subset of $\mathcal G_{m,n}$, player $i$ has a unique maximin strategy.
	
	Now suppose $G \in \mathcal C_M^{m,n} \setminus \mathcal U_M^{m,n}.$ By the definition of $\mathcal C_M^{m,n}$, some maximin profile strictly Pareto dominates every Nash equilibrium. If both players had unique maximin strategies, there would be only one maximin profile. That profile would then strictly Pareto dominate every Nash equilibrium, implying $G\in\mathcal U_M^{m,n},$ contrary to the assumption. Hence at least one player's maximin set is not a singleton, and
	$
	\mathcal C_M^{m,n}
	\setminus
	\mathcal U_M^{m,n}
	\subseteq
	\mathcal D_1^{m,n}
	\cup
	\mathcal D_2^{m,n}.
	$
	The right-hand side is a finite union of measure-zero sets. Consequently,
	$
	\lambda_{2mn}\!\left(
	\mathcal C_M^{m,n}
	\setminus
	\mathcal U_M^{m,n}
	\right)
	=
	0.
	$
	
	Finally, part \textup{(i)} shows that $\mathcal U_M^{m,n}$ and $\mathcal U_N^{m,n}$ are open. Since
	$
	\mathcal U_M^{m,n}
	\subseteq
	\mathcal C_M^{m,n}$
	 and 
	$\mathcal U_N^{m,n}
	\subseteq
	\mathcal C_N^{m,n},
	$
	it follows that
	$
	\mathcal U_M^{m,n}
	\subseteq
	\operatorname{int}(\mathcal C_M^{m,n})
	\subseteq
	\mathcal C_M^{m,n}
	$
	and
	$
	\mathcal U_N^{m,n}
	\subseteq
	\operatorname{int}(\mathcal C_N^{m,n})
	\subseteq
	\mathcal C_N^{m,n}.
	$
	This proves part \textup{(iii)}.
\end{proof}

\begin{lemma}\label{lemma:2x2-maximin-no-strict-pareto}
Let $G$ be a $2\times 2$ game. No maximin profile in $G$ can strictly Pareto dominate all Nash equilibria.
\end{lemma}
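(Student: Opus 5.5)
We argue by contradiction. Suppose a maximin profile $m=(m_1,m_2)$ satisfies $u_i(m)>u_i(s)$ for every $s\in\NE(G)$ and both $i$. Throughout we use two earlier facts. \Cref{rem:one-NE-strategy} says that $m$ cannot contain a component $s^N_i$ of any equilibrium $s^N$. \Cref{rem:NE-above-security} gives $u_i(s)\ge v_i$ for every equilibrium $s$. The strategy is to show that in a $2\times2$ game, for every configuration of best responses, some equilibrium either shares a component with $m$ or gives some player at least $u_i(m)$.

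\textbf{Step 1 (a player with a dominant action).} Suppose player $i$ has a strictly dominant action $d_i$. Then $d_i$ beats the other action against every opponent column, so its worst case is larger and $M_i=\{d_i\}$. Every equilibrium also has $s_i=d_i$. Hence $m_i=d_i$ is a Nash component, which contradicts \cref{rem:one-NE-strategy}.

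The weakly dominant case needs more care. Here I would split on whether the tie column is the one that matters for the maximin. If it does not matter, $d_i$ is again the unique maximin. If it does, then the equilibrium $(d_i,\cdot)$ in which the opponent best responds to $d_i$ gives the opponent $\max_t u_{-i}(d_i,t)\ge u_{-i}(m)$ whenever $m_i=d_i$. If instead $m_i$ is not $d_i$, then $m_i$ must put positive weight on the tied action. In that case I would exhibit an equilibrium in which $i$ plays the tied action, or one in which the opponent earns at least $u_{-i}(m)$.

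\textbf{Step 2 (no player has a dominant action).} Now each player's payoff matrix has ``crossing'' rows or columns. So each player's best-response correspondence is either of coordination type, with two pure equilibria on the diagonal, or of anti-coordination or matching-pennies type.

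First take the case with no pure equilibrium. The unique equilibrium $s^N$ is completely mixed, and $s^N_{-i}$ makes $i$ indifferent. I would then show that in $i$'s own zero-sum game $u_i$ there is no saddle point. Then $s^N_{-i}$ is the opponent's minimax strategy, so $u_i(s^N)=v_i$. Since $m_i$ is maximin, $u_i(m_i,t)\ge v_i$ with equality at the equalizer. I would then use the $2\times 2$ structure to show that $m_{-i}$ also equalizes (or that $m$ coincides with $s^N$ in a coordinate). This gives $u_i(m)=v_i=u_i(s^N)$ for at least one $i$.

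Next take the case where both players' best-response structures give two pure equilibria and one mixed equilibrium. This covers both the coordination type and the Hawk--Dove type. Here the plan is to show that one of the two pure equilibria gives some player a payoff at least $u_i(m)$. That player's best pure payoff against the corresponding column dominates any mixture of that player's payoffs. More concretely, $u_i(m)$ is a convex combination of the four entries of $i$'s matrix. One of the pure equilibria contains $i$'s maximal entry within its column, or the opponent's maximal entry within its row. I would compare $u_i(m)$ with $i$'s equilibrium entry along whichever diagonal carries the equilibria.

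\textbf{Main obstacle.} I expect the hardest part to be this last comparison, for example in Hawk--Dove-type games. There, $m$ strictly dominates the mixed equilibrium, and one must show it cannot simultaneously strictly dominate both pure equilibria for both players. I would first try the following inequality. Let $(x,y)$ and $(y',x')$ be the pure equilibria. I would show that $u_1(m)\le\max\{u_1(x,y),u_1(y',x')\}$ or $u_2(m)\le\max\{u_2(x,y),u_2(y',x')\}$. The argument is a sign analysis of the four payoff differences that define the best responses. If that direct route becomes too case-heavy, I would fall back on normalising payoffs. By \cref{lem:strategic-equivalence}, opponent-dependent additive shifts do not change $\NE(G)$. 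However, they do change maximin sets, so this normalisation only helps if I track $m$ explicitly. A cleaner alternative is to parametrise each player's matrix by the two differences $u_i(\text{row}_1,\cdot)-u_i(\text{row}_2,\cdot)$ and verify the finitely many sign patterns.
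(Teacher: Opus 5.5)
Your route is a case analysis over best-response configurations rather than the paper's single argument. As written, it fails in the case with no pure equilibrium. There you assert that player $i$'s own zero-sum game has no saddle point, so that $s^N_{-i}$ is minimax against $i$ and $u_i(m)=v_i=u_i(s^N)$ for some $i$. A matching-pennies best-response pattern does not exclude saddle points. Consider
\[
\begin{array}{c|cc}
 & L & R\\ \hline
T & 3,0 & 4,1\\
B & 1,3 & 5,2
\end{array}
\]
Best responses cycle, and the unique equilibrium puts $\tfrac12$ on $T$ and $\tfrac13$ on $L$, paying $(\tfrac{11}{3},\tfrac32)$. Yet $u_1$ has a saddle at $(T,L)$ and $u_2$ has one at $(T,R)$, so $v=(3,1)$ and the maximin profile is $m=(T,R)$. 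Thus $u_i(s^N)>v_i$ for both players, and the equality you aim for holds for neither. The lemma survives here only because $m$ is player 2's saddle cell, so $u_2(m)=v_2<\tfrac32$.

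The case therefore has to be split. If some player's matrix has no saddle, then $m_i$ is the unique equalizer, so $u_i(m)=v_i$ whatever $m_{-i}$ is; it is $m_i$, not $m_{-i}$, that equalizes. In that case $s^N_{-i}$, the unique strategy making $i$ indifferent, is minimax, so $u_i(s^N)=v_i$. If both matrices have saddles, both maximin strategies are pure. Each saddle is one of that player's best-response cells, and here the two players' best-response cells lie on opposite diagonals. Cells on opposite diagonals share a row or a column, so $m$ coincides with one of the two saddle cells. For that player, $u_i(m)=v_i\le u_i(s^N)$ by \cref{rem:NE-above-security}.

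The other cases are incomplete rather than wrong. The two-pure-equilibria case you flag as the main obstacle is immediate. With strictly crossing best responses, the pure equilibria are exactly $(r_1(L),L)$ and $(r_1(R),R)$, where $r_1$ is player 1's best reply. So both column maxima of $u_1$ sit at equilibrium cells, and $u_1(m)$, a convex combination of $u_1$'s entries, is at most $\max\{u_1(x,y),u_1(y',x')\}$. The maximin property is not even needed.

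The weak-dominance part of Step 1 is only sketched, and the dichotomy you announce there is not exhaustive. Take $u_1=\bigl(\begin{smallmatrix}0&2\\0&-1\end{smallmatrix}\bigr)$, $u_2=\bigl(\begin{smallmatrix}1&0\\2&3\end{smallmatrix}\bigr)$ and the maximin profile $m=(\tfrac13T+\tfrac23B,L)$. No equilibrium has player 1 playing $B$ for sure, and player 2's equilibrium payoffs are at most $\tfrac32<\tfrac53=u_2(m)$. The contradiction comes instead from $m_2=L$ being an equilibrium component, via \cref{rem:one-NE-strategy}.

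By contrast, the paper needs no classification. Strict dominance plus $u_i(s)\ge v_i$ forces $u_i(m)>v_i$ for both players. After relabelling, $m_1$'s binding column is $R$ and $m_2$'s binding row is $B$. A small shift of $m_1$ away from $T$ would raise its guarantee unless $u_1(T,R)\ge u_1(B,R)$, and symmetrically $u_2(B,L)\ge u_2(B,R)$. Then one of $(T,L)$, $(B,L)$, $(T,R)$ is an equilibrium at which some player gets at least $u_i(m)$.
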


\begin{remark}
Lemma~\ref{lemma:2x2-maximin-no-strict-pareto} does not extend to
$2\times3$ games. Consider
\[
\begin{array}{c@{~~} | @{~~}c@{~~}c@{~~}c}
 & L & M & R\\ \hline
T & 1,1 & 0,0 & 2,0\\
B & 3,3 & 1,3 & 0,4
\end{array}
\]
Player $1$'s unique maximin strategy is $(\frac13, \frac23)$, player $2$'s unique maximin strategy is $L$, and the resulting maximin profile yields pay-off $(\frac73,\frac73).$
The game has a unique Nash equilibrium,
$\left( (\frac12, \frac12),\ (\frac12, 0, \frac12)\right),
$ with pay-off $(\frac32,2).$
Hence the maximin profile strictly Pareto dominates the unique Nash equilibrium.
\end{remark}

\begin{proof}[Proof of \cref{lemma:2x2-maximin-no-strict-pareto}]
Write the game as
\[
\begin{array}{c@{~~} | @{~~}c@{~~}c}
 & L & R\\ \hline
T & a,e & b,f\\
B & c,g & d,h
\end{array}
\]
and let $p$ be player $1$'s probability of $T$ and $q$ player $2$'s
probability of $L$.

Because the game is $2\times2$, for any fixed $p$,
$\min_{q'\in[0,1]} u_1(p,q')
=
\min\{u_1(p,L),u_1(p,R)\}$,
since $u_1(p,q')$ is linear in $q'$. And, for any  $q$,
$\min_{p'\in[0,1]} u_2(p',q)=\min\{u_2(T,q),u_2(B,q)\}.$ Since $p$ and $q$ are maximin, $v_1=\min\{u_1(p,L),u_1(p,R)\}$ and $v_2=\min\{u_2(T,q),u_2(B,q)\}.$

Assume, toward contradiction, that $u(p,q)$ strictly Pareto dominates every Nash equilibrium pay-off profile. Let $(p^*,q^*)\in \NE(G)$. Since $p^*$ is a best response to $q^*$, 
$u_1(p^*,q^*)\ge u_1(p,q^*)\ge \min\{u_1(p,L),u_1(p,R)\}=v_1.$
Similarly, since $q^*$ is a best response to $p^*$,
$u_2(p^*,q^*)\ge u_2(p^*,q)\ge \min\{u_2(T,q),u_2(B,q)\}=v_2.$
Hence every Nash equilibrium pay-off is weakly above $(v_1,v_2)$. Therefore
the assumed strict Pareto dominance implies $u_1(p,q)>v_1$ and $u_2(p,q)>v_2$.

Now relabel the columns, if necessary, so that $u_1(p,L)\ge u_1(p,R).$
Since $u_1(p,q)>v_1=\min\{u_1(p,L),u_1(p,R)\}$, equality cannot hold, so in
fact $u_1(p,L)>u_1(p,R)=v_1.$ Likewise, relabel the rows, if necessary, so that $u_2(T,q)\ge u_2(B,q)$,
and again strict inequality must hold: $u_2(T,q)>u_2(B,q)=v_2$.
If a row relabeling is used, replace $p$ by $1-p$; if a column relabeling is used, replace $q$ by $1-q$. To avoid cumbersome notation, continue to denote the relabeled mixing probabilities by $p$ and $q$.

With this relabeling understood, $u_1(p,q)=q\,u_1(p,L)+(1-q)\,u_1(p,R)>u_1(p,R)$, so $q>0$. Similarly,
$u_2(p,q)=p\,u_2(T,q)+(1-p)\,u_2(B,q)>u_2(B,q)$,
so $p>0$. Thus the perturbations used below are legitimate even if one of the
original maximin strategies was pure: after relabeling, the corresponding
probability may be $1$, but it is not $0$.

We next show that $b\ge d$. Suppose instead that $b<d$. For
$0<\varepsilon\le p$, let $p_\varepsilon:=p-\varepsilon$. Then, $u_1(p_\varepsilon,R)
=
(p-\varepsilon)b+(1-p+\varepsilon)d
=
u_1(p,R)+\varepsilon(d-b)
>
u_1(p,R).$ Also,
$u_1(p_\varepsilon,L)
=
(p-\varepsilon)a+(1-p+\varepsilon)c$ tends to $u_1(p,L)$ as $\varepsilon$ tends to 0.

Because $u_1(p,L)>u_1(p,R)$, it follows that for all sufficiently small
$\varepsilon>0$, $u_1(p_\varepsilon,L)>u_1(p,R).$
Hence for such $\varepsilon$,
$\min\{u_1(p_\varepsilon,L),u_1(p_\varepsilon,R)\}>u_1(p,R)=v_1$, contradicting the fact that $p$ is a maximin strategy. Therefore $b\ge d$. By the symmetric argument, $g\ge h$.

Now consider the pure profile $(T,L)$. If $a\ge c$ and $e\ge f$, then $(T,L)$ is a Nash equilibrium. Moreover,
$u_1(p,q)=q\,u_1(p,L)+(1-q)\,u_1(p,R)\le u_1(p,L)\le a$, because $u_1(p,L)>u_1(p,R)$ and $u_1(p,L)=pa+(1-p)c\le a$ when $a\ge c$.
Similarly, $u_2(p,q)=p\,u_2(T,q)+(1-p)\,u_2(B,q)\le u_2(T,q)\le e$,
because $u_2(T,q)>u_2(B,q)$ and $u_2(T,q)=qe+(1-q)f\le e$ when $e\ge f$.
Thus $u(p,q)$ does not strictly Pareto dominate $u(T,L)$, contrary to assumption. So at least one of the inequalities $a\ge c$ and $e\ge f$ must fail.

If $c>a$, then $B$ is a best response to $L$. Together with $g\ge h$,
this implies that $(B,L)$ is a Nash equilibrium. Also,
$u_1(p,q)\le u_1(p,L)=pa+(1-p)c\le c=u_1(B,L)$,
so $u(p,q)$ does not strictly Pareto dominate $u(B,L)$.

If $f>e$, then $R$ is a best response to $T$. Together with $b\ge d$,
this implies that $(T,R)$ is a Nash equilibrium. Also, $u_2(p,q)\le u_2(T,q)=qe+(1-q)f\le f=u_2(T,R)$,
so $u(p,q)$ does not strictly Pareto dominate $u(T,R)$.

In every case we obtain a contradiction. Therefore $u(p,q)$ cannot strictly Pareto dominate all Nash equilibria.
\end{proof}

\begin{proof}[Proof of \cref{thm:dimensional-robustness}]
	Suppose first that one player has only one action. The other player's maximin strategies are exactly her best responses to that action, and the unique strategy of the first player is both maximin and part of every Nash equilibrium. Hence the maximin profile set and Nash equilibrium set coincide, so $\mathcal C_M^{m,n}$ is empty. If both players have at least two actions but the condition in part \textup{(iv)} fails, then $m=n=2$, and $\mathcal C_M^{2,2}$ is empty by \cref{lemma:2x2-maximin-no-strict-pareto}, which applies to arbitrary two-player $2\times2$ games, including asymmetric games and mixed maximin profiles. This proves that each of parts \textup{(i)}--\textup{(iii)} implies part \textup{(iv)}.
	
	The $2\times3$ game shown in the remark immediately following \cref{lemma:2x2-maximin-no-strict-pareto} has a unique maximin profile that strictly Pareto dominates the unique Nash equilibrium. The game therefore belongs to $\mathcal U_M^{2,3}$. Interchanging the players gives a game in $\mathcal U_M^{3,2}$. By \cref{lem:rectangular-padding}, $\mathcal U_M^{m,n}$ is non-empty whenever part \textup{(iv)} holds, proving \textup{(iv)}$\Rightarrow$\textup{(i)}.
	
	Part \textup{(i)} implies part \textup{(ii)} because $\mathcal U_M^{m,n}$ is open and is contained in $\mathcal C_M^{m,n}$. Part \textup{(ii)} implies part \textup{(iii)} because every non-empty open subset of $\R^{2mn}$ contains an open ball, and every open ball has positive $2mn$-dimensional Lebesgue measure. Finally, part \textup{(iii)} implies part \textup{(ii)} by \cref{thm:main-robustness}\textup{(ii)}: the portion of $\mathcal C_M^{m,n}$ outside its interior is null. All four statements are therefore equivalent.
\end{proof}

\begin{proof}[Proof of \cref{rem:nash-robustness}]
	If one player has only one action, the maximin profile set and Nash equilibrium set coincide by the argument in the preceding proof, so $\mathcal C_N^{m,n}$ is empty. For the converse, consider the $2\times2$ game
	\[
    \begin{array}{c@{~~} | @{~~}c@{~~}c}
		& L & R\\ \hline
		T & 5,5 & 3,3\\
		B & 2,1 & 2,3.
	\end{array}
	\]
	The row player's unique maximin action is $T$, and the column player's unique maximin action is $R$, so the unique maximin profile is $(T,R)$ and yields $(3,3)$. The action $T$ strictly dominates $B$, and $L$ is the column player's unique best response to $T$. Hence $(T,L)$ is the unique Nash equilibrium and yields $(5,5)$. The game belongs to $\mathcal U_N^{2,2}$. The padding construction in \cref{lem:rectangular-padding} extends it to every pair $m,n\geq2$. The openness and measure conclusions follow from \cref{thm:main-robustness}.
\end{proof}

\begin{proof}[Proof of \cref{cor:robustness-cardinality}]
	The class $\mathcal C_M^{2,3}$ contains the non-empty open set $\mathcal U_M^{2,3}$, and $\mathcal C_N^{2,2}$ contains the non-empty open set $\mathcal U_N^{2,2}$. A non-empty open subset of a finite-dimensional Euclidean space contains an open ball; such a ball contains a non-degenerate line segment and therefore has cardinality $2^{\aleph_0}$. Each global dominance class consequently has cardinality at least $2^{\aleph_0}$. Under the natural identification of finite action sets with $\{1,\ldots,m\}$ and $\{1,\ldots,n\}$, the class $\mathcal G$ is a countable union of finite-dimensional Euclidean spaces and has cardinality $2^{\aleph_0}$. Since $\mathcal C_M,\mathcal C_N\subseteq\mathcal G$, the shown equalities follow.
\end{proof}

\section{Supplementary results}

\subsection{Nash-maximin reversal theorems}

\begin{definition}[Weakly generic game]
\label{def:weakly-generic}
A game is \emph{weakly generic} if, for each player $i$ and each pure action $a_{-i}\in A_{-i}$ of the opponent, $\argmax_{a_i\in A_i}u_i(a_i,a_{-i})$
contains exactly one action.
\end{definition}

Weak genericity only rules out a tie for the best response to a pure opponent action. It allows ties among lower pay-offs and is therefore weaker than requiring all of a player's pay-offs to be distinct. Its role in the pathwise results below is to make the constructed pure maximin strategies unique.

We now move from full characterisations to sufficiency conditions. When several games are compared, a superscript $t$ is added to the notation from Section 4; thus $u_i^t$, $M_i^t$, and $v_i^t$ refer to the pay-off function, maximin set, and security level in the game $G^t$. A family $\{G^t:t\in[0,1]\}$ is called \emph{affine} if every pure-profile pay-off $u_i^t(a_1,a_2)$ is an affine function of $t$.

As mentioned in the main text, strategic equivalence alone need not preserve equilibrium pay-offs, maximin strategies, or security levels. The constructions below use the opponent-dependent terms to control these objects while leaving strategic incentives unchanged. The affine reversal theorems impose the strongest invariance requirements by retaining the same maximin strategies and security levels throughout a family; the selected-equilibrium reversal additionally fixes one equilibrium pay-off vector along the whole path.

	\begin{theorem}[Nash-maximin reversal]
	\label{thm:selected-equilibrium-reversal}
	Suppose $G$ is weakly generic and has a Nash equilibrium $\bar s^N=(\bar s_1^N,\bar s_2^N)$ satisfying $|\supp(\bar s_i^N)|\geq3.$ for all $i\in N$.
	Then there exist a pure profile $a^M=(a_1^M,a_2^M)$, numbers $\nu_i<u_i(\bar s^N)$, and an affine family of games $\{G^t:t\in[0,1]\}$ such that, for every $t\in[0,1]$:
	\begin{enumerate}[label=\textup{(\roman*)}]
	\item $G^t$ is strategically equivalent to $G$;
	\item the selected equilibrium retains its original pay-off vector,
	$u_i^t(\bar s^N)=u_i(\bar s^N)$ for all $i \in N $;
	
	\item each player has the same unique pure maximin strategy throughout the family, and the security levels are fixed, $M_i^t=\{a_i^M\}$ and $v_i^t=\nu_i$ for all $i \in N $;
	\item 
	$u^0(a^M)$ strictly Pareto dominates $u^0(\bar s^N)$ and 
	$u^1(\bar s^N)$  strictly Pareto dominates $u^1(a^M)$.
	\end{enumerate}
	\end{theorem}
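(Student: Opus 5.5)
The plan is to run the target-pay-off construction of \cref{lem:target-payoff-implementation} and \cref{lem:maximin-implementation} with pay-off additions $r_i^t$ that are affine in $t$. Three things are then held fixed along the path: the supporting belief, the security level $\nu_i$, and the $\bar s^N_{-i}$-expectation of the additions. The only thing that moves is the addition on the candidate opponent action $a^M_{-i}$.

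\emph{Choice of the pure profile.} I will first apply the counting argument of \cref{lem:crossed-best-responses}, restricted to $\supp(\bar s^N_1)\times\supp(\bar s^N_2)$. Fix the (unique, by weak genericity) pure best responses $r_1(\cdot)$ and $r_2(\cdot)$. The restricted grid has at least $9$ cells, while the two best-response graphs meet it in at most $|\supp(\bar s^N_1)|+|\supp(\bar s^N_2)|$ cells. When both supports have size exactly $3$ this is only $6<9$; in general the bound $|S_1||S_2|>|S_1|+|S_2|$ holds whenever both sizes are at least $3$ (indeed at least $2$ with one at least $3$). So there is $b=(b_1,b_2)$ with $b_i\in\supp(\bar s^N_i)$, and setting $a_i^M:=r_i(b_{-i})$ gives $a^M_i\neq b_i$. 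By weak genericity $a_i^M$ is the \emph{unique} best response to $b_{-i}$. Because $|\supp(\bar s^N_{-i})|\ge3$, I can also pick a third action $c_{-i}\in\supp(\bar s^N_{-i})\setminus\{a^M_{-i},b_{-i}\}$.

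\emph{The affine family.} Write $p_i:=\bar s^N_{-i}(a^M_{-i})<1$ (possibly $0$) and $D_i:=u_i(\bar s^N)-u_i(a_i^M,\bar s^N_{-i})\ge0$. Choose constants $C_i>0$ and $K_i>0$ with
\[
C_i+D_i<K_i\le C_i/p_i,
\]
which is possible because $C_i/p_i-C_i=C_i(1-p_i)/p_i\to\infty$ as $C_i\to\infty$ (and the upper bound is vacuous if $p_i=0$). Set $\nu_i:=u_i(a_i^M,\bar s^N_{-i})-C_i$ and define the additions
\[
r_i^t(a^M_{-i})=(1-t)K_i,\qquad r_i^t(c_{-i})=\frac{C_i-(1-t)K_ip_i}{\bar s^N_{-i}(c_{-i})},
\]
with $r_i^t=0$ on every other opponent action, including $b_{-i}$.
\begin{enumerate*}[label=(\arabic*)]
\item Both formulas are affine in $t$, and they are non-negative because $K_ip_i\le C_i$.
\item The $\bar s^N_{-i}$-expectation of $r_i^t$ equals $C_i$ for every $t$.
\item The expectation under $\mu_{-i}=\delta_{b_{-i}}$ is zero.
\end{enumerate*}
Now implement $w_i^t=\nu_i+r_i^t$ via $h_i^t(a_{-i})=w_i^t(a_{-i})-u_i(a_i^M,a_{-i})$. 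This $h_i^t$ is affine in $t$, so $\{G^t\}$ is an affine family of games strategically equivalent to $G$, which gives (i).

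\emph{Verification.} \Cref{lem:maximin-implementation}, applied with $\mu_{-i}=\delta_{b_{-i}}$ and the unique best response $a_i^M$, gives $M_i^t=\{a_i^M\}$ and $v_i^t=\nu_i$ for all $t$, which is (iii). Moreover $\nu_i<u_i(\bar s^N)$, since $C_i>0$ and $D_i\ge0$.

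For (ii), identity \eqref{eq:target-payoff-identity} gives
\[
u_i^t(\bar s^N)=\nu_i+C_i+D_i=u_i(\bar s^N)
\]
for every $t$.

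For (iv), the same identity gives $u_i^t(a^M)=\nu_i+(1-t)K_i$. At $t=0$ this equals $u_i(\bar s^N)-C_i-D_i+K_i>u_i(\bar s^N)$, so $a^M$ strictly Pareto dominates $\bar s^N$. At $t=1$ it equals $\nu_i<u_i(\bar s^N)$, so $\bar s^N$ strictly Pareto dominates $a^M$.

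\emph{Main obstacle.} The delicate step is reconciling three requirements on $r_i^t$ at once: keeping the additions non-negative, keeping their $\bar s^N_{-i}$-expectation constant (which preserves the equilibrium pay-off along the path), and leaving the supporting action $b_{-i}$ at zero (which pins the security level). This is exactly why a third support action $c_{-i}$ is needed: it absorbs the compensating mass. The inequality $C_i+D_i<K_i\le C_i/p_i$ is also the point to double-check, including the degenerate case $p_i=0$.
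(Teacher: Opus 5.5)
Your proof is correct and takes essentially the same route as the paper. Both use a crossed best-response pair with $a_i^M$ the unique best response to $b_{-i}$, a third support action of $\bar s^N_{-i}$ that absorbs a zero-$\bar s^N_{-i}$-mean compensation (so that the equilibrium pay-off and the security level at $b_{-i}$ stay fixed), and \cref{lem:maximin-implementation} for the unique maximin strategy. The paper merely parametrises the movement as $U_i+(1-2t)\varepsilon_i$ around the equilibrium pay-off, where you use $\nu_i+(1-t)K_i$, and restricting your counting argument to the supports is harmless but not needed.
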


    \begin{proof}[Proof of \cref{thm:selected-equilibrium-reversal}]
The strict-dominance characterisation in Theorem 2 already supplies a strategically equivalent game in which a maximin profile strictly Pareto dominates the selected equilibrium. To obtain both rankings along one affine family while preserving the selected equilibrium pay-off and fixed security levels, we use a zero-mean compensation direction.

By Lemma 2, choose pure profiles $a^M=(a_1^M,a_2^M)$ and $b=(b_1,b_2)$ such that $a_i^M$ is player $i$'s unique best response to $b_{-i}$ and $a_i^M\neq b_i$ for both players. For each player $i$, choose
\[
k_{-i}\in\supp(\bar s_{-i}^N)\setminus\{b_{-i},a_{-i}^M\},
\]
which is possible because $|\supp(\bar s_{-i}^N)|\geq3$. Write $U_i:=u_i(\bar s^N)$ and $\rho_i:=u_i(a_i^M,\bar s_{-i}^N).$

We first construct a base target pay-off function $w_i^*\colon A_{-i}\to\R$. Choose $\nu_i<U_i$ sufficiently low, set $w_i^*(b_{-i})=\nu_i$ and $w_i^*(a_{-i}^M)=U_i$,
set $w_i^*(c_{-i})=\nu_i+1$ for $c_{-i}\notin\{b_{-i},a_{-i}^M,k_{-i}\}$,
and choose the remaining coordinate $w_i^*(k_{-i})$ so that
\begin{equation}
\sum_{c_{-i}\in A_{-i}}
\bar s_{-i}^N(c_{-i})w_i^*(c_{-i})=\rho_i.
\label{eq:selected-base-mean}
\end{equation}
This choice can be made with $w_i^*(c_{-i})>\nu_i$ for every $c_{-i}\neq b_{-i}.$
Indeed,
\begin{align*}
\bar s_{-i}^N(k_{-i})\bigl[w_i^*(k_{-i})-\nu_i\bigr]
={}&\rho_i-\bar s_{-i}^N(a_{-i}^M)U_i
-\bigl(1-\bar s_{-i}^N(a_{-i}^M)\bigr)\nu_i -\sum_{c_{-i}\notin\{b_{-i},a_{-i}^M,k_{-i}\}}\bar s_{-i}^N(c_{-i}).
\end{align*}
Because $\bar s_{-i}^N(k_{-i})>0$, the coefficient $1-\bar s_{-i}^N(a_{-i}^M)$ is positive, so the right-hand side becomes positive when $\nu_i$ is sufficiently low.

Define a direction $q_i\colon A_{-i}\to\R$ by
\[
q_i(a_{-i}^M)=1,
\qquad
q_i(k_{-i})=-\frac{\bar s_{-i}^N(a_{-i}^M)}{\bar s_{-i}^N(k_{-i})},
\qquad
q_i(c_{-i})=0
\quad\text{otherwise}.
\]
Then $q_i(b_{-i})=0$ and $\sum_{c_{-i}\in A_{-i}}\bar s_{-i}^N(c_{-i})q_i(c_{-i})=0.$
Choose $\varepsilon_i>0$ sufficiently small that
\[
w_i^*(c_{-i})+(1-2t)\varepsilon_i q_i(c_{-i})>\nu_i
\]
for every $c_{-i}\neq b_{-i}$ and every $t\in[0,1]$. Define $w_i^t:=w_i^*+(1-2t)\varepsilon_i q_i$
and implement it by
\[
h_i^t(c_{-i}):=w_i^t(c_{-i})-u_i(a_i^M,c_{-i}).
\]
The resulting family is affine and strategically equivalent to $G$.

By \eqref{eq:selected-base-mean} and the zero expectation of $q_i$ under $\bar s_{-i}^N$,
\[
\sum_{c_{-i}\in A_{-i}}\bar s_{-i}^N(c_{-i})h_i^t(c_{-i})=0,
\]
so $u_i^t(\bar s^N)=U_i$ for every $t.$
Moreover, $w_i^t(b_{-i})=\nu_i$ and $w_i^t(c_{-i})>\nu_i$ for every $c_{-i}\neq b_{-i}$. Since $a_i^M$ is the unique best response to $b_{-i}$, Lemma 5 gives $M_i^t=\{a_i^M\}$ and $v_i^t=\nu_i$ for every $t$.
Finally, $u_i^t(a^M)=w_i^t(a_{-i}^M)=U_i+(1-2t)\varepsilon_i.$
Thus $a^M$ strictly Pareto dominates $\bar s^N$ at $t=0$, while $\bar s^N$ strictly Pareto dominates $a^M$ at $t=1$.
\end{proof}

\begin{example}
\label{ex:symmetric-3x3-reversal}
Starting from Figure~1, add \((-4,3,3)\) to player~1's pay-offs in columns \(x,y,z\), respectively, with the symmetric adjustments to player~2, to obtain $G^M$ below. From $G^M$, add \((0,1,-1)\) in the same way to obtain $G^N$.
	\[
	\begin{array}{c@{\qquad\qquad}c}
		\begin{array}{c@{~~} | @{~~}c@{~~}c@{~~}c}
			& x & y & z\\ \hline
			x & 3,3  & 12,0  & 3,4\\
			y & 0,12 & 11,11 & 10,7\\
			z & 4,3  & 7,10  & 6,6
		\end{array}
		&
		\begin{array}{c@{~~} | @{~~}c@{~~}c@{~~}c}
			& x & y & z\\ \hline
			x & 3,3  & 13,0  & 2,4\\
			y & 0,13 & 12,12 & 9,8\\
			z & 4,2  & 8,9   & 5,5
		\end{array}
		\\[0.4em]
		G^M & G^N
	\end{array}
	\]

	Both games are strategically equivalent to the original game and therefore have the same unique Nash equilibrium
	$
	\bar s
	=
	\left[
	\left(\frac12,\frac14,\frac14\right),
	\left(\frac12,\frac14,\frac14\right)
	\right]$ and $
	u_i(\bar s)=\frac{21}{4}.
	$
	In both games, \(z\) is the unique maximin action and the security level is \(4\). However,
	$
	u_i^M(z,z)=6>\frac{21}{4}$ and $
	u_i^N(z,z)=5<\frac{21}{4}.
	$
	Thus mutual maximin strictly Pareto dominates Nash equilibrium in $G^M$, whereas Nash equilibrium strictly Pareto dominates mutual maximin in $G^N$.

	More generally, adding \((0,t,-t)\), \(t\in[0,1]\), to $G^M$ preserves the equilibrium pay-off, the unique maximin action, and the security level, while
	$
	u_i^t(z,z)=6-t.
	$
	The ranking reverses at \(t=3/4\).
\end{example}

	\begin{theorem}[Nash-maximin general reversal]
	\label{thm:full-nash-maximin-reversal}
	Suppose $G$ is weakly generic and every Nash equilibrium $s^N=(s_1^N,s_2^N)\in\NE(G)$ satisfies
	$|\supp(s_i^N)|\geq3$ for all $i\in N$.
	For every prescribed pair $(\nu_1,\nu_2)\in\R^2$, there exist a pure profile $a^M=(a_1^M,a_2^M)$ and an affine family of games $\{G^t:t\in[0,1]\}$ such that, for every $t\in[0,1]$:
	\begin{enumerate}[label=\textup{(\roman*)}]
	\item $G^t$ is strategically equivalent to $G$;
	\item each player has the same unique pure maximin strategy throughout the family, and the prescribed security levels are fixed,
	$M_i^t=\{a_i^M\}$ and $v_i^t=\nu_i$ for all $i \in N $;
	\item for every $s^N\in\NE(G)$, $u^0(a^M)\text{ strictly Pareto dominates }u^0(s^N)$, whereas $u^1(s^N)$ strictly Pareto dominates $u^1(a^M)$.
	
	\end{enumerate}
	\end{theorem}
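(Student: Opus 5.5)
The plan is to reuse the pure crossed profile from \cref{lem:crossed-best-responses} and to interpolate affinely between two target pay-off functions, both implemented through \cref{lem:target-payoff-implementation}. The first target realises the strict dominance of \cref{cor:all-mixed-full-dominance}; the second realises the reverse ranking. Unlike \cref{thm:selected-equilibrium-reversal}, I will not try to keep equilibrium pay-offs fixed along the path, since a direction with zero mean under every equilibrium marginal need not exist. I only need to control the sign of the pay-off differences at the two endpoints, uniformly over the compact set $\NE(G)$.

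\textbf{Setup.} Since every equilibrium strategy has support of size at least three, both players have at least three actions, so \cref{lem:crossed-best-responses} applies. It gives pure profiles $a^M$ and $b$ with $a_i^M\in\BR_i^G(b_{-i})$ and $a_i^M\neq b_i$. By weak genericity, $a_i^M$ is the \emph{unique} best response to $b_{-i}$.

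\textbf{Targets.} For each player $i$, define two targets $w_i^0,w_i^1\colon A_{-i}\to\R$, both equal to $\nu_i$ at $b_{-i}$ and at least $\nu_i$ everywhere:
\begin{itemize}
\item $w_i^0(a_{-i}^M)=\nu_i+K_i$ and $w_i^0(c_{-i})=\nu_i$ otherwise;
\item $w_i^1(a_{-i}^M)=\nu_i+\varepsilon_i$, $w_i^1(c_{-i})=\nu_i+L_i$ for $c_{-i}\notin\{b_{-i},a_{-i}^M\}$, and $w_i^1(b_{-i})=\nu_i$.
\end{itemize}
Set $w_i^t:=(1-t)w_i^0+tw_i^1$ and implement it by $h_i^t(c_{-i})=w_i^t(c_{-i})-u_i(a_i^M,c_{-i})$. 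The family is affine in $t$ and strategically equivalent to $G$.

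Each $w_i^t$ has the form $\nu_i+r_i^t$ with $r_i^t\geq0$ and $r_i^t(b_{-i})=0$. So \cref{lem:maximin-implementation}, with $\mu_{-i}=\delta_{b_{-i}}$, gives $M_i^t=\{a_i^M\}$ and $v_i^t=\nu_i$ for every $t$. This settles parts (i) and (ii).

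\textbf{Pay-off differences.} By \eqref{eq:target-payoff-identity}, for every $s\in\NE(G)$,
\[
u_i^t(a^M)-u_i^t(s)=w_i^t(a_{-i}^M)-\sum_{c_{-i}}s_{-i}(c_{-i})w_i^t(c_{-i})-g_i(s),
\]
where $g_i(s):=u_i(s)-u_i(a_i^M,s_{-i})\geq0$. The function $g_i$ is continuous, so $B_i:=\max_{\NE(G)}g_i$ is finite by \cref{lem:nash-compact}. Two further constants are positive by compactness and continuity:
\begin{itemize}
\item $\delta_i:=\min_{s\in\NE(G)}\bigl(1-s_{-i}(a_{-i}^M)\bigr)>0$, because no equilibrium strategy is pure;
\item $\eta_i:=\min_{s\in\NE(G)}s_{-i}\bigl(A_{-i}\setminus\{b_{-i},a_{-i}^M\}\bigr)>0$, because every equilibrium marginal has support of size at least three and so charges an action outside $\{b_{-i},a_{-i}^M\}$.
\end{itemize}

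\textbf{Endpoint $t=0$.} The difference equals $K_i\bigl(1-s_{-i}(a_{-i}^M)\bigr)-g_i(s)\geq K_i\delta_i-B_i$. This is positive once $K_i>B_i/\delta_i$, so $a^M$ strictly Pareto dominates every equilibrium.

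\textbf{Endpoint $t=1$.} The difference is at most $\varepsilon_i-L_i\eta_i$, using $g_i\geq0$ and $w_i^1\geq\nu_i$. This is negative once $L_i>\varepsilon_i/\eta_i$, say with $\varepsilon_i=1$. So every equilibrium strictly Pareto dominates $a^M$, which gives part (iii).

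The main obstacle is uniformity over a possibly infinite Nash set. The argument rests on the two strictly positive minima $\delta_i$ and $\eta_i$. The first uses only that equilibria are mixed. The second is exactly where the support-three hypothesis is needed: a third supported action lets the $t=1$ target reward equilibria without raising the maximin profile's pay-off or disturbing the security-pinning action $b_{-i}$.
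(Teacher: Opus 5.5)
Your proposal is correct and follows essentially the same route as the paper. Both use the same crossed pure profile from \cref{lem:crossed-best-responses}, made unique by weak genericity, pin security at $b_{-i}$ via \cref{lem:maximin-implementation}, apply the pay-off identity of \cref{lem:target-payoff-implementation}, and rely on compactness of $\NE(G)$ for uniformity. The difference is only in parametrisation: you interpolate between two targets and bound the endpoints with separate constants $\delta_i$, $B_i$, $\eta_i$, whereas the paper keeps the off-pair targets at $\nu_i+1$, moves only the $a^M_{-i}$ coordinate through $\gamma_i(t)$, and combines the bounds into the ratio $\theta_i(s)$, which gives the exact factorisation $A_i(s)\bigl[\gamma_i(t)-\theta_i(s)\bigr]$.
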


    \begin{proof}
	The maximin-dominant endpoint follows from Remark 5. To obtain that endpoint and the opposite uniform ranking in one affine family with fixed security levels, use the following one-parameter construction.
	
	By Lemma 2, choose pure profiles $a^M=(a_1^M,a_2^M)$ and $b=(b_1,b_2)$ such that $a_i^M$ is player $i$'s unique best response to $b_{-i}$ and $a_i^M\neq b_i$ for both players. For $s\in\NE(G)$, define $D_i(s):=u_i(s)-u_i(a_i^M,s_{-i})\geq0,$ $A_i(s):=1-s_{-i}(a_{-i}^M),$ and $B_i(s):=D_i(s)+\sum_{c_{-i}\notin\{b_{-i},a_{-i}^M\}}s_{-i}(c_{-i}).$ Every equilibrium strategy has support of at least three actions. Hence $A_i(s)>0$ and $ B_i(s)>0$ for every $s\in\NE(G)$. By Lemma 3, the Nash set is compact, so the continuous ratio $\theta_i(s):=\frac{B_i(s)}{A_i(s)}$ attains a strictly positive minimum and a finite maximum. Choose $0<\varepsilon_i<\min_{s\in\NE(G)}\theta_i(s)$, $ H_i>\max_{s\in\NE(G)}\theta_i(s),$ and define $\gamma_i(t):=(1-t)H_i+t\varepsilon_i.$
	
	For each $t\in[0,1]$, define the target pay-off function
	\[
	w_i^t(c_{-i})=
	\begin{cases}
		\nu_i, & c_{-i}=b_{-i},\\
		\nu_i+\gamma_i(t), & c_{-i}=a_{-i}^M,\\
		\nu_i+1, & c_{-i}\notin\{b_{-i},a_{-i}^M\}.
	\end{cases}
	\]
	Implement $w_i^t$ by adding $h_i^t(c_{-i}):=w_i^t(c_{-i})-u_i(a_i^M,c_{-i})$ to player $i$'s pay-offs. The family is affine and strategically equivalent to $G$. Since $w_i^t(b_{-i})=\nu_i$, all other target pay-offs are strictly above $\nu_i$, and $a_i^M$ is the unique best response to $b_{-i}$, Lemma 5 gives $M_i^t=\{a_i^M\}$ and $ v_i^t=\nu_i\text{ for every }t.$
	
	For any $s\in\NE(G)$, Lemma 4 yields
	\[
		u_i^t(a^M)-u_i^t(s)
		=\gamma_i(t)\bigl[1-s_{-i}(a_{-i}^M)\bigr]
		-D_i(s)-\sum_{c_{-i}\notin\{b_{-i},a_{-i}^M\}}s_{-i}(c_{-i}) =A_i(s)\bigl[\gamma_i(t)-\theta_i(s)\bigr].
	\]
	At $t=0$, $\gamma_i(0)=H_i>\theta_i(s)$ for every equilibrium, so $a^M$ strictly Pareto dominates the entire Nash set. At $t=1$, $\gamma_i(1)=\varepsilon_i<\theta_i(s)$ for every equilibrium, so every Nash equilibrium strictly Pareto dominates $a^M$. This proves the general reversal.
\end{proof}

The main additional assumption in \cref{thm:full-nash-maximin-reversal} is that the three-action support condition holds at every Nash equilibrium rather than only at one selected equilibrium. One family must now work uniformly over the entire Nash set.

The one-sided characterisations above already provide the maximin-dominant endpoints of the two affine reversal results. The affine theorems add the opposite ranking and require both endpoints to lie on a single path with the same pure maximin strategies and security levels throughout. \Cref{thm:selected-equilibrium-reversal} additionally fixes the original pay-off vector of the selected equilibrium along that path, whereas \cref{thm:full-nash-maximin-reversal} reverses the comparison uniformly relative to every equilibrium. Thus the two affine reversal theorems remain complementary: one gives the stronger selected-equilibrium invariance conclusion, and the other gives the stronger dominance conclusion over the entire Nash set.

\subsection{Sufficiency, characterisation, and impossibility results}
\label{subsec:sufficiency}
\subsubsection{When equilibrium pay-offs equal security levels}
	
We now study what follows when an equilibrium pay-off coincides with a security level.

	\begin{lemma}[Security attainment implies maximin is a best response]
		\label{lem:equality-security}
		Let $s^{N}$ be a Nash equilibrium and fix a player $i$. If
		$
		u_i(s^{N})=v_i,
		$
		then every maximin strategy $s^M_i\in M_i$ satisfies
		$
		u_i(s^{M}_i,s_{-i}^N)=v_i$
		and 
		$s^{M}_i\in \BR_i(s_{-i}^N).
		$
	\end{lemma}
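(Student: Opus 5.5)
The plan is to sandwich $u_i(s^M_i,s^N_{-i})$ between the security level and the equilibrium pay-off, reusing the two inequalities from the proof of \cref{rem:NE-above-security}. Fix a maximin strategy $s^M_i\in M_i$.

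First, since $s^N_i\in\BR_i(s^N_{-i})$, we have
\[
u_i(s^N)\ \ge\ u_i(s^M_i,s^N_{-i}).
\]
Second, because $s^M_i$ is maximin, it guarantees the security level against every opponent strategy. In particular,
\[
u_i(s^M_i,s^N_{-i})\ \ge\ \min_{t_{-i}\in S_{-i}}u_i(s^M_i,t_{-i})\ =\ v_i.
\]
Chaining these gives
\[
v_i\ \le\ u_i(s^M_i,s^N_{-i})\ \le\ u_i(s^N)\ =\ v_i,
\]
where the last equality is the hypothesis. Hence every inequality is an equality, and $u_i(s^M_i,s^N_{-i})=v_i$.

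For the best-response claim, note that $u_i(s^N)=\max_{t_i\in S_i}u_i(t_i,s^N_{-i})$ because $s^N$ is a Nash equilibrium. We have just shown that $s^M_i$ attains this maximum value $u_i(s^N)=v_i$ against $s^N_{-i}$. So $s^M_i\in\argmax_{t_i}u_i(t_i,s^N_{-i})=\BR_i(s^N_{-i})$.

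The argument is essentially mechanical, and I do not expect a real obstacle. The one point to check is that the maximin guarantee is a minimum over all mixed $t_{-i}$, so it applies to the particular mixed strategy $s^N_{-i}$. This holds by the definition of $M_i$, or equivalently because a bilinear function's minimum over the opponent's simplex is attained at a pure action. Since $s^M_i$ was an arbitrary element of $M_i$, the conclusion holds for every maximin strategy.
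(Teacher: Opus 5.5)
Your proposal is correct and follows essentially the same route as the paper's proof. You sandwich $u_i(s^M_i,s^N_{-i})$ between the maximin guarantee $v_i$ and the equilibrium pay-off $u_i(s^N)=v_i$, and then conclude that $s^M_i$ attains the best-response value against $s^N_{-i}$.
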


    \begin{proof}
	\label{app:lem:equality-security}
	Let $s^M_i\in M_i$. Because $s^M_i$ is maximin, $u_i(s^M_i,s_{-i}^{N}) \ge \min_{t_{-i}}u_i(s^M_i,t_{-i}) =v_i.$ Because $s^{N}$ is a Nash equilibrium, $s_i^{N}$ is a best response to $s_{-i}^{N}$. Hence $u_i(s^M_i,s_{-i}^{N}) \le u_i(s_i^{N},s_{-i}^{N}) =u_i(s^{N}).$ By assumption, $u_i(s^{N})=v_i$. Therefore $u_i(s^M_i,s_{-i}^{N})\le v_i.$ Together with the first inequality this gives $u_i(s^M_i,s_{-i}^{N})=v_i.$ Since $s_i^{N}$ is a best response against $s_{-i}^{N}$, the same is true of $s^M_i$, so $s^M_i\in\BR_i(s_{-i}^{N})$.
\end{proof}

	If an equilibrium gives player $i$ exactly her security level, then her maximin strategy must be a best response against the opponent's equilibrium strategy.

	\begin{proposition}[Security attainment and minimax strategies]
		\label{prop:security-attainment-dual}
		Let $s^N$ be a Nash equilibrium and fix a player $i$. The following are equivalent:
		\begin{enumerate}[label=\textup{(\roman*)}]
			\item $u_i(s^N)=v_i$  
			\item $s_{-i}^N\in
			\argmin_{t_{-i}\in S_{-i}}
			\max_{t_i\in S_i}u_i(t_i,t_{-i}).$
		\end{enumerate}
	\end{proposition}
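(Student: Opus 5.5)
The plan is a direct two-directional argument. The key identity is that, because $s^N$ is a Nash equilibrium, $s_i^N$ is a best response to $s_{-i}^N$, and therefore
\[
u_i(s^N)=\max_{t_i\in S_i}u_i(t_i,s_{-i}^N).
\]
This single equality links the equilibrium pay-off of player $i$ to the inner maximisation in the minimax problem.

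\emph{(i) implies (ii).} Suppose $u_i(s^N)=v_i$. By the displayed identity, $\max_{t_i}u_i(t_i,s_{-i}^N)=v_i$. By \cref{rem:maximin-minimax}, the minimax value $\overline v_i=\min_{t_{-i}}\max_{t_i}u_i(t_i,t_{-i})$ equals $v_i$. So $s_{-i}^N$ attains the minimum in the outer minimisation, and hence lies in the argmin.

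\emph{(ii) implies (i).} Suppose $s_{-i}^N$ lies in the argmin. Then
\[
\max_{t_i}u_i(t_i,s_{-i}^N)=\overline v_i=v_i,
\]
again by \cref{rem:maximin-minimax}. Combining this with the displayed identity gives $u_i(s^N)=v_i$.

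Both directions are immediate once the best-response identity and the equality of maximin and minimax values are in hand, so I do not expect a genuine obstacle. The only points needing care are attainment and definitions. The max and min are attained because the simplices are compact and pay-offs are continuous, as already noted in \cref{sec:setup}. The argmin in (ii) is taken with respect to $u_i$; it is the opponent's minimax strategy against player $i$, which for $-i=1$, $i=2$ matches the paper's definition of a minimax strategy of player $1$.

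I would also remark in passing that the implication (ii)$\Rightarrow$(i) uses only the inequality $\overline v_i\ge v_i$ together with \cref{rem:NE-above-security}. The implication (i)$\Rightarrow$(ii), by contrast, genuinely needs the full minimax theorem.
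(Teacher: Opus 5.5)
Your two-direction argument is correct and is essentially the paper's proof. Both combine the best-response identity $u_i(s^N)=\max_{t_i\in S_i}u_i(t_i,s^N_{-i})$ with the equality of the maximin and minimax values from \cref{rem:maximin-minimax}.

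Your closing side remark, however, gets the two directions backwards. Read $v_i$ as the security level $\underline v_i$, which is the only reading under which the remark says anything. Then the implication (i)$\Rightarrow$(ii) needs only the elementary inequality $\underline v_i\le\overline v_i$: if $u_i(s^N)=\underline v_i$, then
\[
\max_{t_i\in S_i}u_i(t_i,s^N_{-i})=u_i(s^N)=\underline v_i\le\overline v_i\le\max_{t_i\in S_i}u_i(t_i,s^N_{-i}),
\]
so equality holds throughout and $s^N_{-i}$ attains the outer minimum.

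Under (ii), by contrast, the best-response identity gives $u_i(s^N)=\overline v_i$. Both \cref{rem:NE-above-security} and $\overline v_i\ge\underline v_i$ are merely further lower bounds on this number. To conclude $u_i(s^N)=\underline v_i$ you need $\overline v_i\le\underline v_i$, which is exactly the non-trivial half of the minimax theorem.

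The pure-strategy case, where that half fails, shows that (ii)$\Rightarrow$(i) genuinely depends on it. Give player 1 matching-pennies pay-offs and player 2 constant pay-offs. Then $(T,L)$ is an equilibrium and $L$ is a pure minimax strategy against player 1, yet $u_1(T,L)=1$ exceeds player 1's pure-strategy security level of $-1$.

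Since your actual proof invokes the full theorem in both directions, it is unaffected; only the remark should be reversed.
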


    \begin{proof}
	\label{app:prop:security-attainment-dual}
	
	Assume \textup{(i)}. Since $s^N$ is a Nash equilibrium, we have $s_i^N\in \BR_i(s_{-i}^N)$, so $\max_{t_i\in S_i}u_i(t_i,s_{-i}^N)
	=
	u_i(s_i^N,s_{-i}^N)
	=
	u_i(s^N)
	=
	v_i$.
	By Remark 1, $\underline v_i = \overline v_i = v_i$. Hence $$\max_{t_i\in S_i}u_i(t_i,s_{-i}^N)=v_i,$$ which shows that $s_{-i}^N$ attains the minimum in the definition of $v_i$. Thus \textup{(ii)} holds.
	
	Assume next \textup{(ii)}. Then $\max_{t_i\in S_i}u_i(t_i,s_{-i}^N)=v_i.$ Since $s_i^N\in \BR_i(s_{-i}^N)$, $u_i(s^N) = u_i(s_i^N,s_{-i}^N) = \max_{t_i\in S_i}u_i(t_i,s_{-i}^N) = v_i,$ so \textup{(i)} holds. Now suppose the equivalent conditions hold, and let $s^{M}_i\in M_i$. Then
	$
	u_i(s^{M}_i,s_{-i}^N)
	\ge
	\min_{t_{-i}\in S_{-i}}u_i(s^{M}_i,t_{-i})$ $
	=
	v_i,
	$
	and since $s_{-i}^N$ is minimax against player $i$, $u_i(s^{M}_i,s_{-i}^N) \le \max_{t_i\in S_i}u_i(t_i,s_{-i}^N) = v_i.$ Hence $u_i(s^{M}_i,s_{-i}^N)=v_i = \max_{t_i\in S_i}u_i(t_i,s_{-i}^N).$ Therefore $s^{M}_i\in \BR_i(s_{-i}^N)$.
\end{proof}

	Reaching the security level at equilibrium is equivalent to the opponent's equilibrium strategy being minimax against player $i$.

	\begin{definition}[Strict and quasi-strict equilibrium]
		A Nash equilibrium $s^N$ is strict if, for each $i$, $s_i^N$ is the unique best response to $s_{-i}^N$. For $s_{-i}\in S_{-i}$, let $\BR_i^{\mathrm{pure}}(s_{-i})
		:=
		\argmax_{a_i\in A_i}u_i(a_i,s_{-i}).$
		A Nash equilibrium $s^N$ is \emph{quasi-strict} if
		$
		\BR_i^{\mathrm{pure}}(s_{-i}^N)\subseteq \supp(s_i^N)
		$ for all $i \in N$.
		Equivalently, since every pure action in the support of a mixed best response is itself a pure best response,
		\[
		\BR_i^{\mathrm{pure}}(s_{-i}^N)=\supp(s_i^N)
		\qquad(i=1,2).
		\]
	\end{definition}
	
	If $s^N$ is a quasi-strict equilibrium, it rules out unused best replies. A player may mix across several equilibrium actions, but any pure action that is a best response against the opponent's equilibrium strategy must already receive positive probability in the equilibrium $s^N$.

	\begin{proposition}[Quasi-strict support contains maximin]
		\label{prop:quasi-strict-support}
		Let $s^{N}$ be a quasi-strict Nash equilibrium. Fix a player $i$ and assume $
		u_i(s^{N})=v_i.$
		Then every maximin strategy $s^M_i\in M_i$ satisfies
		$
		\supp(s^M_i)\subseteq \supp(s_i^{N}).
		$
	\end{proposition}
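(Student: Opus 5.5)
The plan is to combine \cref{lem:equality-security} with the support property of mixed best responses. Fix player $i$ with $u_i(s^N)=v_i$ and take any $s^M_i\in M_i$. By \cref{lem:equality-security}, $s^M_i\in\BR_i(s^N_{-i})$; equivalently, $s^M_i$ attains $\max_{t_i\in S_i}u_i(t_i,s^N_{-i})=u_i(s^N)$.

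The key step is to pass from this mixed best-response property to a statement about the pure actions in $\supp(s^M_i)$. By bilinearity,
\[
u_i(s^M_i,s^N_{-i})=\sum_{a_i\in A_i}s^M_i(a_i)\,u_i(a_i,s^N_{-i}),
\]
and each term satisfies $u_i(a_i,s^N_{-i})\le\max_{a_i'\in A_i}u_i(a_i',s^N_{-i})$. This maximum over pure actions equals the maximum over mixed strategies. The weighted average therefore attains the maximum only if every action with $s^M_i(a_i)>0$ attains it. Hence $\supp(s^M_i)\subseteq\BR_i^{\mathrm{pure}}(s^N_{-i})$.

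Quasi-strictness then gives $\BR_i^{\mathrm{pure}}(s^N_{-i})\subseteq\supp(s^N_i)$. Chaining the two inclusions yields $\supp(s^M_i)\subseteq\supp(s^N_i)$, which is the claim.

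There is no real obstacle here. The only point that needs care is the averaging argument showing that every action in the support of a mixed best response is a pure best response. The paper already invokes this fact informally in the definition of quasi-strict equilibrium, so I would state it in one line as above. The hypothesis $u_i(s^N)=v_i$ is used only through \cref{lem:equality-security}, and it applies only to the fixed player $i$, so no assumption on the opponent is needed.
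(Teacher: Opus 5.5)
Your proposal is correct and follows essentially the same route as the paper: apply \cref{lem:equality-security} to get $s^M_i\in\BR_i(s^N_{-i})$, use the averaging argument to obtain $\supp(s^M_i)\subseteq\BR_i^{\mathrm{pure}}(s^N_{-i})$, and then chain this with the quasi-strictness inclusion $\BR_i^{\mathrm{pure}}(s^N_{-i})\subseteq\supp(s^N_i)$. Your explicit bilinear expansion makes the support step slightly more formal than the paper's one-line justification, but the argument is the same.
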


    \begin{proof}
	By \cref{lem:equality-security}, every $s^M_i\in M_i$ belongs to $\BR_i(s_{-i}^{N})$.
	
	If a mixed strategy is a best response to a fixed opponent strategy, then every pure action in its support is itself a pure best response to that same opponent strategy. Otherwise the mixture would place positive probability on an action whose pay-off is strictly below the best response value, which would lower the mixture's expected pay-off below that value.
	
	Therefore $\supp(s^M_i)\subseteq \BR_i^{\mathrm{pure}}(s_{-i}^{N}).$ Since $s^{N}$ is quasi-strict, $\BR_i^{\mathrm{pure}}(s_{-i}^{N})\subseteq \supp(s_i^{N}).$ Combining the two inclusions gives the claim.
\end{proof}
    
	Quasi-strictness turns \cref{lem:equality-security} into a support restriction. Once every maximin strategy becomes a best response, quasi-strictness says that all pure best responses already lie inside the equilibrium support, so maximin strategies cannot place weight outside that support.

	\begin{corollary}
		\label{cor:strict-security}
		Suppose $s^{N}=(a_1^{N},a_2^{N})$ is a strict Nash equilibrium and
		$u_i(s^{N})=v_i$ for all $i \in N$. 
		Then, for every $i \in N$, $M_i=\{a_i^{N}\}$,
		so the unique maximin profile is $s^{N}$ itself.
	\end{corollary}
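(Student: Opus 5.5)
The plan is to apply \cref{prop:quasi-strict-support} to each player, then use the fact that a strict equilibrium is pure and quasi-strict to collapse the maximin set onto the equilibrium action.

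First I will check that a strict Nash equilibrium is quasi-strict with singleton supports. If $s_i^N$ is the unique best response to $s_{-i}^N$, then every pure best response to $s_{-i}^N$ equals $s_i^N$. Hence $\BR_i^{\mathrm{pure}}(s_{-i}^N)$ contains at most one element. It is non-empty because $A_i$ is finite. Any action in $\supp(s_i^N)$ is a pure best response, so $s_i^N$ must be the pure action $a_i^N$, with $\BR_i^{\mathrm{pure}}(s_{-i}^N)=\{a_i^N\}=\supp(s_i^N)$. This gives quasi-strictness and matches the notation $s^N=(a_1^N,a_2^N)$ in the statement.

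Next, fix $i$ and use the hypothesis $u_i(s^N)=v_i$. \cref{prop:quasi-strict-support} then gives $\supp(s_i^M)\subseteq\{a_i^N\}$ for every $s_i^M\in M_i$, so every maximin strategy equals $a_i^N$. Since $M_i$ is non-empty (a continuous function on a compact simplex attains its maximum), we get $M_i=\{a_i^N\}$.

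An alternative route gives the same conclusion without quasi-strictness. By \cref{lem:equality-security}, every $s_i^M\in M_i$ is a best response to $a_{-i}^N$, and strictness makes that best response unique, so $s_i^M=a_i^N$. Doing this for both players shows that the unique maximin profile is $(a_1^N,a_2^N)=s^N$.

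No step is a serious obstacle. The only point needing care is the observation that the mixed best-response set of a strict equilibrium is a singleton, which forces the equilibrium strategy to be pure, and this is immediate from the definitions.
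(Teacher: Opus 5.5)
Your proof is correct and follows essentially the same route as the paper: a strict equilibrium is pure and quasi-strict, so \cref{prop:quasi-strict-support} forces $\supp(s_i^M)\subseteq\{a_i^N\}$. You also explicitly note that $M_i\neq\varnothing$ to upgrade this inclusion to equality, which the paper leaves implicit; your alternative via \cref{lem:equality-security} plus uniqueness of the best response is also valid and slightly more direct, since it bypasses quasi-strictness altogether.
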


    \begin{proof}
	A strict equilibrium in a finite normal-form game must be pure: if some player used a genuinely mixed strategy, then every pure action in its support would be another best response, contradicting strictness. Hence each support is a singleton, $\supp(s_i^{N})=\{a_i^{N}\}.$ A strict equilibrium is also quasi-strict. Applying \cref{prop:quasi-strict-support} to each player gives $\supp(s^M_i)\subseteq \{a_i^{N}\}\text{ for every }s^M_i\in M_i.$ The only mixed strategy whose support is contained in $\{a_i^{N}\}$ is the degenerate strategy that assigns probability $1$ to $a_i^{N}$. Hence $a_i^{N}$ is the unique maximin for each player, and the only maximin profile is $s^{N}$ itself.
\end{proof}

	Strictness removes all alternative best responses. Combined with equality to security level, it forces each maximin set to collapse to the unique equilibrium action, so the entire maximin profile is uniquely determined.

	\begin{theorem}[Quasi-strict equilibrium at maximin profiles]
		\label{thm:quasi-strict-pins-maximin-pairs}
		Let $\bar{s}$ be a quasi-strict equilibrium and suppose $u_i(\bar{s})=v_i$ for every $i \in N$.
		Then every maximin profile $(s^M_1,s^M_2)\in M_1\times M_2$ satisfies $u_i(s^M_1,s^M_2)=v_i$ for all $i \in N$.
	\end{theorem}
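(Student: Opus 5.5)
The plan is to combine the two support results already established: \cref{lem:equality-security} for player $i$, and \cref{prop:quasi-strict-support} for the opponent $-i$. The key observation is that a maximin strategy of player $i$ is \emph{flat at the security level} on the support of $\bar s_{-i}$. Quasi-strictness then confines the opponent's maximin strategy to that same support. Fix a maximin profile $(s^M_1,s^M_2)\in M_1\times M_2$ and a player $i$.

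First, I will show that $u_i(s^M_i,a_{-i})=v_i$ for every $a_{-i}\in\supp(\bar s_{-i})$. Since $s^M_i$ is maximin, it guarantees $v_i$ against every pure opponent action, so
\[
u_i(s^M_i,a_{-i})\geq v_i \quad\text{for all } a_{-i}\in A_{-i}.
\]
Since $u_i(\bar s)=v_i$, \cref{lem:equality-security} gives $u_i(s^M_i,\bar s_{-i})=v_i$. The left-hand side is the $\bar s_{-i}$-weighted average of the numbers $u_i(s^M_i,a_{-i})$, each of which is at least $v_i$, and the average equals $v_i$. Hence every action in $\supp(\bar s_{-i})$ must give exactly $v_i$, because any strict excess carries positive weight and would push the average above $v_i$.

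Second, I will apply \cref{prop:quasi-strict-support} to player $-i$. Its hypotheses hold because $\bar s$ is quasi-strict and $u_{-i}(\bar s)=v_{-i}$ by assumption. It yields $\supp(s^M_{-i})\subseteq\supp(\bar s_{-i})$. Expanding bilinearly,
\[
u_i(s^M_i,s^M_{-i})=\sum_{a_{-i}\in\supp(s^M_{-i})} s^M_{-i}(a_{-i})\,u_i(s^M_i,a_{-i}).
\]
Every term in this sum equals $v_i$ by the first step, so the sum equals $v_i$. Repeating the argument with the roles of the players exchanged proves the claim for both players.

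I do not expect a real obstacle. The only point needing care is that the hypothesis $u_j(\bar s)=v_j$ is used for \emph{both} players in each instance. For player $i$ it is needed to get flatness of $s^M_i$ on $\supp(\bar s_{-i})$. For player $-i$ it is needed, together with quasi-strictness, to get the support inclusion for $s^M_{-i}$. This explains why the theorem assumes equality of equilibrium pay-offs and security levels for every $i\in N$.
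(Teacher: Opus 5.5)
Your proposal is correct and follows essentially the same route as the paper. Both proofs use \cref{lem:equality-security} and an averaging argument to show that $s^M_i$ yields exactly $v_i$ against every action in $\supp(\bar s_{-i})$, then apply \cref{prop:quasi-strict-support} to the opponent to get $\supp(s^M_{-i})\subseteq\supp(\bar s_{-i})$ and conclude by bilinear expansion.
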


\begin{proof}
	Fix $(s^M_1,s^M_2)\in M_1\times M_2$. By \cref{prop:quasi-strict-support}, $\supp(s^M_i)\subseteq \supp(\bar{s}_i)$ for all $i\in N$.	We first prove the claim for player $1$. Since $s^M_1\in M_1$, $u_1(s^M_1,a_2)\ge v_1\text{ for every }a_2\in A_2.$ By \cref{lem:equality-security}, $u_1(s^M_1,\bar{s}_2)=v_1.$ Now expand $u_1(s^M_1,\bar{s}_2)$ as a convex combination:
	$v_1
	=
	u_1(s^M_1,\bar{s}_2)
	=
	\sum_{a_2\in \supp(\bar{s}_2)}
	\bar{s}_2(a_2)\,u_1(s^M_1,a_2).$ Each term in this average is at least $v_1$, and each coefficient $\bar{s}_2(a_2)$ is strictly positive on $\supp(\bar{s}_2)$. Therefore the average can equal $v_1$ only if $u_1(s^M_1,a_2)=v_1\text{ for every }a_2\in \supp(\bar{s}_2).$ Since $\supp(s^M_2)\subseteq \supp(\bar{s}_2)$, the same equality holds for every $a_2\in \supp(s^M_2)$. Averaging with respect to $s^M_2$ now yields
	\[
	u_1(s^M_1,s^M_2)
	=
	\sum_{a_2\in \supp(s^M_2)}
	s^M_2(a_2)\,u_1(s^M_1,a_2)
	=
	\sum_{a_2\in \supp(s^M_2)}
	s^M_2(a_2)\,v_1
	=
	v_1.
	\]

	The argument for player~2 is symmetric. Since $s^M_2\in M_2$, $u_2(a_1,s^M_2)\ge v_2\text{ for every }a_1\in A_1.$ Also, $u_2(\bar{s}_1,s^M_2)=v_2,$ because $s^M_2$ guarantees at least $v_2$, while $\bar{s}_2$ is a best response to $\bar{s}_1$ and $u_2(\bar{s})=v_2.$ 
    
    Expanding again as a convex combination,
	$$v_2= u_2(\bar{s}_1,s^M_2)= 
	\sum_{a_1\in \supp(\bar{s}_1)}
	\bar{s}_1(a_1)\,u_2(a_1,s^M_2).$$
	Every term is at least $v_2$, so equality of the average forces $u_2(a_1,s^M_2)=v_2\text{ for every }a_1\in \supp(\bar{s}_1).$ Using $\supp(s^M_1)\subseteq \supp(\bar{s}_1)$, we conclude
	$$
	u_2(s^M_1,s^M_2)
	=
	\sum_{a_1\in \supp(s^M_1)}
	s^M_1(a_1)\,u_2(a_1,s^M_2)
	=
	v_2.
	$$
	Hence $u_i(s^M_1,s^M_2)=v_i$ for all $i\in N$, as desired.
\end{proof}

	Here, the support restriction translates into a pay-off restriction. Every maximin strategy must live inside the equilibrium support, and on that support the relevant expected pay-offs are already forced to equal the security levels. Therefore any maximin profile reproduces the security vector.

	\begin{corollary}
		\label{cor:all-NE-equal-security}
		Assume that every Nash equilibrium $s^N$ satisfies $u_i(s^N)=v_i$ for every $i\in N$.
		Then every maximin profile $(s^M_1,s^M_2)\in M_1\times M_2$ satisfies $	u_i(s^M_1,s^M_2)=v_i$ for all $i\in N$.
	\end{corollary}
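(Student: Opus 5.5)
The plan is to reduce the corollary to \cref{thm:quasi-strict-pins-maximin-pairs}. That theorem already gives the conclusion whenever there is a single quasi-strict equilibrium $\bar s$ with $u_i(\bar s)=v_i$ for both players. Under the present hypothesis, \emph{every} Nash equilibrium has this pay-off property. So it suffices to know that the game has at least one quasi-strict Nash equilibrium.

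For that step I would invoke the known existence theorem for quasi-strict equilibria in finite two-player games (Norde, 1999, \emph{Bimatrix games have quasi-strict equilibria}), which extends Nash's existence theorem \citep{nash1951} in the bimatrix case. This result is not stated earlier in the paper, so it will be cited as an external theorem. Let $\bar s$ be such a quasi-strict equilibrium. By assumption $u_i(\bar s)=v_i$ for each $i\in N$. \Cref{thm:quasi-strict-pins-maximin-pairs} then yields $u_i(s^M_1,s^M_2)=v_i$ for every $(s^M_1,s^M_2)\in M_1\times M_2$, which is the claim.

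The main obstacle is the existence of a quasi-strict equilibrium; everything else is a direct citation. If a self-contained argument were preferred, I would check which features of $\bar s$ the proof of \cref{thm:quasi-strict-pins-maximin-pairs} actually uses. It needs three facts.
\begin{enumerate}[label=\textup{(\arabic*)}]
\item $s^M_i$ is a best response to $\bar s_{-i}$; this holds for any Nash equilibrium by \cref{lem:equality-security}.
\item Player $i$'s expected pay-off under $\bar s_{-i}$ equals $v_i$; this is also available for any equilibrium.
\item $\supp(s^M_{-i})\subseteq\supp(\bar s_{-i})$; this is where quasi-strictness enters, through \cref{prop:quasi-strict-support}.
\end{enumerate}
So a substitute would need an equilibrium whose supports are large enough to contain the supports of all maximin strategies. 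I would try to build one from the convex, polyhedral structure of the two-player equilibrium components. Since mixtures of equilibria are generally not equilibria, this is exactly what Norde's theorem handles. I would therefore keep the citation as the primary route.
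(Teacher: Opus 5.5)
Your proposal is correct and is exactly the paper's argument. The paper cites \citet{norde1999} for the existence of a quasi-strict equilibrium $\bar s$, notes that $u_i(\bar s)=v_i$ by hypothesis, and concludes via \cref{thm:quasi-strict-pins-maximin-pairs}; your extra breakdown of which features of $\bar s$ that theorem uses is accurate but not needed.
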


    \begin{proof}
	By \citet{norde1999}, every two-player game has a quasi-strict equilibrium. Let $\bar{s}$ be such a quasi-strict equilibrium. By assumption, $u_i(\bar{s})=v_i$ for all $i\in N$. The conclusion now follows from \cref{thm:quasi-strict-pins-maximin-pairs}.
\end{proof}
    
	If every equilibrium already lies on the security vector, then choosing one quasi-strict equilibrium is enough to transmit that equality to all maximin profiles.

\subsubsection{When maximin profiles are Nash equilibria}

    We now start from a maximin profile and ask what follows when it is either a minimax profile or a Nash equilibrium. 

    \begin{proposition}[Maximin and minimax profile implies Nash equilibrium]
	\label{prop:maximin-plus-minimax-implies-NE}
    For each $i\in N$, suppose that $s_i^M$ is both a maximin strategy and a minimax strategy of player $i$. Then, the profile $s^M$ is a Nash equilibrium and $u(s^M)=(v_1,v_2)$.
\end{proposition}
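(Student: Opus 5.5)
The plan is to squeeze both players' pay-offs at $s^M$ between the security levels from below and above. We then conclude that each component is a best response because it attains the maximal pay-off against the opponent's strategy.

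First, the lower bounds. Since $s_1^M$ is maximin for player~1, $u_1(s_1^M,t_2)\ge v_1$ for every $t_2\in S_2$, and in particular $u_1(s^M)\ge v_1$. Symmetrically, $u_2(s^M)\ge v_2$.

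Second, the upper bounds, using the minimax property. Since $s_2^M$ is a minimax strategy of player~2, it minimises $\max_{t_1}u_1(t_1,\cdot)$ over $S_2$. The minimax theorem (\cref{rem:maximin-minimax}) identifies that minimum with $v_1$, so
\[
\max_{t_1\in S_1}u_1(t_1,s_2^M)=v_1 .
\]
Hence $u_1(s_1^M,s_2^M)\le v_1$, which is exactly the argument of \cref{rem:minimax-component} with $t_1=s_1^M$. Combined with the lower bound, this gives $u_1(s^M)=v_1=\max_{t_1}u_1(t_1,s_2^M)$, so $s_1^M\in\BR_1(s_2^M)$. The same argument for player~2 uses that $s_1^M$ is minimax, so that $\max_{t_2}u_2(s_1^M,t_2)=v_2$. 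It yields $u_2(s^M)=v_2$ and $s_2^M\in\BR_2(s_1^M)$. Therefore $s^M$ is a Nash equilibrium with $u(s^M)=(v_1,v_2)$.

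The only delicate point is the bookkeeping in the definition of a minimax strategy. In the paper's convention, player~$i$'s minimax strategy caps the \emph{opponent's} best-reply pay-off. So the upper bound on player~1's pay-off comes from player~2's minimax property, and vice versa, and I would keep this cross-assignment explicit. Beyond that, the main step is invoking the minimax theorem to equate the cap $\overline v_i$ with the security level $\underline v_i$. Once that is done, the rest is immediate.
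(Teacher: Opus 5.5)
Your proposal is correct and follows the paper's own proof: the lower bound comes from the maximin property, and the upper bound $\max_{t_1}u_1(t_1,s_2^M)=v_1$ comes from the opponent's minimax property via \cref{rem:maximin-minimax}. Squeezing the two gives $u_i(s^M)=v_i$, and since this equals $\max_{t_i}u_i(t_i,s_{-i}^M)$, each component is a best response. You also keep the cross-assignment of the minimax convention straight, exactly as the paper does.
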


\begin{proof}
	Because $s_2^M$ is a minimax strategy of player $2$, it attains the minimum in the definition of player $1$'s minimax value. Hence $\max_{t_1\in S_1}u_1(t_1,s_2^M)=v_1$ by Remark 1. Therefore $u_1(t_1,s_2^M)\le v_1\text{ for every }t_1\in S_1.$ On the other hand, since $s_1^M\in M_1$ is a maximin strategy, $\min_{t_2\in S_2}u_1(s_1^M,t_2)=v_1,$ so $u_1(s_1^M,t_2)\ge v_1\text{ for every }t_2\in S_2.$ Evaluating the last inequality at $t_2=s_2^M$ gives $u_1(s_1^M,s_2^M)\ge v_1,$ while evaluating the first inequality at $t_1=s_1^M$ gives $u_1(s_1^M,s_2^M)\le v_1.$ Hence $u_1(s_1^M,s_2^M)=v_1.$ Since every $t_1\in S_1$ satisfies $u_1(t_1,s_2^M)\le v_1=u_1(s_1^M,s_2^M)$, $s_1^M\in \BR_1(s_2^M)$.
	
	The argument for player $2$ is symmetric. Because $s_1^M$ is a minimax strategy of player $1$, $\max_{t_2\in S_2}u_2(s_1^M,t_2)=v_2$,
	so $u_2(s_1^M,t_2)\le v_2\text{ for every }t_2\in S_2.$ Since $s_2^M\in M_2$ is maximin, $\min_{t_1\in S_1}u_2(t_1,s_2^M)=v_2,$ hence $u_2(t_1,s_2^M)\ge v_2\text{ for every }t_1\in S_1.$ Evaluating at $t_1=s_1^M$ and $t_2=s_2^M$ yields $u_2(s_1^M,s_2^M)=v_2.$ Therefore, $s_2^M\in \BR_2(s_1^M).$ Thus, $s^M=(s_1^M,s_2^M)$ is a Nash equilibrium.
\end{proof}

	\begin{proposition}[Maximin Nash equilibria and saddle inequalities]
		\label{prop:maximin-profiles-saddle}
		Let $s^M=(s_1^M,s_2^M)$ be both a maximin profile and a Nash equilibrium. The following are equivalent:
		\begin{enumerate}[label=\textup{(\roman*)}]
			\item $u(s^M)=(v_1,v_2)$;
			\item for every $t_1\in S_1$ and $t_2\in S_2$,
			\[
			u_1(t_1,s_2^M)\le u_1(s^M)\le u_1(s_1^M,t_2),
			\]
			\[
			u_2(s_1^M,t_2)\le u_2(s^M)\le u_2(t_1,s_2^M).
			\]
		\end{enumerate}
	\end{proposition}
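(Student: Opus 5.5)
The plan is to prove the two implications separately, using only the definitions, the Nash property of $s^M$, and the fact that each maximin strategy guarantees $v_i$.

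\emph{(ii) $\Rightarrow$ (i).} This direction is the easy one. The first display says $s_1^M$ guarantees $u_1(s^M)$ against every $t_2$, so $v_1\ge u_1(s^M)$. It also says that $s_2^M$ holds player 1's best pay-off down to $u_1(s^M)$. Hence
\[
v_1=\overline v_1\le \max_{t_1}u_1(t_1,s_2^M)=u_1(s^M).
\]
Together these give $u_1(s^M)=v_1$, where the identification $\overline v_1=v_1$ is \cref{rem:maximin-minimax}. The second display gives $u_2(s^M)=v_2$ in the same way.

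\emph{(i) $\Rightarrow$ (ii).} Here we assume $u_i(s^M)=v_i$ for both players, and each of the four inequalities follows directly.
\begin{itemize}
\item The right inequality for player 1, $u_1(s^M)\le u_1(s_1^M,t_2)$, holds because $s_1^M$ is maximin. For every $t_2$ we have $u_1(s_1^M,t_2)\ge v_1=u_1(s^M)$.
\item The left inequality for player 1, $u_1(t_1,s_2^M)\le u_1(s^M)$, is exactly the statement that $s_1^M\in\BR_1(s_2^M)$. This holds because $s^M$ is a Nash equilibrium.
\item For player 2, the left inequality $u_2(s_1^M,t_2)\le u_2(s^M)$ is the Nash best-response property of $s_2^M$.
\item The right inequality $u_2(s^M)\le u_2(t_1,s_2^M)$ holds because $s_2^M$ guarantees $v_2=u_2(s^M)$ against every $t_1$.
\end{itemize}
Assumption (i) is used only to replace $v_i$ by $u_i(s^M)$ in the guarantee inequalities. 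The Nash property supplies the other half of each saddle.

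I do not expect a genuine obstacle. The one point needing care is the (ii) $\Rightarrow$ (i) direction. There I would invoke the minimax equality of \cref{rem:maximin-minimax} explicitly to turn "$s_{-i}^M$ caps player $i$ at $u_i(s^M)$" into the bound $v_i\le u_i(s^M)$.

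Alternatively, I could use \cref{rem:NE-above-security}. Since $s^M$ is a Nash equilibrium, that remark gives $u_i(s^M)\ge v_i$ directly, so this direction would not need minimax duality at all. I will present it that way, because it is the shortest route.
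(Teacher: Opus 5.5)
Your proof is correct. In the version you say you will present, it matches the paper's argument: (i)$\Rightarrow$(ii) pairs the Nash best-response property with the maximin guarantee $u_i\ge v_i=u_i(s^M)$, and (ii)$\Rightarrow$(i) combines the right-hand saddle inequality (giving $u_i(s^M)\le v_i$) with \cref{rem:NE-above-security} (giving $u_i(s^M)\ge v_i$). Your first route through the minimax equality in \cref{rem:maximin-minimax} is also valid, though not needed.
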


    \begin{proof}
	Assume \textup{(i)}. Since $s^M$ is a Nash equilibrium, $u_1(t_1,s_2^M)\le u_1(s^M)$  and  $u_2(s_1^M,t_2)\le u_2(s^M)$ for all $t_1\in S_1$ and $t_2\in S_2$. Since $s_1^M\in M_1$ and $s_2^M\in M_2$ are maximin strategies, $u_1(s_1^M,t_2)\ge v_1=u_1(s^M)$ and $ u_2(t_1,s_2^M)\ge v_2=u_2(s^M)$ for all $t_1\in S_1$ and $t_2\in S_2$. Hence \textup{(ii)} holds.
	
	Assume \textup{(ii)}. Then $u_1(s^M)\le \min_{t_2\in S_2}u_1(s_1^M,t_2)=v_1,$ because $s_1^M\in M_1$ is a maximin, and $u_2(s^M)\le \min_{t_1\in S_1}u_2(t_1,s_2^M)=v_2,$ because $s_2^M\in M_2$ is a maximin. On the other hand, since $s^M$ is a Nash equilibrium, Remark 2 gives $u_i(s^M)\ge v_i$ for all $i\in N$. Therefore $u(s^M)=(v_1,v_2),$ so \textup{(i)} holds.
\end{proof}

If a maximin profile is also a Nash equilibrium, then it realises the security vector exactly when each player's opponent is minimax against her. Equivalently, the profile is then a simultaneous ``saddle point'' of both pay-off functions.
    
	\begin{proposition}[When every maximin profile is a security-realizing equilibrium]
		\label{prop:maximin-rectangle}
		The following are equivalent:
		\begin{enumerate}[label=\textup{(\roman*)}]
			\item every maximin profile $(s_1^M,s_2^M)\in M_1\times M_2$ is a Nash equilibrium and satisfies $u(s_1^M,s_2^M)=(v_1,v_2)$;
			\item
			\[
			M_1\subseteq \argmin_{t_1\in S_1}\max_{t_2\in S_2}u_2(t_1,t_2)
			\qquad\text{and}\qquad
			M_2\subseteq \argmin_{t_2\in S_2}\max_{t_1\in S_1}u_1(t_1,t_2).
			\]
		\end{enumerate}
	\end{proposition}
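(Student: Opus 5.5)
The plan is to prove the two directions separately, using Remark~1 (minimax theorem) and the preceding Propositions~\ref{prop:maximin-plus-minimax-implies-NE} and~\ref{prop:maximin-profiles-saddle}.

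\emph{(ii)$\Rightarrow$(i).} Take any maximin profile $(s_1^M,s_2^M)\in M_1\times M_2$. By (ii), each $s_i^M$ is also a minimax strategy of player $i$ in the sense of the paper's definition: $s_1^M$ minimises $\max_{t_2}u_2(\cdot,t_2)$, and $s_2^M$ minimises $\max_{t_1}u_1(t_1,\cdot)$. This is exactly the hypothesis of Proposition~\ref{prop:maximin-plus-minimax-implies-NE}. That proposition gives that the profile is a Nash equilibrium with $u(s^M)=(v_1,v_2)$. Since the profile was arbitrary, (i) follows.

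\emph{(i)$\Rightarrow$(ii).} Fix $s_1^M\in M_1$. We must show that $\max_{t_2}u_2(s_1^M,t_2)=v_2$, since by Remark~1 the minimum of this quantity over $S_1$ is $v_2$. Choose any $s_2^M\in M_2$; this is nonempty by compactness. By (i), $(s_1^M,s_2^M)$ is a Nash equilibrium with $u_2(s_1^M,s_2^M)=v_2$. Because $s_2^M$ is a best response to $s_1^M$,
\[
\max_{t_2}u_2(s_1^M,t_2)=u_2(s_1^M,s_2^M)=v_2 .
\]
So $s_1^M$ attains the minimax value against player~2, i.e.\ $s_1^M\in\argmin_{t_1}\max_{t_2}u_2(t_1,t_2)$. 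Equivalently, Proposition~\ref{prop:security-attainment-dual} applied to player~2 at this equilibrium gives the same conclusion directly. The symmetric argument, fixing $s_2^M\in M_2$ and pairing it with some $s_1^M\in M_1$, gives the second inclusion.

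There is no real obstacle. The one point needing care is the quantifier in (i)$\Rightarrow$(ii): to certify that a given $s_1^M$ is minimax, we need some partner in $M_2$. This is guaranteed because $M_2$ is nonempty, and (i) applies to every pair in $M_1\times M_2$, so any partner works.
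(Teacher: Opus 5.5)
Your proof is correct and follows essentially the same route as the paper. For (ii)$\Rightarrow$(i) the paper re-derives inline the argument of \cref{prop:maximin-plus-minimax-implies-NE} that you cite, and for (i)$\Rightarrow$(ii) it pairs an arbitrary maximin strategy with some element of the nonempty opposite maximin set and invokes \cref{prop:security-attainment-dual}, as in your direct argument and your closing remark.
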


\begin{proof}
	Assume \textup{(ii)} and let $(s_1^M,s_2^M)\in M_1\times M_2$ be a maximin profile. Since $s_2^M$ is minimax against player $1$, $\max_{t_1\in S_1}u_1(t_1,s_2^M)=v_1$ by Remark 1. Hence $u_1(t_1,s_2^M)\le v_1\text{ for every }t_1\in S_1.$ Since $s_1^M\in M_1$, $\min_{t_2\in S_2}u_1(s_1^M,t_2)=v_1,$ so $$u_1(s_1^M,t_2)\ge v_1\text{ for every }t_2\in S_2.$$ Evaluating at $t_1=s_1^M$ and $t_2=s_2^M$ yields $u_1(s_1^M,s_2^M)=v_1.$ Therefore $s_1^M\in \BR_1(s_2^M)$. By symmetry, $u_2(s_1^M,s_2^M)=v_2$ and $ s_2^M\in \BR_2(s_1^M).$ Hence $(s_1^M,s_2^M)$ is a Nash equilibrium, and \textup{(i)} holds.
	
	Assume \textup{(i)}. Let $s_2^M\in M_2$ be a maximin strategy. Because $M_1\neq\varnothing$, choose any $s_1^M\in M_1$. Then $(s_1^M,s_2^M)$ is a Nash equilibrium and $u_1(s_1^M,s_2^M)=v_1.$ By \cref{prop:security-attainment-dual},
	$
	s_2^M\in \argmin_{t_2\in S_2}\max_{t_1\in S_1}u_1(t_1,t_2).
	$
	Since $s_2^M$ was arbitrary, $$M_2\subseteq \argmin_{t_2\in S_2}\max_{t_1\in S_1}u_1(t_1,t_2).$$ The proof that
	$
	M_1\subseteq \argmin_{t_1\in S_1}\max_{t_2\in S_2}u_2(t_1,t_2)
	$
	is symmetric.
\end{proof}

	This result provides a set-valued analogue of \cref{prop:maximin-profiles-saddle}. Rather than asking when a particular maximin profile is an exact security equilibrium, it characterises when every maximin profile does so. The condition is that each player's entire maximin set lies within the opponent-side minimax set against the other player.

\subsection{\texorpdfstring{$n$}{n}-player extensions}
\label{app:nplayer}

For player \(i\), let \(A_{-i}:=\prod_{j\neq i}A_j\), let \(S_{-i}^{I}:=\prod_{j\neq i}S_j\) denote the independent strategies of the other players, and let \(S_{-i}^{C}:=\Delta(A_{-i})\) denote arbitrary correlated distributions over their joint actions. For every \(s_i\in S_i\),
$\min_{s_{-i}\in S_{-i}^{I}}u_i(s_i,s_{-i})
=
\min_{\mu_{-i}\in S_{-i}^{C}}u_i(s_i,\mu_{-i})
=
\min_{a_{-i}\in A_{-i}}u_i(s_i,a_{-i}).$
Both strategy sets contain every degenerate distribution on a pure joint profile, while every mixed or correlated pay-off is a convex combination of pure-profile pay-offs. Hence the two definitions yield the same maximin values and strategies. A maximin profile remains a product profile in \(\prod_i M_i\).

Strategic equivalence becomes $
\widetilde u_i(a_i,a_{-i})
=
u_i(a_i,a_{-i})+h_i(a_{-i}),$
where \(h_i\) is defined on the full joint action set of the other players. The added term still cancels from every unilateral pay-off comparison. Therefore, Lemma 1 and Theorem 1 extend unchanged. In particular, the proof of Theorem 1 again chooses a pure best response to the selected equilibrium strategy of the other players and assigns that action the constant target pay-off \(u_i(\bar s)\).

The characterisations in Theorems 2 and 3 also extend, with \(a_{-i}\) interpreted as a joint action profile, \(r_i\colon A_{-i}\to\R_+\), and the supporting belief taken from \(\Delta(A_{-i})\). Correlation is required for necessity because the dual problem treats the other players as a single adversarial coalition; restricting the supporting belief to independent strategies remains sufficient but is generally not necessary.

The reversal results in \cref{thm:selected-equilibrium-reversal,thm:full-nash-maximin-reversal} remain valid under the same weak genericity requirement after replacing the componentwise support assumptions by $\lvert\supp(\bar s_{-i})\rvert\geq3 $ for every $i$,
with the corresponding condition imposed at every Nash equilibrium in \cref{thm:full-nash-maximin-reversal}.

Finally, Theorem 4 extends unchanged to the corresponding finite-dimensional \(n\)-player game space. Mutatis mutandis, the arguments in the proof apply with player \(i\) facing the joint action set \(A_{-i}\). The \(n\)-player analogue of Theorem 5 also holds after replacing the two-player action threshold by its dimensional counterpart: strict maximin dominance occurs on a non-empty open set whenever at least three players have at least two actions, or exactly two players have at least two actions and one of them has at least three. Thus, in \(n\)-player games, a third player can play the same dimensional role as a third action in the two-player case.

\subsection{Economic applications}
\label{sec:economic-applications}

In this section, we compare mutual maximin behaviour with Nash equilibrium in several familiar economic environments: Cournot competition with capacity constraints, a guns-versus-butter conflict model \citep{garfinkel2007}, and an interdependent-security model related to the system-reliability framework of \citep{varian2004}.

\subsubsection{Cournot with capacity constraints}
\label{subsec:capacity-cournot-application}

Two firms choose quantities $q_i\in[0,K]$. Inverse demand is $P(Q)=a-bQ$, where $b>0$, marginal cost is $c\geq0$, and $a-c>0$. Firm $i$'s profit is $ \pi_i(q_i,q_{-i})=q_i\bigl(a-c-bq_i-bq_{-i}\bigr).$
The linear specification is used on the feasible action set. To ensure non-negative prices, assume $2K\leq a/b$.

\begin{proposition}
\label{prop:application-cournot}
The unique maximin action and its associated security level are given by
\begin{equation}
\left(q^M, v_i\right)=
\begin{cases}
\left(K, K(a-c-2bK)\right), & 0<K\leq \dfrac{a-c}{3b},\\[0.7em]
\left(\dfrac{a-c-bK}{2b}, \dfrac{(a-c-bK)^2}{4b}\right), & \dfrac{a-c}{3b}<K<\dfrac{a-c}{b},\\[0.9em]
(0,0), & K\geq \dfrac{a-c}{b}.
\end{cases}
\label{eq:cournot-maximin-action}
\end{equation}
The unique Nash equilibrium is symmetric, with $
q^N=\min\left\{K,\frac{a-c}{3b}\right\}$. Consequently, mutual maximin and Nash coincide for $K\leq(a-c)/(3b)$; mutual maximin strictly Pareto dominates Nash for $\frac{a-c}{3b}<K<\frac{2(a-c)}{3b}$; the two profiles yield equal profits at $K=2(a-c)/(3b)$; and Nash strictly Pareto dominates mutual maximin for $K>2(a-c)/(3b)$.
\end{proposition}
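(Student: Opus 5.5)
The plan is to compute the two solution concepts directly, exploiting the quadratic structure of $\pi_i$, and then compare symmetric profits along one concave function.

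\emph{Maximin.} First I identify the adversarial opponent. For fixed $q_i>0$, $\pi_i(q_i,q_{-i})$ is strictly decreasing in $q_{-i}$, and for $q_i=0$ it is identically zero. Hence the worst case is always $q_{-i}=K$, and the security function is
\[
\phi(q_i)=q_i\bigl(a-c-bK-bq_i\bigr),\qquad q_i\in[0,K].
\]
Pure and mixed opponent strategies give the same minimum, since $\pi_i$ is linear in the opponent's mixture. The function $\phi$ is strictly concave, so its maximiser on $[0,K]$ is unique: it is the projection of the vertex $(a-c-bK)/(2b)$ onto $[0,K]$.

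The three regimes in \eqref{eq:cournot-maximin-action} then follow.
\begin{itemize}
\item The vertex is at least $K$ if and only if $K\le (a-c)/(3b)$, giving $(K,K(a-c-2bK))$.
\item The vertex lies in $(0,K)$ if and only if $(a-c)/(3b)<K<(a-c)/b$, giving value $(a-c-bK)^2/(4b)$.
\item The vertex is nonpositive if and only if $K\ge (a-c)/b$. Then $\phi$ is strictly decreasing on $[0,K]$, so $q^M=0$ and the value is $0$.
\end{itemize}
Strict concavity also shows that no nondegenerate mixture over quantities can do as well as the pure maximiser. I read the action set as $[0,K]$, so that the maximin strategy is a quantity.

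\emph{Nash.} The best response is $\mathrm{BR}(q)=\min\{K,\max\{0,(a-c-bq)/(2b)\}\}$. It is single-valued because $\pi_i$ is strictly concave in own quantity, which also rules out mixed equilibria. It is a contraction with modulus $1/2$, so the equilibrium is unique, and by symmetry it is symmetric. Solving $q=\mathrm{BR}(q)$ gives $q^N=\min\{K,(a-c)/(3b)\}$.

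\emph{Comparison.} Both profiles are symmetric, so each firm earns $g(q)=q(a-c-2bq)$. This function is strictly concave with peak at $(a-c)/(4b)$ and symmetric about it, so Pareto comparisons reduce to comparing $g$ at the two quantities.
\begin{itemize}
\item If $K\le (a-c)/(3b)$, then $q^M=q^N=K$ and the profiles coincide.
\item If $(a-c)/(3b)<K<(a-c)/b$, then $q^N=(a-c)/(3b)$ and $q^M=(a-c-bK)/(2b)<(a-c)/(3b)$. Since $q^N$ lies $(a-c)/(12b)$ above the peak, $g(q^M)>g(q^N)$ exactly when $q^M>(a-c)/4b-(a-c)/12b=(a-c)/(6b)$. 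This holds if and only if $K<2(a-c)/(3b)$, with equality at $K=2(a-c)/(3b)$ and reversal above it.
\item If $K\ge (a-c)/b$, mutual maximin yields $0<(a-c)^2/(9b)$, so Nash strictly dominates.
\end{itemize}
The constraint $2K\le a/b$ only ensures prices are nonnegative and does not alter any of these cutoffs.

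The step needing most care is the maximin computation. Two points must be justified cleanly: that $q_{-i}=K$ is the worst case uniformly in $q_i$, and that the boundary regime $K\ge(a-c)/b$ gives $q^M=0$ uniquely. After that, the comparison is routine once it is reduced to the single concave function $g$.
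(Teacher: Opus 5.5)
Your proposal is correct and follows essentially the same route as the paper. Both arguments take $q_{-i}=K$ as the worst case, project the vertex of the strictly concave security function onto $[0,K]$, and use the $1/2$-Lipschitz best response to get a unique symmetric equilibrium. The only cosmetic differences are in the final comparison and in mixed strategies. You use the symmetry of $g(q)=q(a-c-2bq)$ about its peak $(a-c)/(4b)$, whereas the paper factors the profit difference into $(3bK-(a-c))(2(a-c)-3bK)$. You also explicitly rule out mixed strategies, which the paper leaves implicit.
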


\begin{proof}
Profit is decreasing in the opponent's quantity, so the worst feasible opponent action is $q_{-i}=K$. The security problem is therefore $\max_{q_i\in[0,K]}q_i(a-c-bK-bq_i)$.
The objective is strictly concave and has unconstrained maximiser $(a-c-bK)/(2b)$. Imposing the constraint $q_i\in[0,K]$, the maximiser is $K$ if this expression is at least $K$, $0$ if it is non-positive, and the expression itself otherwise. This gives \eqref{eq:cournot-maximin-action}.

The best response is $\BR_i(q_{-i})=\min\left\{K,\max\left\{0,\frac{a-c-bq_{-i}}{2b}\right\}\right\}$.
It is $1/2$-Lipschitz, so the joint best-response map is a contraction and has a unique fixed point. Symmetry then gives the result. At the maximin profile,
\[
\pi_i(q^M,q^M)=
\begin{cases}
K(a-c-2bK), & K\leq(a-c)/(3b),\\[0.4em]
K(a-c-bK)/2, & (a-c)/(3b)<K<(a-c)/b,\\[0.4em]
0, & K\geq(a-c)/b.
\end{cases}
\]
At Nash, profit equals $K(a-c-2bK)$ when the cap binds and $(a-c)^2/(9b)$ otherwise. In the middle region, the sign is determined by the two factors $3bK-(a-c)$ and $2(a-c)-3bK$, which establish the stated ranking.
\end{proof}

The capacity cap makes the opponent's most aggressive action finite. Maximin therefore is the best response to an opponent producing $K$. When the cap is tight, it binds under both maximin and Nash, so the two rules select the same quantity. For intermediate values of $K$, mutual maximin reduces output relative to Nash, raises the market price, and increases both firms' profits. When the cap is sufficiently loose, however, maximin becomes too cautious: output falls sharply and eventually reaches zero, so Nash again yields higher profits.

\subsubsection{Guns versus butter}
\label{subsec:guns-butter-application}

Each player has an endowment $R>0$ and allocates $g_i\in[0,R]$ to arms. The remainder produces butter, so total production is $2R-g_i-g_{-i}$. If at least one player arms, player $i$ obtains the Tullock share of production, $
u_i(g_i,g_{-i})=\frac{g_i}{g_i+g_{-i}}\bigl(2R-g_i-g_{-i}\bigr)$ where $g_i+g_{-i}>0$. At $(0,0)$, production is divided equally and $u_i(0,0)=R$.

\begin{proposition}[Guns versus butter]
\label{prop:application-guns}
The unique maximin action is $g^M=R(\sqrt2-1)$, with security level $v_i=R(3-2\sqrt2)$.
The game has a unique Nash equilibrium $g_1^N=g_2^N=\frac R2$. At the two symmetric profiles $u_i(g^M,g^M)=R(2-\sqrt2)$ and $ u_i(g^N,g^N)=\frac R2$,
so
\begin{equation}
u_i(g^M,g^M)-u_i(g^N,g^N)=R\left(\frac32-\sqrt2\right)=\frac R2(\sqrt2-1)^2>0.
\label{eq:guns-gap}
\end{equation}
Thus the maximin profile strictly Pareto dominates the unique Nash equilibrium for every $R>0$.
\end{proposition}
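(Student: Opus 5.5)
The plan is to treat the game directly on the continuum action set $[0,R]$. I write $S:=g_i+g_{-i}$, so that for $S>0$
\[
u_i(g_i,g_{-i})=g_i\Bigl(\frac{2R}{S}-1\Bigr).
\]
The argument has three steps: compute the security function, solve for the Nash equilibrium via first-order conditions while ruling out corners, and then compare the pay-offs at the two symmetric profiles.

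\emph{Maximin.} For fixed $g_i>0$, the expression above is strictly decreasing in $S$, hence in $g_{-i}$. So the worst opponent action is $g_{-i}=R$, uniformly in $g_i$. For $g_i=0$, the pay-off is $0$ whenever $g_{-i}>0$, so its guarantee is $0$; the value $R$ at $(0,0)$ is irrelevant for the minimum. The security function is therefore
\[
\varphi(g)=g\Bigl(\frac{2R}{g+R}-1\Bigr)=\frac{g(R-g)}{g+R}.
\]
Differentiating gives numerator $R^2-2gR-g^2$, so the unique critical point in $[0,R]$ is $g^M=R(\sqrt2-1)$. This gives $\varphi(g^M)=R(\sqrt2-1)(2-\sqrt2)/\sqrt2=R(3-2\sqrt2)$. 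Since $\varphi(0)=\varphi(R)=0$ and $\varphi$ is strictly concave on $[0,R]$, the maximiser is unique. If mixed strategies are admitted, the same opponent action $R$ is worst against every pure $g_i$ simultaneously. Hence a mixture $\sigma$ guarantees at most $\mathbb E_\sigma\varphi\le\varphi(g^M)$, with equality only for the point mass at $g^M$.

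\emph{Nash equilibrium.} For $g_{-i}>0$, $u_i$ is strictly concave in $g_i$, since $\partial^2u_i/\partial g_i^2=-4Rg_{-i}/S^3<0$. Its derivative is $2Rg_{-i}/S^2-1$. An interior equilibrium requires $2Rg_2=S^2=2Rg_1$, so $g_1=g_2=g$ and $2Rg=4g^2$, which gives $g^N=R/2$.

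The corners must then be excluded, and I expect this step, together with the discontinuity at the origin, to be the main obstacle.
\begin{enumerate}[label=\textup{(\arabic*)}]
\item If $g_{-i}=0$, player $i$ gets $2R-g_i$ for $g_i>0$ but only $R$ at $g_i=0$. The supremum $2R$ is not attained, so player $i$ has no best response and no equilibrium has a zero component.
\item If $g_i=R$ at equilibrium, we need $2Rg_{-i}/(R+g_{-i})^2\ge1$, which forces $(R-g_{-i})^2\le0$, i.e.\ $g_{-i}=R$. But at $(R,R)$ the derivative is $2R^2/4R^2-1<0$, a contradiction.
\end{enumerate}
So the unique equilibrium is $(R/2,R/2)$. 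Strict concavity guarantees that this first-order solution is a global best response.

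\emph{Comparison.} At any symmetric profile $(g,g)$ with $g>0$, each player gets half of production, namely $R-g$. This yields $u_i(g^M,g^M)=R(2-\sqrt2)$ and $u_i(g^N,g^N)=R/2$. Their difference is $R(3/2-\sqrt2)=\tfrac R2(\sqrt2-1)^2>0$, which is \eqref{eq:guns-gap}. Hence the maximin profile strictly Pareto dominates the unique Nash equilibrium for every $R>0$.
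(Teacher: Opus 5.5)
Your proof is correct and follows essentially the same route as the paper. It takes $g_{-i}=R$ as the worst case, which gives the concave security function $g(R-g)/(g+R)$; it uses strict concavity of own pay-off together with the non-existence of a best response to $g_{-i}=0$; and it uses the half-of-production identity $u_i(g,g)=R-g$. Your mixed-strategy remark and your explicit treatment of the upper corner are harmless additions that the paper leaves implicit.

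One small algebra slip: in corner (2), the condition $2Rg_{-i}\ge(R+g_{-i})^2$ rearranges to $R^2+g_{-i}^2\le 0$, not to $(R-g_{-i})^2\le 0$. So the condition fails for every $g_{-i}$ outright; indeed $2Rg_{-i}/(R+g_{-i})^2\le\tfrac12$ by AM--GM. This is consistent with your finding that the derivative is negative at $(R,R)$, and the conclusion is unaffected.
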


\begin{proof}
For $g_i>0$, $
\frac{\partial u_i(g_i,g_{-i})}{\partial g_{-i}}=-\frac{2Rg_i}{(g_i+g_{-i})^2}<0$.
The worst feasible opponent action is therefore $g_{-i}=R$. Hence, the minimum pay-off that action $g$ yields against the opponent's possible actions is $
\underline u(g)
:=
\min_{g_{-i}\in[0,R]}u_i(g,g_{-i})
=
\frac{g(R-g)}{g+R}$,
where $\underline u(0)=0$.
It satisfies $\underline u'(g)=\frac{R^2-2Rg-g^2}{(g+R)^2}$ and  $\underline u''(g)=-\frac{4R^2}{(g+R)^3}<0.$
The unique maximiser is the positive root of $R^2-2Rg-g^2=0$, which gives the maximin. Substitution yields the security level.

For $g_{-i}>0$, own pay-off is strictly concave and the unique best response is $\BR_i(g_{-i})=\sqrt{2Rg_{-i}}-g_{-i}$. There is no best response to $g_{-i}=0$: every $g_i>0$ gives $2R-g_i$, whose supremum is approached as $g_i$ tends to zero, while $g_i=0$ gives only $R$. Thus an equilibrium must have both actions positive. The two first-order conditions imply $g_1=g_2$, and substitution into the best-response formula gives the Nash equilibrium.

At a symmetric positive profile each player receives one-half of total production, so $u_i(g,g)=R-g$. Evaluating at $g^M$ and $g^N$ gives \eqref{eq:guns-gap}.
\end{proof}

Arming improves a player's share of total output, but it also reduces the amount left for production. Nash players take account of the first effect but not the loss they impose on the opponent through lower total production. This creates an arms race. Maximin responds to a fully armed opponent and therefore chooses less arming than Nash. When both players choose the maximin action, each still receives one-half of total output, but more resources remain for production. Mutual maximin therefore gives both players a higher pay-off for every $R>0$.

\subsubsection{Interdependent security}
\label{subsec:security-application}

Each player chooses $x_i\in[0,1]$, interpreted as the success probability of an independent protective measure. The common system is protected whenever at least one measure succeeds, so $P(x_i,x_{-i})=1-(1-x_i)(1-x_{-i})=x_i+x_{-i}-x_i x_{-i}$.
Each player values successful protection at $V>0$ and bears quadratic cost,
\[
u_i(x_i,x_{-i})=V(x_i+x_{-i}-x_i x_{-i})-\frac c2x_i^2,\qquad c>0.
\]

\begin{proposition}[Interdependent security]
\label{prop:application-security}
The unique maximin action is
\begin{equation}
x^M=\min\left\{\frac Vc,1\right\}.
\label{eq:security-maximin}
\end{equation}
Its security level is
\begin{equation}
v_i=
\begin{cases}
\dfrac{V^2}{2c}, & 0<V\leq c,\\[0.8em]
V-\dfrac c2, & V\geq c.
\end{cases}
\label{eq:security-level}
\end{equation}
The complete Nash equilibrium set is
\begin{equation}
\NE(G)=
\begin{cases}
\left\{\left(\dfrac{V}{V+c},\dfrac{V}{V+c}\right)\right\}, & 0<V<c,\\[0.9em]
\left\{(x,1-x):x\in[0,1]\right\}, & V=c,\\[0.9em]
\left\{\left(\dfrac{V}{V+c},\dfrac{V}{V+c}\right),(1,0),(0,1)\right\}, & V>c.
\end{cases}
\label{eq:security-equilibria}
\end{equation}
The pay-off comparison is as follows:
\begin{enumerate}[label=\textup{(\roman*)}]
\item if $0<V/c<(\sqrt{17}-1)/4$, the maximin profile strictly Pareto dominates every Nash equilibrium;
\item if $V/c=(\sqrt{17}-1)/4$, the maximin profile and the unique Nash equilibrium give both players the same pay-off;
\item if $V/c>(\sqrt{17}-1)/4$, every Nash equilibrium Pareto dominates the maximin profile. The dominance is strict for both players at the symmetric equilibrium and, when $V=c$, at every non-corner equilibrium. The corner equilibria $(1,0)$ and $(0,1)$ occur for $V\geq c$; at either one, one player is indifferent and the other is strictly better off.
\end{enumerate}
\end{proposition}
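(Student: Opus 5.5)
The plan is to treat the three claims separately: the maximin set, then the Nash set, then the pay-off comparison. Throughout I write $k:=V/c$ and use two elementary facts about
\[
u_i(x_i,x_{-i})=V(x_i+x_{-i}-x_ix_{-i})-\tfrac c2x_i^2 .
\]
First, $\partial u_i/\partial x_{-i}=V(1-x_i)\ge0$, so a player's pay-off is weakly increasing in the opponent's effort. Second, $u_i$ is strictly concave in $x_i$, with $\partial u_i/\partial x_i=V(1-x_{-i})-cx_i$.

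\emph{Maximin.} By the first fact, the worst opponent action is $x_{-i}=0$ for every own $x_i$. The security problem therefore reduces to $\max_{x_i\in[0,1]}\bigl(Vx_i-\tfrac c2x_i^2\bigr)$. This objective is strictly concave with unconstrained maximiser $k$, which gives \eqref{eq:security-maximin}. Substituting $x_i=k$ or $x_i=1$ gives \eqref{eq:security-level}. Uniqueness follows from strict concavity.

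\emph{Nash set.} By strict concavity, the best response is single-valued: $\BR_i(x_{-i})=\min\{k(1-x_{-i}),1\}$, which is non-negative automatically. I then enumerate fixed points by cases on which constraints bind.
\begin{itemize}
\item If neither player is capped, then $x_1=k(1-x_2)$ and $x_2=k(1-x_1)$. Subtracting gives $(x_1-x_2)(1-k)=0$.
\begin{itemize}
\item For $k\neq1$ this forces symmetry, and then $x=k/(1+k)$. This point is feasible and uncapped, since $k(1-x)=x<1$.
\item For $k=1$ both equations collapse to $x_1+x_2=1$, giving the segment $\{(x,1-x)\}$.
\end{itemize}
\item If player~1 is capped at $x_1=1$, then $\BR_2(1)=0$. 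Consistency then requires $\BR_1(0)=\min\{k,1\}=1$, that is, $k\ge1$. This yields the corners $(1,0)$ and $(0,1)$; at $k=1$ they already lie on the segment.
\item Both players capped is impossible, because $\BR_2(1)=0$.
\end{itemize}
Together these cases give \eqref{eq:security-equilibria}.

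\emph{Comparison.} For $k<1$, the maximin profile $(k,k)$ gives each player $c\bigl(\tfrac32k^2-k^3\bigr)$. The symmetric equilibrium gives each player
\[
c\,\frac{k^2(k+\tfrac32)}{(1+k)^2}.
\]
Clearing the positive denominator, the sign of the difference (maximin minus Nash) is the sign of
\[
\bigl(\tfrac32-k\bigr)(1+k)^2-\bigl(k+\tfrac32\bigr)=k\bigl(1-\tfrac k2-k^2\bigr).
\]
The positive root of $k^2+k/2-1=0$ is $(\sqrt{17}-1)/4\approx0.78<1$. This gives (i) and (ii), and gives (iii) on the range $(\sqrt{17}-1)/4<k<1$, where the equilibrium is unique.

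For $k\ge1$, mutual maximin is $(1,1)$ with pay-off $c\bigl(k-\tfrac12\bigr)$ for each player. I then check three kinds of equilibrium.
\begin{itemize}
\item Symmetric equilibrium: I need $k^2\bigl(k+\tfrac32\bigr)>\bigl(k-\tfrac12\bigr)(1+k)^2$. Expanding, this becomes $k^3+\tfrac32k^2>k^3+\tfrac32k^2-\tfrac12$, which always holds. So the dominance is strict for both players.
\item Corners $(1,0)$: the player choosing $1$ gets exactly $V-c/2$, the same as at maximin. The other player gets $V>V-c/2$. This gives the weak (not strict) dominance stated in (iii).
\item For $k=1$, the segment point $(x,1-x)$: player~1 gets $c\bigl[1-x(1-x)-\tfrac12x^2\bigr]$. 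Subtracting $c/2$ leaves $c\,\tfrac12(1-x)^2$, which is strictly positive unless $x=1$. Symmetrically for player~2, so the dominance is strict at every non-corner point.
\end{itemize}

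The main obstacle is bookkeeping rather than ideas. The comparison switches formulas at $k=1$, where the maximin action hits the cap, so the polynomial sign argument must be run separately on $k<1$ and $k\ge1$. I also must not lose any equilibrium when $V=c$; I would double-check that case by direct substitution.
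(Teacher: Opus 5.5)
Your proposal is correct and follows the paper's proof essentially step for step. Both arguments use the same ingredients: the worst case $x_{-i}=0$, the clipped linear best response $\min\{k(1-x_{-i}),1\}$, a fixed-point case analysis, and the same sign computations. Your factor $k(1-\tfrac k2-k^2)$ matches the paper's $2-V/c-2(V/c)^2$, and your constant $\tfrac12$ matches its $-c^3/(2(V+c)^2)$. The one difference is that your explicit enumeration of which constraints bind fills in detail the paper leaves implicit when it states the Nash set.
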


\begin{proof}
Since $\frac{\partial u_i}{\partial x_{-i}}=V(1-x_i)\geq0$,
the worst opponent action is $x_{-i}=0$. The security problem is therefore $\max_{x\in[0,1]}\left\{Vx-\frac c2x^2\right\}.$
Strict concavity gives \eqref{eq:security-maximin} and \eqref{eq:security-level}.

The unique best response to an opponent action is $\BR_i(x_{-i})=\min\left\{1,\frac Vc(1-x_{-i})\right\}.$
For $V<c$, the upper bound never binds and the two best-response equations have the unique symmetric solution $V/(V+c)$. At $V=c$, they reduce to $x_1+x_2=1$. For $V>c$, the only interior solution is $(V/(V+c),V/(V+c))$, and the two additional boundary solutions are $(1,0)$ and $(0,1)$. These are the profiles in \eqref{eq:security-equilibria}.

For $0<V<c$, the maximin profile is $(V/c,V/c)$ and gives $u_i(x^M,x^M)=\frac{3V^2}{2c}-\frac{V^3}{c^2}.$
The unique equilibrium gives
\[
u_i\!\left(\frac{V}{V+c},\frac{V}{V+c}\right)=\frac{V^2\left(V+\tfrac32c\right)}{(V+c)^2}.
\]
Their difference is
\[
u_i(x^M,x^M)-u_i\!\left(\frac{V}{V+c},\frac{V}{V+c}\right)=\frac{V^3(2c^2-Vc-2V^2)}{2c^2(V+c)^2}.
\]
Its sign is the sign of $2-V/c-2(V/c)^2$, whose unique positive root is $(\sqrt{17}-1)/4$. This proves the strict-dominance, equality, and reversal statements for $V<c$.

At $V=c$, mutual maximin is $(1,1)$ and gives each player $c/2$. Every equilibrium is $(x,1-x)$, at which
\[
u_1(x,1-x)=\frac c2\left[1+(1-x)^2\right],\qquad u_2(x,1-x)=\frac c2(1+x^2).
\]
Both pay-offs are strictly above $c/2$ when $x\in(0,1)$. At either corner $x\in\{0,1\}$, one pay-off equals $c/2$ and the other equals $c$. For $V>c$, mutual maximin remains $(1,1)$ and gives $V-c/2$ to each player. At the symmetric equilibrium,
\[
u_i(1,1)-u_i\!\left(\frac{V}{V+c},\frac{V}{V+c}\right)=-\frac{c^3}{2(V+c)^2}<0.
\]
At $(1,0)$, player~1 receives $V-c/2$ and player~2 receives $V$; the profile $(0,1)$ is symmetric. This establishes the stated comparison with every equilibrium.
\end{proof}

A player's protection benefits the opponent, so the worst case for each player is that the opponent provides no protection. Maximin therefore chooses the level that would be optimal if the player had to protect the system alone. Nash players instead expect some protection from the opponent and free-ride, yielding $x^N=V/(V+c)<V/c=x^M$ when $V/c<1$.

At low values of $V/c$, Nash investment is too low because of free riding, and the extra protection under mutual maximin is worth more than its additional cost. Mutual maximin therefore strictly Pareto dominates Nash when
$V/c<(\sqrt{17}-1)/4$.
Above this threshold, the maximin action is too high when both players choose it. Nash performs better because it either moderates both players' investment or allows one player to provide all the protection.

\subsection{\texorpdfstring{$2\times 2$ impossibility}{2 x 2 impossibility}}
\label{subsec:2by2_impossibility}

Recall that by Lemma 7, no maximin profile can strictly Pareto dominate all Nash equilibria in a $2\times 2$ game. To illustrate, consider the classic games in Table~\ref{tab:2x2-benchmarks}. In the Chicken game, $(B,R)$ is the maximin profile and yields pay-offs $(2,2)$. This strictly Pareto dominates the mixed Nash equilibrium. Nevertheless, it does not strictly Pareto dominate all Nash equilibria, because $(B,L)$ and $(T,R)$ are Nash equilibria with pay-offs $(4,1)$ and $(1,4)$.

\begin{landscape}
\begin{table}[t]
\centering

\setlength{\tabcolsep}{4pt}
\renewcommand{\arraystretch}{1.15}

\begin{tabular}{@{}p{3.2cm} p{2cm} p{2.8cm} p{2.2cm} p{2cm} p{2cm}@{}}
\toprule
Game & Matrix & Nash equilibrium & NE pay-offs & Maximin pair: pay-offs & Minimax pair: pay-offs \\
\midrule

Prisoner's Dilemma &
$\begin{smallmatrix}
3,3 & 0,4\\
4,0 & 1,1
\end{smallmatrix}$ &
$(0,0)$ &
$1,1$ &
$(0,0):\,1,1$ &
$(0,0):\,1,1$ \\[0.4em]

Chicken (Hawk-Dove) &
$\begin{smallmatrix}
0,0 & 4,1\\
1,4 & 2,2
\end{smallmatrix}$ &
$(1,0);\,(0,1);\;(\tfrac{2}{3},\tfrac{2}{3})$ &
$4,1;\;1,4;\;(\tfrac{4}{3},\tfrac{4}{3})$ &
$(0,0):\,2,2$ &
$(1,1):\,0,0$ \\[0.4em]

Assurance (Stag Hunt) &
$\begin{smallmatrix}
4,4 & 1,3\\
3,1 & 2,2
\end{smallmatrix}$ &
$(1,1);\,(0,0);\;(\tfrac{1}{2},\tfrac{1}{2})$ &
$4,4;\;2,2;\;(\tfrac{5}{2},\tfrac{5}{2})$ &
$(0,0):\,2,2$ &
$(0,0):\,2,2$ \\[0.4em]

Battle of the Sexes &
$\begin{smallmatrix}
2,1 & 0,0\\
0,0 & 1,2
\end{smallmatrix}$ &
$(1,1);\,(0,0);\;(\tfrac{2}{3},\tfrac{1}{3})$ &
$2,1;\;1,2;\;(\tfrac{2}{3},\tfrac{2}{3})$ &
$(\tfrac{1}{3},\tfrac{2}{3}):\,\tfrac{2}{3},\tfrac{2}{3}$ &
$(\tfrac{2}{3},\tfrac{1}{3}):\,\tfrac{2}{3},\tfrac{2}{3}$ \\[0.4em]

Harmony &
$\begin{smallmatrix}
4,4 & 3,2\\
2,3 & 1,1
\end{smallmatrix}$ &
$(1,1)$ &
$4,4$ &
$(1,1):\,4,4$ &
$(0,0):\,1,1$ \\[0.4em]

Matching Pennies &
$\begin{smallmatrix}
1,\text{-}1 & \text{-}1,1\\
\text{-}1,1 & 1,\text{-}1
\end{smallmatrix}$ &
$(\tfrac{1}{2},\tfrac{1}{2})$ &
$0,0$ &
$(\tfrac{1}{2},\tfrac{1}{2}):\,0,0$ &
$(\tfrac{1}{2},\tfrac{1}{2}):\,0,0$ \\[0.4em]

Coordination &
$\begin{smallmatrix}
4,4 & 0,0\\
0,0 & 2,2
\end{smallmatrix}$ &
$(1,1);\,(0,0);\;(\tfrac{1}{3},\tfrac{1}{3})$ &
$4,4;\;2,2;\;(\tfrac{4}{3},\tfrac{4}{3})$ &
$(\tfrac{1}{3},\tfrac{1}{3}):\,\tfrac{4}{3},\tfrac{4}{3}$ &
$(\tfrac{1}{3},\tfrac{1}{3}):\,\tfrac{4}{3},\tfrac{4}{3}$ \\

\bottomrule
\end{tabular}

\vspace{0.3em} \textit{Notes}: A strategy profile $(p,q)$ denotes:
$1,1$ = Top/Left; $1,0$ = Top/Right; $0,1$ = Bottom/Left; $0,0$ = Bottom/Right.
\caption{Classic $2\times 2$ games}
\label{tab:2x2-benchmarks}
\end{table}
\end{landscape}

\section{\texorpdfstring{Strict Symmetric $3\times3$ Games}{Strict Symmetric 3 x 3 Games}}
\label{sec:strict-symmetric}

While it is not possible for a maximin profile to strictly Pareto dominate Nash equilibria in $2\times2$ games, we have illustrated that this is possible in $3\times3$ games. In this section, we provide a systematic analysis of the class of strictly ordinal symmetric $3\times3$ games. The aim is to obtain a sense of frequency, rather than cardinality, within a large and well-structured environment. Because pay-offs are purely ordinal, this class abstracts from cardinal utility and hence focuses only on pure strategies.

We begin with the labeled class of strict symmetric $3\times3$ games in which each player's pay-offs are a permutation of $\{1,\dots,9\}$, and player~2's pay-off matrix is the transpose of player~1's. Hence there are $9!=362{,}880$ such labeled games.

To avoid multiple counting due to relabeling of actions, we exploit the natural symmetry of the common action set. We regard two games as equivalent whenever one can be obtained from the other by a simultaneous relabeling of rows and columns that preserves symmetry. Because all entries of player 1's pay-off matrix $A$ are distinct, each equivalence class has size $3!=6$, so the $362{,}880$ labeled games reduce to $
362{,}880/6 = 60{,}480$
distinct games.

Throughout, we consider both pay-off profiles, which are vectors, and strategy guarantees associated with Nash, maximin, and minimax strategies, which are scalars. A Nash guarantee is the minimum pay-off that a Nash strategy secures; similarly, a minimax guarantee is the minimum pay-off secured by a minimax strategy.

Table~\ref{tab:ordinal-ne-counts} reports the distribution of pure Nash equilibria. Pure equilibria are common but not universal: $92.59\%$ of games admit at least one pure equilibrium, while $7.41\%$ do not. In this setting, every game admits a unique pure maximin strategy and a unique pure minimax strategy.

\begin{table}[htbp]
\centering
\small
\begin{tabular}{lcc}
\toprule
Number of pure Nash equilibria & Count & Share \\
\midrule
0 & 4,480 & 7.41\% \\
1 & 20,160 & 33.33\% \\
2 & 26,880 & 44.44\% \\
3 & 8,960 & 14.81\% \\
\bottomrule
\end{tabular}
\vspace{0.2cm}

\caption{Distribution of pure Nash equilibria in strictly ordinal symmetric $3\times3$ games.}
\label{tab:ordinal-ne-counts}
\end{table}

We now compare maximin, minimax, and Nash equilibrium in terms of realised pay-offs and guarantees, conditioning on the existence of at least one pure equilibrium. The main results are reported in Table~\ref{tab:stats}.

Two broad patterns emerge. First, as expected, maximin performs better than minimax relative to Nash equilibria. Among the $56{,}000$ games with at least one pure equilibrium, the maximin profile weakly Pareto dominates all pure equilibria in $38.58\%$ of games and strictly dominates them in $1.83\%$. By contrast, minimax weakly dominates equilibrium pay-offs in only $12.00\%$ of cases and obviously never strictly dominates them.

The comparison is particularly transparent in the subset of one-third of games with a unique pure Nash equilibrium, where there are no coordination issues. In this subset, the maximin profile coincides with the Nash equilibrium in $72.72\%$ of games and strictly dominates it in $3.86\%$. Conversely, the unique equilibrium strictly Pareto dominates the maximin profile in $23.42\%$ of cases. However, the maximin strategy guarantees a strictly higher pay-off than the Nash equilibrium in $27.28\%$ of the games. Overall, neither rule dominates the other across the board.

\begin{table}[htbp]
\centering
\small
\begin{tabular}{@{}lrrrr@{}}
\toprule
Statistic (\%) & 1 NE & 2 NE & 3 NE & has NE \\
\midrule
Maximin weakly dominates all NEs &
76.58 & 18.47 & 13.39 & 38.58 \\

Maximin strictly dominates all NEs &
3.86 & 0.92 & 0 & 1.83 \\

Minimax weakly dominates all NEs &
33.33 & 0 & 0 & 12 \\

Minimax strictly dominates all NEs &
0 & 0 & 0 & 0 \\

Some NE strictly dominates maximin profile &
23.42 & 70.22 & 75.69 & 54.25 \\

Maximin guarantee weakly greater than all NE guarantees &
100 & 100 & 100 & 100 \\

Maximin guarantee strictly greater than all NE guarantees &
27.28 & 13.84 & 0 & 16.46 \\

Minimax guarantee weakly greater than all NE guarantees &
46.07 & 37.62 & 35.94 & 40.39 \\

Minimax guarantee strictly greater than all NE guarantees &
12.74 & 4.29 & 0 & 6.64 \\
\bottomrule
\end{tabular}
\vspace{0.2cm}

\caption{Pareto dominance and guarantee comparisons in strictly ordinal symmetric $3\times3$ games.}
\label{tab:stats}
\end{table}

Recall that Figure~3 illustrates a game in which the maximin profile strictly Pareto dominates the Nash equilibrium while also guaranteeing a higher pay-off. More specifically, the unique pure Nash equilibrium yields $(5,5)$, while the maximin profile yields $(6,6)$. Moreover, the maximin strategy guarantees a pay-off of $4$, whereas the Nash strategy guarantees only $3$. In this example, the minimax profile coincides with the Nash equilibrium, and the corresponding Nash/minimax strategy caps the opponent's pay-off at $5$.

\end{document}